\tikzstyle{every picture}=[
\begin{document}

\title{On the Hierarchy of Block Deterministic Languages}
  
\author{Pascal Caron \and Ludovic Mignot \and Clément Miklarz} 

\institute{
  LITIS, Université de Rouen, 76801 Saint-Étienne du Rouvray Cedex, France\\
  \email{\{pascal.caron,ludovic.mignot,clement.miklarz1\}@univ-rouen.fr}
}
  
\maketitle

  \begin{abstract}
    A regular language is $k$-lookahead deterministic (resp. $k$-block deterministic) if it is specified by a $k$-lookahead deterministic (resp. $k$-block deterministic) regular expression.
	These two subclasses of regular languages have been respectively introduced by Han and Wood ($k$-lookahead determinism) and by Giammarresi \emph{et al.} ($k$-block determinism) as a possible extension of one-unambiguous languages defined and characterized by Brüggemann-Klein and Wood.

	In this paper, we study the hierarchy and the inclusion links of these families.
	We first show that each $k$-block deterministic language is the alphabetic image of some one-unambiguous language. 
	Moreover, we show that the conversion from a minimal DFA of a $k$-block deterministic regular language to a $k$-block deterministic automaton not only requires state elimination, and that the proof given by Han and Wood of a proper hierarchy in $k$-block deterministic languages based on this result is erroneous.
	Despite these results, we show by giving a parameterized family that there is a proper hierarchy in $k$-block deterministic regular languages. 
	We also prove that there is a proper hierarchy in $k$-lookahead deterministic regular languages by studying particular properties of unary regular expressions.
	Finally, using our valid results, we confirm that the family of $k$-block deterministic regular languages is strictly included into the one of $k$-lookahead deterministic regular languages by showing that any $k$-block deterministic unary language is one-unambiguous.
  \end{abstract} 

\section{Introduction}\label{se:int}

	A Document Type Definition (DTD) containing a grammar is used to know whether an XML file fits some specification. These grammars are made of rules whose right-hand part is a restricted regular expression.
	Brüggemann-Klein and Wood have formalized these regular expressions and have shown that the set of languages specified is strictly included in the set of regular ones.
	The distinctive aspect of such expressions is the one-to-one correspondence between each letter of the input word and a unique position in them.
	The resulting Glushkov automaton is deterministic.
	The languages specified are called one-unambiguous regular languages.

	Several extensions of one-unambiguous expressions have been considered:
\begin{itemize}
	\item $k$-block deterministic regular expressions~\cite{GMW01} are such that while reading an input word, there is a one-to-one correspondence between the next at most $k$ input symbols and the same number of symbols of the expression.
These expressions have particular Glushkov automata.
The transitions of these automata can be labeled by words of length at most $k$ and for every couple of words labeling two output transitions of a single state, these words are not prefix from each other. 
	\item $k$-lookahead deterministic regular expressions form another generalization.
This time, the reading of the next $k$ symbols of the input word allows one to know the next position in the expression. 
This extension has been proposed in~\cite{HW08}. 
  \item $(k,l)$-unambiguous regular expressions~\cite{CFM14} is another extension of one-unam\-bi\-guity, where the next $k$ symbols may induce several paths, but with at most one common state.
\end{itemize}

	These three families of expressions fit together as families of languages in the way that a language is $k$-block deterministic (resp. $k$-lookahead deterministic, $(k,l)$-unambiguous) if there exists a $k$-block deterministic (resp. $k$-lookahead deterministic, $(k,l)$-unambiguous) expression to represent it. 

	In~\cite{HW08}, Han and Wood show that there is a proper hierarchy in block deterministic languages and there is a strict inclusion of the family of $k$-block deterministic languages into the one of $k$-lookahead deterministic languages.
	However, they based their proofs on an erroneous statement due to Giammaresi \emph{et al.}~\cite{GMW01}, invalidating them.
	In this paper, we first show that there is indeed a proper hierarchy in block deterministic languages by giving our own parameterized family.
	Then, we show that there is also a proper hierarchy in $k$-lookahead deterministic languages by studying the structural properties of unary Glushkov automata.
	Finally, using our valid results, we demonstrate that the family of $k$-block deterministic languages is strictly included into the one of $k$-lookahead deterministic languages by showing that any $k$-block deterministic unary language is also one-unambiguous.

	Preliminaries are gathered in Section~\ref{se:pre}.
	In Section~\ref{se:PrevRes}, we recall several results from~\cite{GMW01,HW08} on which we question their truthfulness.
	Indeed, we show in Section~\ref{se:kbd_witness} that, due to an erroneous statement of Lemma~\ref{le:GMWH}, the witness family given as a proof of Theorem~\ref{th:HW} is invalid; and present an alternative family, proving the infinite hierarchy of $k$-block deterministic regular languages w.r.t. $k$.
	In Section~\ref{se:kld_witness}, we give another witness family to prove that there is also an infinite hierarchy in $k$-lookahead deterministic regular languages w.r.t. $k$.
	Then, in Section~\ref{se:inclusion_kbd_kld}, we give our own proof that $k$-block deterministic regular languages are a proper subfamily of $k$-lookahead deterministic regular languages w.r.t. $k$.

\section{Preliminaries}\label{se:pre}

\subsection{Languages and Automata Basics}
	Let $\Sigma$ be a non-empty finite \emph{alphabet}.
A \emph{word $w$ over $\Sigma$} is a finite sequence of symbols from $\Sigma$.
The \emph{length} of a word $w$ is denoted by $|w|$, and the \emph{empty word} is denoted by $\varepsilon$.
Let $p, f, s, w \in \Sigma^*$ be words such that $w = pfs$, then $p$ is a \emph{prefix} of $w$ and $f$ is a \emph{subword} of $w$.
The set of all prefixes (respectively subwords) of $w$ is denoted by $\mathrm{Pref}(w)$ (respectively $\mathrm{Subw}(w)$).

	Let $\Sigma^*$ denote the set of all words over $\Sigma$.
A \emph{language over $\Sigma$} is a subset of $\Sigma^*$.
Let $L$ and $L^{\prime}$ be two languages over $\Sigma$. The following operations are defined:
\begin{itemize}
	\item \emph{the union}: $L \cup L^{\prime} = \{w \mid w \in L \vee w \in L^{\prime}\}$
	\item \emph{the concatenation}: $L \cdot L^{\prime} = \{w \cdot w^{\prime} \mid w \in L \wedge w^{\prime} \in L^{\prime}\}$
	\item \emph{the Kleene star}: $L^* = \bigcup_{k \in \mathbb{N}} L^k$ with $L^0 = \{\varepsilon\}$ and $L^{k+1} = L \cdot L^k$
\end{itemize}

	A \emph{regular expression over $\Sigma$} is built from $\emptyset$ (the empty set), $\varepsilon$, and symbols in $\Sigma$ using the binary operators $+$ and $\cdot$, and the unary operator $^*$.
	The \emph{language} $\mathrm{L}(E)$ \emph{specified by a regular expression $E$} is defined as follows:
\begin{align*}
	\mathrm{L}(\emptyset) &= \emptyset, & \mathrm{L}(\varepsilon) &= \{\varepsilon\}, & \mathrm{L}(a) &= \{a\},\\
	\mathrm{L}(F + G) &= \mathrm{L}(F) \cup \mathrm{L}(G),\  & \mathrm{L}(F \cdot G) &= \mathrm{L}(F) \cdot \mathrm{L}(G),\  & \mathrm{L}(F^*) &= \mathrm{L}(F)^*,
\end{align*}
with $a \in \Sigma$, and $F$, $G$ some regular expressions over $\Sigma$.
	Given a language $L$, if there exists a regular expression $E$ such that $\mathrm{L}(E) = L$, then $L$ is a \emph{regular language}.
	A regular expression is trimmed if it is equal to $\emptyset$ or does not contain any occurrence of $\emptyset$.
	We consider only trimmed regular expressions in the rest of this paper.

	A \emph{finite automaton} $A$ is a 5-tuple $(\Sigma, Q, I, F, \delta)$ where:
$Q$ is a finite set of states, $I \subset Q$ is the set of initial states, $F \subset Q$ is the set of final states, and $\delta \subset Q \times \Sigma \times Q$ is a set of transitions.
The set $\delta$ is equivalent to a function of $Q \times \Sigma \rightarrow 2^Q$ : $(p, a, q) \in \delta \Longleftrightarrow q \in \delta (p, a)$.
This function can be extended to $2^Q \times \Sigma^* \rightarrow 2^Q$ as follows: for any subset $Q^{\prime} \subset Q$, for any symbol $a \in \Sigma$, for any word $w \in \Sigma^*$: $\delta(Q^{\prime}, \varepsilon) = Q^{\prime}$, $\delta(Q^{\prime}, a) = \bigcup_{q \in Q^{\prime}} \delta (q, a)$, $\delta(Q^{\prime}, a \cdot w) = \delta(\delta(Q^{\prime}, a), w)$; finally, we set $\delta(q, w)=\delta(\{q\}, w)$.

	A set $O \subset Q$ is called an \emph{orbit} if it is a strongly connected component.
An orbit is \emph{trivial} if it consists of only one state and there is no transition from it to itself in $A$.
The set of orbits of $A$ is denoted by $\mathcal{O}_A$.
	Let $O \in \mathcal{O}_A$ be an orbit and $p \in O$ be a state. The state $p$ is an \emph{out-gate of $O$} (respectively an \emph{in-gate of $O$}) if $(p \in F) \vee (\exists a \in \Sigma, \exists q \in (Q \setminus O), q \in \delta(p, a))$ (respectively if $(p \in I) \vee (\exists a \in \Sigma, \exists q \in (Q \setminus O), p \in \delta(q, a))$).
	The set of out-gates (respectively in-gates) of $O$ is denoted by $\mathrm{G_{out}}(O)$ (respectively $\mathrm{G_{in}}(O)$).

	The \emph{language} $\mathrm{L}(A)$ \emph{recognized by $A$} is the set $\{w \in \Sigma^* \mid \delta(I, w) \cap F \neq \emptyset\}$.
Two automata are \emph{equivalent} if they recognize the same language.
The \emph{right language of a state $q$ of $A$} is denoted by $\mathrm{L}_{q}(A) = \{w \in \Sigma^* \mid \delta(q, w) \cap F \neq \emptyset\}$.
Two states are \emph{equivalent} if they have the same right language.

	An automaton $A = (\Sigma, Q, I, F, \delta)$ is trimmed if $\forall q \in Q, \exists w_p, w_s \in \Sigma^*, q \in \delta(I, w_p) \wedge \delta(q, w_s) \cap F \neq \emptyset$.
	If an automaton is not trimmed, it is possible to compute an equivalent trimmed automaton by getting rid of any useless state.
	We consider only trimmed automata in the rest of this paper.

	An automaton $A = (\Sigma, Q, I, F, \delta)$ is \emph{standard} if $|I| = 1$ and $\forall q \in Q, \forall a \in \Sigma, \delta(q, a) \cap I = \emptyset$.
If $A$ is not a standard automaton, then it is possible to compute an equivalent standard automaton $(\Sigma, Q_s, I_s, F_s, \delta_s)$ as follows:
\begin{itemize}
	\item $Q_s = Q \cup \{i_s\}$ with $i_s \notin Q$
	\item $I_s = \{i_s\}$
	\item $F_s = F \cup \{i_s\}$ if $I \cap F \neq \emptyset$, $F$ otherwise
	\item $\delta_s = \delta \cup \{(i_s, a, q) \mid \exists i \in I, (i, a, q) \in \delta\}$
\end{itemize}
This operation is called \emph{standardization}.

	An automaton $A = (\Sigma, Q, I, F, \delta)$ is \emph{deterministic} if $|I| = 1$ and $\forall t_1 = (p, a, q_1)$, $t_2 = (p, b, q_2) \in \delta, (t_1 \neq t_2) \Longrightarrow (a \neq b)$.
	If $A$ is not deterministic, then it is possible to compute an equivalent deterministic automaton by using the powerset construction described in~\cite{RS59}.
	
	A deterministic automaton $A = (\Sigma, Q_A, \{i_A\}, F_A, \delta_A)$ is \emph{minimal} if there is no equivalent deterministic automaton $B = (\Sigma, Q_B, \{i_B\}, F_B, \delta_B)$ such that $|Q_B| < |Q_A|$.
	If $A$ is not minimal, then it is possible to compute an equivalent minimal deterministic automaton by merging equivalent states~\cite{Hop71,Moo56}.
	Notice that two equivalent minimal deterministic automata are isomorphic.

    Kleene's Theorem~\cite{Kle56} asserts that the set of the languages specified by regular expressions is the same as the set of languages recognized by finite automata.
The conversion of regular expressions into automata has been deeply studied, \emph{e.g.} by Glushkov~\cite{Glu61}.
To differentiate each occurence of the same symbol in a regular expression, 
a \emph{marking} of all the symbols of the alphabet is performed by indexing them with their relative position in the expression. 
The marking of a regular expression $E$ produces a \emph{marked regular expression} denoted by $E^{\sharp}$ over the alphabet of indexed symbols denoted by $\Pi_E$ where each indexed symbol occurs at most once in $E^{\sharp}$.
	The reverse of marking is the \emph{dropping} of subscripts, denoted by $^{\natural}$, such that if $x \in \Pi_E$ and $x = a_k$, then $x^{\natural} = a$.
	It is then extended to marked regular expressions such that $(E^{\sharp})^{\natural} = E$.
	
	Let $E$ be a regular expression over an alphabet $\Sigma$. The following functions are defined:
\begin{itemize}
	\item $\mathrm{Null}(E) = \{\varepsilon\}$ if $\varepsilon \in \mathrm{L}(E)$, $\emptyset$ otherwise
	\item $\mathrm{First}(E) = \{x \in \Sigma \mid \exists w \in \Sigma^*,\ xw \in \mathrm{L}(E)\}$
	\item $\mathrm{Last}(E) = \{x \in \Sigma \mid \exists w \in \Sigma^*,\ wx \in \mathrm{L}(E)\}$
	\item $\mathrm{Follow}(E, x) = \{y \in \Sigma \mid \exists u,v \in \Sigma^*,\ uxyv \in \mathrm{L}(E)\}$, $\forall x \in \Sigma$
\end{itemize}

	From these functions, an automaton recognizing $\mathrm{L}(E)$ can be computed:

\begin{definition}\label{def:Glushkov}
	The \emph{Glushkov automaton of a regular expression $E$ over an alphabet $\Sigma$} is denoted by $G_E = (\Sigma, Q_E, I_E, F_E, \delta_E)$ with:
	\begin{itemize}
		\item $Q_E = \Pi_E \cup \{i\}$
		\item $I_E = \{i\}$
		\item $F_E = \mathrm{Last}(E^{\sharp}) \cup \{i\}$ if $\mathrm{Null}(E^{\sharp}) = \{\varepsilon\}$, $\mathrm{Last}(E^{\sharp})$ otherwise
		\item  
		  \begin{tabular}[t]{l@{\ }l}
		  $\delta_E=$ & $\{(x, a, y) \in \Pi_E \times \Sigma \times \Pi_E \mid y \in \mathrm{Follow}(E^{\sharp}, x) \wedge a = y^{\natural}\}$\\
		  & $  \cup \{(i, a, y) \in \{i\} \times \Sigma \times \Pi_E \mid y \in \mathrm{First}(E^{\sharp}) \wedge a = y^{\natural}\}$
		  \end{tabular}
	\end{itemize}
\end{definition}
Finally, an automaton is a \emph{Glushkov automaton} if it is the Glushkov automaton of a regular expression $E$.

\begin{example}
	Let $E = (a + b)^*a + \varepsilon$. Then $E^{\sharp} = (a_1 + b_2)^*a_3 + \varepsilon$ with $\Pi_E = \{a_1, b_2, a_3\}$, and $G_E$ is given in Figure~\ref{ExempleGlushkov}.
\end{example}

\begin{figure}[H]
	\centering
	
	\begin{tikzpicture}
		\node[state, accepting, initial] (i) {$i$};
	    \node[state, accepting, right of=i] (3) {$a_3$};
    	\node[state, above right of=3] (1) {$a_1$};
		\node[state, below right of=3] (2) {$b_2$};
	    \path[->]
    		(i) edge [bend left=45] node {$a$} (1)
    		(i) edge node {$a$} (3)
   			(i) edge [swap, bend right=45] node {$b$} (2)
   		    (1) edge [loop above] node {$a$} ()
        	(1) edge [bend left=10] node {$b$} (2)
			(1) edge [swap] node {$a$} (3)
   		    (2) edge [loop below] node {$b$} ()
        	(2) edge [bend left=10] node {$a$} (1)
   		    (2) edge node {$a$} (3)
		;
	\end{tikzpicture}

  \caption{The Glushkov automaton $G_E$ of $E = (a + b)^*a + \varepsilon$}
  \label{ExempleGlushkov}
\end{figure}
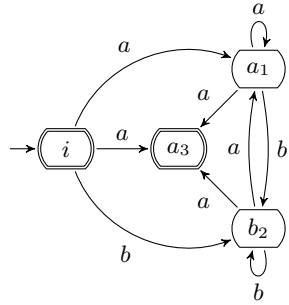

\subsection{One-Unambiguous Regular Languages}
	We present the notion of one-unambiguity introduced in~\cite{BW98}.

\begin{definition}
	A regular expression E is \emph{one-unambiguous} if $G_E$ is deterministic.
	A regular language is \emph{one-unambiguous} if it is specified by some one-unambiguous regular expression.
\end{definition}

	Brüggemann-Klein and Wood showed that the one-unambiguity of a regular language is structurally decidable over its minimal DFA. This decision procedure is related to the orbits of the underlying graph and to their links with the remaining parts:
	An automaton has the \emph{orbit property} if all the out-gates of each orbit have identical connections to the outside.
	More formally:
	
\begin{definition}
	An automaton $A=(\Sigma,Q,I,F,\delta)$ has the orbit property if, for any orbit $O$ in $\mathcal{O}_A$, for any two states $(p,q)$ in $\mathrm{G_{out}}(O)$, the two following conditions are satisfied:
	\begin{itemize}
		\item $p \in F \Longrightarrow q \in F$,
		\item $\forall r \in (Q \setminus O), \forall a \in \Sigma, r \in \delta(p,a) \Longrightarrow r \in \delta(q,a)$.
	\end{itemize}
\end{definition}
	Let $q \in Q$ be a state. 
	The \emph{orbit of a state $q$}, denoted by $\mathrm{O}(q)$ is the orbit to which $q$ belongs.
	The \emph{orbit automaton} $A_q$ \emph{of the state $q$ in $A$} is the automaton obtained by restricting the states and the transitions of $A$ to $\mathrm{O}(q)$ with initial state $q$ and final states $\mathrm{G_{out}}(\mathrm{O}(q))$. For any state $q \in Q$, the languages $\mathrm{L}(A_q)$ are called the \emph{orbit languages of $A$}.
	A symbol $a \in \Sigma$ is \emph{$A$-consistent} if there exists a state $q_a \in Q$ such that all final states of $A$ have a transition labelled by $a$ to $q_a$. 
	A set $S$ of symbols is $A$-consistent if each symbol in $S$ is $A$-consistent.
The \emph{$S$-cut} $A_S$ of $A$ is constructed from $A$ by removing, for each $a \in S$, all transitions labelled by $a$ that leave a final state of $A$.
	All these notions can be used to characterize one-unambiguous regular languages:

\begin{theorem}[\cite{BW98}]\label{th:1NA}
	Let $M$ be a minimal deterministic automaton and $S$ be a $M$-consistent set of symbols. 
	Then, $\mathrm{L}(M)$ is one-unambiguous if and only if:
	\begin{enumerate}
		\item the $S$-cut $M_S$ of $M$ has the orbit property
		\item all orbit languages of $M_S$ are one-unambiguous.
	\end{enumerate}
	Furthermore, if $M$ consists of a single non-trivial orbit and $\mathrm{L}(M)$ is one-unambi\-guous, then $M$ has at least one $M$-consistent symbol.
\end{theorem}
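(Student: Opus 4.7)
The plan is to proceed by induction on the number of states of $M$, exploiting the tight structural correspondence between the Glushkov automaton of a one-unambiguous expression and the orbit decomposition of its minimal DFA. The base case is a one-state automaton, whose language is one of $\emptyset$, $\{\varepsilon\}$, $\{a\}$, or $\{a\}^*$: each is one-unambiguous by inspection, and the stated conditions hold vacuously or trivially.

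For the forward direction, I would fix a one-unambiguous expression $E$ with $\mathrm{L}(E) = \mathrm{L}(M)$ and use the natural surjective morphism $\varphi : G_E \to M$ obtained by collapsing equivalent states of $G_E$ (which is deterministic by one-unambiguity). The candidate consistent set $S$ is built from $\{x^\natural \mid x \in \mathrm{First}(F^\sharp)\}$ for each outermost iterated subexpression $F^*$ in $E$: determinism of $G_E$ forces each such letter, read from any position of $\mathrm{Last}(F^\sharp)$, to land on the unique position of $\mathrm{First}(F^\sharp)$ carrying it, and this uniqueness is preserved by $\varphi$, proving $M$-consistency. Removing the re-entry transitions, i.e.\ taking the $S$-cut, decomposes $G_E$ into Glushkov components of proper subexpressions; projecting by $\varphi$ yields the orbits of $M_S$, and the orbit property follows from the uniformity of $\mathrm{Follow}(E^\sharp, x)$ outside the enclosing subexpression as $x$ ranges over its last positions. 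Condition~(2) then falls out of the induction hypothesis applied to each subexpression.

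For the backward direction, I would assume conditions (1) and (2) and build a one-unambiguous expression inductively. The induction hypothesis applied to each orbit automaton $(M_S)_q$ where $q$ is an in-gate of an orbit $O$ supplies, by condition~(2), a one-unambiguous expression $E_{O,q}$; the orbit property guarantees that these blocks behave uniformly at the exits, so one can coherently speak of a single block expression for $O$. A global expression for $\mathrm{L}(M)$ is then obtained by traversing the DAG of orbits of $M_S$, concatenating these blocks along inter-orbit transitions and re-introducing each removed $S$-transition by wrapping the appropriate block in a Kleene star whose continuation begins with the consistent symbol. The main obstacle—and the delicate point of the proof—is showing that this gluing preserves one-unambiguity, i.e.\ that at every boundary the First-sets of the various possible continuations and loop-backs remain pairwise disjoint; this is exactly where the conjunction of the orbit property on $M_S$ and the $M$-consistency of $S$ is used, ensuring that the resulting Glushkov automaton is again deterministic.

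For the furthermore claim, suppose $M$ consists of a single non-trivial orbit and $\mathrm{L}(M)$ is one-unambiguous via some expression $E$. Non-triviality makes $\mathrm{L}(M)$ infinite, so $E$ must contain at least one Kleene-starred subexpression; choosing an outermost such $F^*$ whose positions are mapped by $\varphi$ onto states reached from the final states of $M$, the same argument used in the forward direction shows that the symbols $\{x^\natural \mid x \in \mathrm{First}(F^\sharp)\}$ are $M$-consistent, giving at least one $M$-consistent symbol.
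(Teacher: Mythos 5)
The paper offers no proof of this statement: it is quoted directly from Brüggemann-Klein and Wood \cite{BW98}, so there is no in-paper argument to compare yours against. Judged on its own terms, your outline does follow the broad strategy of the original proof (structural induction, the collapsing morphism $\varphi$ from the deterministic Glushkov automaton onto $M$, and the correspondence between starred subexpressions and orbits), but two steps that carry the real weight of the theorem are missing rather than merely compressed.

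First, the theorem is stated for an \emph{arbitrary} $M$-consistent set $S$ given in the hypothesis: the forward direction must establish conditions (1) and (2) for whatever consistent $S$ one is handed, not for a particular $S$ that you manufacture from the First-sets of the outermost starred subexpressions of $E$. Your argument shows that your chosen set is consistent and that its cut behaves well; it says nothing about the $S$-cut for a different consistent set (a strict subset of yours, or a consistent symbol that does not arise as a First-symbol of an outermost star). Closing this gap needs a separate lemma to the effect that the orbit property and the one-unambiguity of orbit languages do not depend on which consistent set is used for the cut, and that lemma is a nontrivial piece of the original development, not a formality. Second, in the backward direction you name the decisive step yourself --- that assembling the orbit expressions along the DAG of orbits and re-inserting the removed $S$-transitions under a Kleene star yields an expression whose Glushkov automaton is still deterministic --- call it ``the delicate point,'' and then assert that the orbit property and consistency ``ensure'' it. That assertion \emph{is} the converse direction; without an explicit expression schema and a verification that the First/Follow sets at every junction (orbit exit, loop-back on a consistent symbol, inter-orbit transition, acceptance) remain pairwise non-conflicting, nothing has been proved. (A minor slip: your base case is mis-enumerated, since a one-state trimmed minimal DFA recognizes $T^{*}$ for some $T\subseteq\Sigma$ or a trivial language, never $\{a\}$; this does not affect the rest.)
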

	This theorem suggests an inductive algorithm to decide, given a minimal deterministic automaton $M$ whether $\mathrm{L}(M)$ is one-unambiguous: 
	the \emph{BKW test}.
Furthermore, the theorem defines a sufficient condition over non-minimal deterministic automaton:

\begin{lemma}[\cite{BW98}]\label{lm:1NA}
	Let $A$ be a deterministic automaton and $M$ be its equivalent minimal deterministic automaton.
	\begin{enumerate}
		\item If $A$ has the orbit property, then so does $M$
		\item If all orbit languages of $A$ are one-unambiguous, then so are all orbit languages of $M$.
	\end{enumerate}
\end{lemma}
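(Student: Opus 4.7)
The plan is to exploit the canonical surjective morphism $\phi: Q_A \to Q_M$ sending each state of $A$ to its right-language equivalence class. Since $A$ is deterministic and equivalent states share the same right language, $\phi$ preserves the initial state, the set of final states, and transitions, and every $M$-path lifts uniquely to an $A$-path starting from any chosen preimage. For each conclusion on $M$, the strategy is to lift the relevant $M$-configuration into $A$, invoke the hypothesis there, and push the outcome back through $\phi$.

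The central technical observation I would establish first is that every preimage of an out-gate of an orbit $O_M$ in $M$ is itself an out-gate of its own $A$-orbit. Finality of $p_M \in \mathrm{G_{out}}(O_M)$ propagates to any preimage; and if $p_M$ is an out-gate via an external transition $(p_M, a, r_M)$ with $r_M \notin O_M$, then the unique lift $(p, a, r) \in \delta_A$ must satisfy $r \notin O(p)$, because otherwise $\phi(O(p))$ would contain both $p_M \in O_M$ and $r_M \notin O_M$, contradicting the fact that the $\phi$-image of a strongly connected set lies in a single $M$-orbit.

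For part (1), given two out-gates $p_M, q_M$ of $O_M$, I would iteratively lift the round-trip $p_M \to q_M \to p_M$ in $M$ into $A$, starting from an arbitrary preimage of $p_M$; each iteration yields new preimages of $p_M$ and $q_M$, and by finiteness of $Q_A$ the sequence of $p_M$-preimages must revisit a state, producing preimages $\tilde{p}, \tilde{q}$ lying in a common $A$-orbit. Both are out-gates of that $A$-orbit by the observation above, so applying the orbit property of $A$ to the pair $(\tilde{p}, \tilde{q})$ and projecting the conclusions back through $\phi$ delivers the finality and external-transition conditions for $(p_M, q_M)$.

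For part (2), the same cyclic-lifting mechanism yields, for each $q_M \in Q_M$, a preimage $\tilde{q}$ whose $A$-orbit $\phi$-surjects onto $O(q_M)$. I would then establish $\mathrm{L}(M_{q_M}) = \mathrm{L}(A_{\tilde{q}})$ by lifting accepting $M$-runs via $\phi$ (for $\subseteq$) and projecting accepting $A$-runs (for $\supseteq$), matching endpoints through the fact that, under surjectivity of the $\phi$-restriction, $\phi$ carries $A$-out-gates of $O(\tilde{q})$ to $M$-out-gates of $O(q_M)$. One-unambiguity of $\mathrm{L}(A_{\tilde{q}})$ then transfers directly to $\mathrm{L}(M_{q_M})$. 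The main obstacle is this language equality: several $A$-orbits may project into a single $M$-orbit and exhibit different external behavior, so one has to use the cyclic-lifting argument to choose $\tilde{q}$ precisely so that the orbit boundaries commute with $\phi$ on both sides.
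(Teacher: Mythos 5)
The paper does not actually prove this lemma: it is imported from~\cite{BW98} by citation only, so your attempt can only be judged on its own merits. Part~(1) of your argument is correct. The canonical map $\phi$ sends any strongly connected set of $A$ into a single orbit of $M$, so your key observation (a preimage of an out-gate of $O_M$ is an out-gate of its own $A$-orbit) holds; the pigeonhole lifting of the round trip $p_M\to q_M\to p_M$ does produce preimages $\tilde p,\tilde q$ in one $A$-orbit; and both orbit-property conditions push through $\phi$ because finality transfers in both directions along $\phi$ and the lifted external target $r$ of $(p_M,x,r_M)$ with $r_M\notin O_M$ necessarily lies outside $\mathrm{O}(\tilde p)$.

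Part~(2), however, has a genuine gap: choosing $\tilde q$ so that $\mathrm{O}(\tilde q)$ merely \emph{$\phi$-surjects} onto $\mathrm{O}(q_M)$ does not guarantee $\mathrm{L}(A_{\tilde q})=\mathrm{L}(M_{q_M})$, and your cyclic lifting can land on a bad preimage. Take $\Sigma=\{a,b,c\}$ and let $A$ have states $1',2',3',1'',2'',3'',f$, initial state $1'$, final state $f$, and transitions $(1',a,2')$, $(2',a,3')$, $(3',a,1')$, $(1',b,3'')$, $(3',c,f)$, $(1'',a,2'')$, $(2'',a,3'')$, $(3'',a,1'')$, $(1'',b,3'')$, $(3'',c,f)$. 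One checks $\mathrm{L}_{x'}(A)=\mathrm{L}_{x''}(A)$ for $x\in\{1,2,3\}$, so $M$ has states $1,2,3,f$ with orbit $O_M=\{1,2,3\}$ whose unique out-gate is $3$, and $\mathrm{L}(M_1)=(ba+aaa)^*(b+aa)$. Both $A$-orbits $\{1',2',3'\}$ and $\{1'',2'',3''\}$ surject onto $O_M$, and lifting the covering cycle $1\to 2\to 3\to 1$ from $1'$ returns $\tilde q=1'$; yet $1'$ is an out-gate of its orbit (via $(1',b,3'')$, which exits the orbit while staying inside $\phi^{-1}(O_M)$) although $1\notin\mathrm{G_{out}}(O_M)$, and the $b$-transition internal to $O_M$ does not lift internally to $\{1',2',3'\}$, so $\mathrm{L}(A_{1'})=(aaa)^*(\varepsilon+aa)\neq\mathrm{L}(M_1)$. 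The missing idea is a finer selection: among the $A$-orbits contained in $\phi^{-1}(O_M)$, pick one with no transition into a \emph{different} $A$-orbit still inside $\phi^{-1}(O_M)$ (such a ``sink'' orbit exists because the condensation is acyclic). For that orbit every $O_M$-internal transition lifts internally, it surjects onto $O_M$, its out-gates are exactly the preimages of $\mathrm{G_{out}}(O_M)$, and only then does the language equality --- hence the transfer of one-unambiguity --- go through. In the example this selects $1''$, for which $\mathrm{L}(A_{1''})=\mathrm{L}(M_1)$ indeed holds.
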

	Consequently, 
	the BKW test is extended to deterministic automata which are not minimal.
	Reinterpreting the results in~\cite{BW98}, it can be shown that
	
\begin{lemma}\label{lm:GlushkovBKW}
	  The Glushkov automaton of a one-unambiguous regular expression passes the BKW test.
\end{lemma}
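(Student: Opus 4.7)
The plan is to prove this by structural induction on the one-unambiguous regular expression $E$, relying on the fact that $G_E$ is deterministic by the definition of one-unambiguity. The base cases ($E = \emptyset$, $E = \varepsilon$, $E = a$) produce Glushkov automata containing at most two states and only trivial orbits, so the BKW test passes vacuously with $S = \emptyset$.

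For the inductive step, I would exploit the well-known fact that the Glushkov construction glues $G_F$ and $G_G$ in a controlled way. For $E = F + G$, the orbits of $G_E$ (apart from the trivial orbit $\{i\}$) are precisely the orbits of $G_F$ and $G_G$, so the $S$-cut and the orbit languages are inherited from the recursive calls. For $E = F \cdot G$, the new transitions only link $\mathrm{Last}(F^{\sharp})$ to $\mathrm{First}(G^{\sharp})$, hence no new non-trivial orbit is formed; each orbit still lies entirely inside $G_F$ or $G_G$, and one checks that the out-gate conditions are preserved because the added outgoing transitions from states of $\mathrm{Last}(F^{\sharp})$ are identical (they all point to the same image of $\mathrm{First}(G^{\sharp})$ in $G_E$, using determinism of $G_E$).

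The main obstacle will be the Kleene star case $E = F^*$, where the back-edges from $\mathrm{Last}(F^{\sharp})$ to $\mathrm{First}(F^{\sharp})$ may collapse the whole of $\Pi_E$ into a single large non-trivial orbit. Here the key step is to choose a $G_E$-consistent set $S$ that disconnects those back-edges: the natural choice is the set of root letters, namely the $^{\natural}$-images of $\mathrm{First}(F^{\sharp})$. One then argues that the $S$-cut of $G_E$ has the same orbit structure as $G_F$, so the orbit property and the one-unambiguity of the orbit languages reduce to the inductive hypothesis applied to $F$, which is itself one-unambiguous (its Glushkov automaton being a sub-automaton of the deterministic $G_E$). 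Consistency of $S$ follows from the determinism of $G_E$, which forces all states of $\mathrm{Last}(E^{\sharp})$ to agree on the transitions labelled by letters of $\mathrm{First}(E^{\sharp})$.

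Combining the three cases closes the induction. An alternative, shorter route would be to re-read the proof of Theorem~\ref{th:1NA} in~\cite{BW98}: their argument for the minimal DFA is in fact obtained by first establishing the BKW test on the Glushkov automaton of a one-unambiguous expression and then pushing it down to the minimal DFA via Lemma~\ref{lm:1NA}, so our lemma is a direct by-product of their construction.
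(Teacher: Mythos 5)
The first thing to note is that the paper does not actually prove this lemma: it is introduced with ``Reinterpreting the results in~\cite{BW98}, it can be shown that\dots'', i.e.\ it is delegated to Brüggemann-Klein and Wood, whose proof of the ``only if'' direction of Theorem~\ref{th:1NA} indeed goes through establishing the test on the Glushkov automaton of a deterministic expression before transferring it to the minimal DFA via (the analogue of) Lemma~\ref{lm:1NA}. So your ``alternative, shorter route'' is exactly the route the paper takes, and it is a legitimate way to discharge the lemma. Your primary route --- a structural induction on $E$ --- is genuinely different, and if it worked it would be a nice self-contained argument; unfortunately, as sketched it has a real gap.

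The gap is in the bookkeeping of the consistent set and its cut across the induction. ``Passing the BKW test'' is a recursive property parameterized by a choice of an $A$-consistent set $S$ and by the finality of states, and both of these change when you pass from $G_F$ (and $G_G$) to $G_E$; you never reconcile the cut you perform on $G_E$ with the cut that the inductive hypothesis supplies for the subexpressions. Concretely, in the star case your claim that the $S$-cut of $G_{F^*}$ with $S = (\mathrm{First}(F^{\sharp}))^{\natural}$ ``has the same orbit structure as $G_F$'' fails whenever some position of $\mathrm{First}(F^{\sharp})$ already belongs to $\mathrm{Follow}(F^{\sharp},y)$ for some $y \in \mathrm{Last}(F^{\sharp})$: the cut then also deletes transitions internal to $G_F$. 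For $F = (ab^*)^*$ (which is one-unambiguous, as is $F^*$), cutting $a$ removes both $a_1 \xrightarrow{a} a_1$ and $b_2 \xrightarrow{a} a_1$, splitting the single orbit $\{a_1,b_2\}$ of $G_F$ into $\{a_1\}$ and $\{b_2\}$, so the reduction to the inductive hypothesis on $F$ breaks down. Brüggemann-Klein and Wood avoid precisely this by first converting to star normal form, which your sketch omits. A similar mismatch occurs for concatenation: in $G_{F\cdot G}$ the states of $\mathrm{Last}(F^{\sharp})$ are in general no longer final, so the set of out-gates of an orbit inside $\Pi_F$ and the transitions eligible for cutting are not those of $G_F$, and ``the out-gate conditions are preserved'' needs an actual argument rather than an appeal to determinism. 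Until the inductive invariant is strengthened to track how the consistent set and the finality structure transform under $+$, $\cdot$ and $^*$ (or the expression is first normalized), the induction does not close; falling back on the citation of~\cite{BW98}, as the paper does, is the safe option.
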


\subsection{Lookahead Deterministic Regular Languages}

	We present the notion of lookahead determinism introduced in~\cite{HW08}.
	The basic idea is that the reading of the next $k$ symbols of the input word allows one to know the next position in the expression or in the automaton.
	
\begin{definition}
	An automaton $A = (\Sigma, Q, I, F, \delta)$ is \emph{$k$-lookahead deterministic} if the following conditions hold:
	\begin{itemize}
		\item $|I| = 1$
		\item $\forall t_1 = (p, a, q_1), t_2 = (p, b, q_2) \in \delta, (t_1 \neq t_2) \Longrightarrow (a \neq b) \vee (\forall w \in \Sigma^{k-1}, \delta(q_1, w) = \emptyset \vee \delta(q_2, w) = \emptyset)$.
	\end{itemize}
\end{definition}

\begin{definition}
	A regular expression $E$ is $k$-lookahead deterministic if $G_E$ is $k$-lookahead deterministic.
	A regular language is $k$-lookahead deterministic if it is specified by some $k$-lookahead deterministic regular expressions.
\end{definition}

	Since a $1$-lookahead deterministic automaton is deterministic, the family of $1$-lookahead deterministic language is the same as the family of one-unambiguous language.

\begin{example}
	Let $E = b^*a(b^*a)^*(a+b)$, $G_E$ is given in Figure~\ref{fg:Glushkov2LA}.
	Notice that the states $a_2$ and $a_4$ admit two successors by $a$ and $b$, but since $\mathrm{L}_{a_5}(G_E) = \mathrm{L}_{b_6}(G_E) = \{\varepsilon\}$, then $G_E$ and $E$ are $2$-lookahead deterministic. 
\end{example}

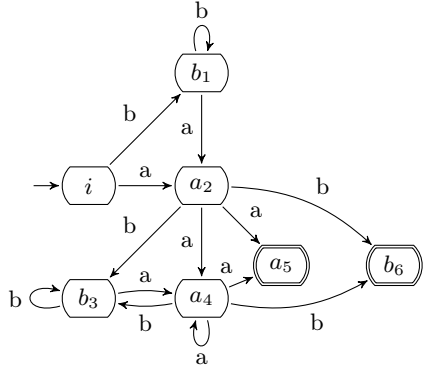
\begin{figure}[H]
	\centering
	
	\begin{tikzpicture}
		\node[state, initial] (i) {$i$};
	    \node[state, right of=i] (2) {$a_2$};
    	\node[state, above of=2] (1) {$b_1$};
		\node[state, below of=2] (4) {$a_4$};
		\node[state, left of=4] (3) {$b_3$};
		\node[state, accepting, below right of=2] (5) {$a_5$};
		\node[state, accepting, right of=5] (6) {$b_6$};		
	    \path[->]
    			(i) edge node {b} (1)
			(i) edge node {a} (2)
   		    (1) edge [loop above] node {b} ()
        		(1) edge [swap] node {a} (2)
			(2) edge [swap] node {b} (3)
   		    (2) edge [swap] node {a} (4)
   		    (2) edge node {a} (5)
   		    (2) edge [bend left=20] node {b} (6)
        		(3) edge [loop left] node {b} ()
			(3) edge [bend left=10] node {a} (4)
			(4) edge [bend left=10] node {b} (3)
        		(4) edge [loop below] node {a} ()
   		    (4) edge node {a} (5)
   		    (4) edge [swap, bend right=20] node {b} (6)
		;
	\end{tikzpicture}

	\caption{The $2$-lookahead deterministic Glushkov automaton $G_E$}
	\label{fg:Glushkov2LA}
\end{figure}

	It has been proved in~\cite{HW08} that the language $\mathrm{L}(b^*a(b^*a)^*(a+b))$ is not one-unambiguous.
	Thus, one-unambiguous regular languages are a proper subfamily of $k$-lookahead deterministic regular languages.

\subsection{Block Deterministic Regular Languages}

	We present the notion of block determinism introduced in~\cite{GMW01}.
	
	Let $\Sigma$ be an alphabet and $k$ be an integer. The \emph{set of blocks} $B_{\Sigma, k}$ is the set $\{w \mid w \in \Sigma^* \wedge 1 \leq |w| \leq k\}$. The notions of regular expression and automaton can be extended to ones over set of blocks.
	Let $E$ be a regular expression over $\Gamma$ and $A = (\Gamma, Q, I, F, \delta)$ be an automaton. Let $\Sigma$ be an alphabet and $k$ be an integer, if $\Gamma \subset B_{\Sigma, k}$ then $E$ and $A$ are $(\Sigma, k)$\emph{-block}.
	And since $\Gamma \subset B_{\Sigma, k} \subset \Sigma^*$, a language over $\Gamma$ is also a language over $\Sigma$.
	To distinguish blocks as syntactic components in a regular expression, we write them between square brackets.
Those are omitted for one letter blocks.

	Since $\Sigma = B_{\Sigma, 1}$, regular expressions and automata can be considered as ones over a set of blocks.
Moreover, the blocks can be treated as single symbols, as we do when we refer to the elements of an alphabet.
With this assumption,  the marking of block regular expressions induces the construction of a Glushkov automaton from a block regular expression, and the usual automaton transformations such as determinization and minimization can be easily performed.

\begin{example}
	Let $E = [aa]^*([ab]b + ba)b^*$. Then $E^{\sharp} = [aa]_1^*([ab]_2b_3 + b_4a_5)b_6^*$, and $G_E$ is given in Figure~\ref{fg:GlushkovBloc}.
\end{example}

\begin{figure}[H]
	\centerline{
		\begin{tikzpicture}
			\node[state, initial] (i) {$i$};
		    \node[state, below of=i] (1) {$[aa]_1$};
		    \node[state, left of=1] (4) {$b_4$};
		    \node[state, right of=1] (2) {$[ab]_2$};
		    \node[state, accepting, below of=2] (3) {$b_3$};
		    \node[state, accepting, below of=4] (5) {$b_5$};
	    	\node[state, accepting, below of=1] (6) {$b_6$};
		    \path[->]
	    		(i) edge node {$aa$} (1)
	    		(i) edge [swap] node {$b$} (4)
	       		(i) edge [bend left=45] node {$ab$} (2)
	        	(1) edge [loop below] node {$aa$} ()
	    		(1) edge node {$b$} (4)
	   		    (1) edge [swap] node {$ab$} (2)
	   		    (2) edge node {$b$} (3)
	  		    (3) edge node {$b$} (6)
	   		    (4) edge [swap] node {$a$} (5)
	   		    (5) edge [swap] node {$b$} (6)
	   		    (6) edge [loop below] node {$b$} ()
			;
		\end{tikzpicture}
	}
  \caption{The $(\{a, b\}, 2)$-block Glushkov automaton $G_E$}
  \label{fg:GlushkovBloc}
\end{figure}
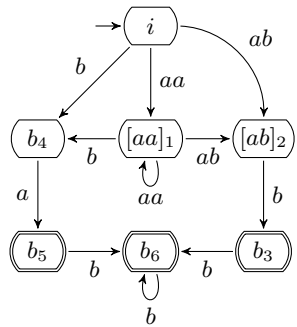

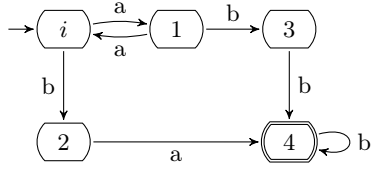
\begin{figure}[H]
	\centering
	
	\begin{tikzpicture}
		\node[state, initial] (i) {$i$};
	    \node[state, right of=i] (1) {1};
	    \node[state, below of=i] (2) {2};
	    \node[state, right of=1] (3) {3};
	    \node[state, accepting, below of=3] (4) {4};
	    \path[->]
	    		(i) edge [bend left=10] node {a} (1)
    		    (i) edge [swap] node {b} (2)
    		    (1) edge [bend left=10] node {a} (i)
    		    (1) edge node {b} (3)
    		    (2) edge [swap] node {a} (4)
   		    (3) edge node {b} (4)
    		    (4) edge [loop right] node {b} ()
		;
	\end{tikzpicture}

  \caption{The minimal DFA of $\mathrm{L}(E)$}
  \label{fg:mdfa_kbd}
\end{figure}

The notion of determinism can also be extended to block determinism as follows:

\begin{definition}
	An automaton $A = (\Gamma, Q, I, F, \delta)$ is $k$\emph{-block deterministic} if the following conditions hold:
		\begin{itemize}
			\item there exists an alphabet $\Sigma$ such that $A$ is $(\Sigma, k)$-block,
			\item $|I| = 1$,
			\item $\forall t_1 = (p, b_1, q_1), t_2 = (p, b_2, q_2) \in \delta, (t_1 \neq t_2) \Longrightarrow (b_1 \notin \mathrm{Pref}(b_2))$.
		\end{itemize}
\end{definition}	

Finally, the block determinism of a Glushkov automaton can be used to extend the block determinism to block expression:

\begin{definition}
	A block regular expression $E$ is $k$-block deterministic if $G_E$ is $k$-block deterministic.
	A regular language is $k$-block deterministic if it is specified by some $k$-block deterministic regular expressions.
\end{definition}

Since a $1$-block deterministic automaton is a deterministic automaton, the family of $1$-block deterministic language is the same as the family of one-unambi\-guous language.

\begin{example}
	Since the Glushkov automaton in Figure~\ref{fg:GlushkovBloc} is $2$-block deterministic, $\mathrm{L}([aa]^*([ab]b + ba)b^*)$ is $2$-block deterministic.
\end{example}
	
	Let $A=(\Sigma,Q,I,F,\delta)$ be an automaton and $\Gamma$ be a set. Then the automaton $B=(\Gamma,Q,I,F,\delta')$ is \emph{an alphabetic image of} $A$ if there exists an injection $\phi$ from $\Sigma$ to $\Gamma$ such that $\delta'=\{(p,\phi(a),q)\mid (p,a,q) \in\delta\}$. In this case, we set $B=\phi(A)$.	
Caron and Ziadi showed in~\cite{CZ97} that an automaton is a Glushkov one if and only if the two conditions hold:
\begin{itemize}
  \item it is homogeneous (for any state $q$, for any two transitions $(p,a,q)$ and $(r,b,q)$, the symbols $a$ and $b$ are the same);
  \item it satisfies some structural properties over the transition structure.
\end{itemize}
One can check that any injection $\phi$ from $\Sigma$ to $\Gamma$ preserves such conditions, since
the alphabetical image preserves the transition structure by only changing the symbol labeling a transition.
Therefore 
	\begin{lemma}\label{lm:ImageGlushkov}
	  The alphabetic image of an automaton $A$ is a Glushkov automaton if and only if $A$ is a Glushkov automaton. 
	\end{lemma}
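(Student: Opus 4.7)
The plan is to invoke the Caron--Ziadi characterization cited just above the statement: a finite automaton is a Glushkov automaton iff it is homogeneous and satisfies certain purely structural (graph-theoretic) conditions on its underlying transition graph. Since the alphabetic image $B=\phi(A)$ is obtained from $A=(\Sigma,Q,I,F,\delta)$ by leaving $Q$, $I$, $F$ and the underlying directed graph (the projection of $\delta$ onto $Q\times Q$) unchanged, and only relabeling each transition $(p,a,q)$ as $(p,\phi(a),q)$, I would verify separately that each of the two Caron--Ziadi conditions is transported in both directions by $\phi$.

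For homogeneity, fix a state $q\in Q$ and suppose in $A$ every transition entering $q$ carries the same symbol $a$. Then in $B$ every transition entering $q$ carries $\phi(a)$, so $B$ is homogeneous. Conversely, if $B$ is homogeneous at $q$, with common incoming label $c\in\Gamma$, then by injectivity of $\phi$ there is at most one $a\in\Sigma$ with $\phi(a)=c$, so all incoming transitions of $q$ in $A$ also carry this unique $a$; hence $A$ is homogeneous at $q$. Injectivity of $\phi$ is the only non-trivial ingredient here, and it is exactly what rules out the spurious case of two distinct letters of $\Sigma$ collapsing to the same letter of $\Gamma$.

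For the structural (position-automaton) conditions of Caron--Ziadi, I would observe that all of them are stated in terms of the underlying transition graph (reachability, orbits, gates, etc.) and not in terms of the labels themselves; since $A$ and $\phi(A)$ have literally the same underlying graph (with the same set of initial and final states), either both satisfy them or neither does. Combining this with the homogeneity equivalence above yields that $A$ is a Glushkov automaton iff $\phi(A)$ is, which is the desired statement.

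The only genuine subtlety I expect is checking that the Caron--Ziadi structural conditions are truly label-free; some versions in the literature phrase certain conditions through labels appearing on parallel edges or self-loops, in which case I would reformulate them graph-theoretically and appeal to homogeneity to recover the equivalence. Apart from this bookkeeping, the proof is essentially a one-line consequence of the characterization, which is why the authors present it as an immediate remark.
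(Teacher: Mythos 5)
Your proposal is correct and follows essentially the same route as the paper: both invoke the Caron--Ziadi characterization (homogeneity plus label-independent structural conditions on the transition graph) and observe that an injective relabeling preserves each condition in both directions. You merely spell out the details (in particular the role of injectivity for the converse of homogeneity) that the paper leaves as ``one can check''.
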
	
	Let us show that the BKW test can be used to characterize the $k$-block determinism of a regular language:
\begin{theorem}\label{th:KBD}
	A regular language $L$ is $k$-block deterministic if and only if it is recognized by a $k$-block deterministic automaton $K$ such that $K$ is the alphabetic image of a deterministic automaton which passes the BKW test.
\end{theorem}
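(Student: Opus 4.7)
The two implications are proved separately, both resting on the observation that as soon as $\Gamma\subset B_{\Sigma,k}$ is regarded as a flat alphabet, $k$-block determinism over $\Sigma$ becomes ordinary determinism over $\Gamma$, and Glushkov automata remain Glushkov automata (Lemma~\ref{lm:ImageGlushkov}).

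($\Rightarrow$) Suppose $L=\mathrm{L}(E)$ for a $k$-block deterministic block expression $E$ over some $\Gamma\subset B_{\Sigma,k}$. Treating the blocks of $\Gamma$ as atomic symbols, $E$ is a one-unambiguous regular expression over $\Gamma$, so by Lemma~\ref{lm:GlushkovBKW} its Glushkov automaton $G_E$ is deterministic and passes the BKW test. Taking $\Sigma'=\Gamma$, $A=G_E$ viewed over $\Gamma$, $\phi=\mathrm{id}_\Gamma$ (which is an injection), and $K=G_E$ viewed over $\Sigma$, one has $K=\phi(A)$ with all the required properties.

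($\Leftarrow$) Conversely, assume that $K=\phi(A)$ is a $k$-block deterministic automaton recognizing $L$ over $\Gamma\subset B_{\Sigma,k}$, where $A$ is a deterministic automaton over some $\Sigma'$ passing the BKW test and $\phi\colon\Sigma'\to\Gamma$ is an injection. By Lemma~\ref{lm:1NA}, the minimal DFA of $A$ inherits the orbit property and one-unambiguity of its orbit languages, so Theorem~\ref{th:1NA} yields that $\mathrm{L}(A)$ is one-unambiguous. Pick a one-unambiguous expression $E'$ over $\Sigma'$ with $\mathrm{L}(E')=\mathrm{L}(A)$ and let $\phi(E')$ be the block regular expression over $\Sigma$ obtained by replacing every occurrence of $a\in\Sigma'$ in $E'$ by the block $\phi(a)\in\Gamma$. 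Since this substitution preserves the positions of the marked expression and only alters the transition labels, Definition~\ref{def:Glushkov} gives directly $G_{\phi(E')}=\phi(G_{E'})$; because $\phi$ is injective and $G_{E'}$ is deterministic over $\Sigma'$, the automaton $G_{\phi(E')}$ is deterministic over $\Gamma$, hence $k$-block deterministic over $\Sigma$, so $\phi(E')$ is a $k$-block deterministic block expression. The identity $\mathrm{L}(\phi(E'))=\phi(\mathrm{L}(E'))=\phi(\mathrm{L}(A))=\mathrm{L}(\phi(A))=\mathrm{L}(K)=L$, read over $\Sigma$ after flattening blocks, finishes the argument.

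The single mildly technical step is the identification $G_{\phi(E')}=\phi(G_{E'})$: since $\phi$ is an injection acting only on symbols, the sets $\mathrm{First}$, $\mathrm{Last}$ and $\mathrm{Follow}$ computed on $(E')^\sharp$ and on $\phi(E')^\sharp$ are in bijection via the position correspondence, and only the transition labels are replaced by their $\phi$-images, which is exactly the definition of the alphabetic image. Everything else reduces to invoking the results recalled in the preliminaries.
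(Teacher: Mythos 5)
Your forward direction is essentially the paper's argument: you observe that prefix-freeness of sibling transition labels implies their distinctness, so the $k$-block deterministic Glushkov automaton $G_E$, read over $\Gamma$ as a flat alphabet, is deterministic and passes the BKW test by Lemma~\ref{lm:GlushkovBKW}; taking $\phi$ to be the identity on $\Gamma$ (the paper uses a formally distinct copy of the block alphabet, but that is cosmetic) yields the required decomposition. That half is fine.

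The backward direction has a genuine gap at the step ``$G_{\phi(E')}$ is deterministic over $\Gamma$, hence $k$-block deterministic over $\Sigma$.'' Determinism over $\Gamma$ only guarantees that two distinct transitions leaving the same state carry \emph{distinct} blocks; $k$-block determinism additionally demands that neither label be a prefix of the other as a word over $\Sigma$. These are not equivalent: if a state of $G_{E'}$ has outgoing transitions on $x \neq y$ with, say, $\phi(x)=a$ and $\phi(y)=ab$, then $G_{\phi(E')}$ is deterministic over $\Gamma$ yet violates the prefix condition. Symptomatically, your argument for this direction never uses the hypothesis that $K$ itself is $k$-block deterministic; without that hypothesis the statement you are actually proving --- that \emph{any} alphabetic image into blocks of a BKW-passing DFA recognizes a $k$-block deterministic language --- is false. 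The missing step, which is the heart of the paper's proof, is to transport the prefix-freeness from $K$ to the new Glushkov automaton: since $G_{E'}$ and $A$ are equivalent trimmed deterministic automata, a state of $G_{E'}$ with outgoing labels $x$ and $y$ corresponds to a state of $A$ with the same outgoing labels, hence to a state of $K=\phi(A)$ with outgoing labels $\phi(x)$ and $\phi(y)$; the $k$-block determinism of $K$ then rules out $\phi(x)\in\mathrm{Pref}(\phi(y))$. With that case analysis added, your construction goes through.
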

\begin{proof}
  Let us show the double implication.
  \begin{enumerate}
	\item Let $L$ be a $k$-block deterministic regular language over $\Sigma$.
	Then there exists a $k$-block deterministic Glushkov automaton $K = (B_{\Sigma, k}, Q, \{i\}, F, \delta_K)$ that recognizes $L$.
	Let $\Pi = \{[b] \mid b \in B_{\Sigma, k}\}$ be an alphabet, $\varphi : \Pi \rightarrow B_{\Sigma, k}$ be
	the bijection such that for every $[b] \in \Pi, \varphi([b]) = b$. Let $A = (\Pi, Q, \{i\}, F, \delta_A)$ be a Glushkov automaton such that $K = \varphi(A)$.
	Let us suppose that $A$ is not deterministic.
	Then, there exist two transitions $(p, a, q), (p, a, r) \in \delta_A$ such that $q \neq r$.
	Thus, $(p, \varphi(a), q), (p, \varphi(a), r) \in \delta_K$, which contradicts the fact that $K$ is $k$-block deterministic.
	So, $A$ is a deterministic Glushkov automaton, and therefore passes the BKW test following Lemma~\ref{lm:GlushkovBKW}.
	
	\item Let $A = (\Pi, Q_A, \{i_A\}, F_A, \delta_A)$ be a deterministic automaton which passes the BKW test, $K = \{\Gamma, Q_A, \{i_A\}, F_A, \delta_K)$ be a $k$-block deterministic automaton, and $\varphi : \Pi \rightarrow \Gamma$ be an injection such that $K = \varphi(A)$.
	Now, $\varphi : \Pi \rightarrow \Gamma$ is extended into the morphism $\varphi : \Pi^* \rightarrow \Gamma^*$ such that for every letter $a \in \Pi$ and every word $w \in \Pi^*$ we have $\varphi(a \cdot w) = \varphi(a) \cdot \varphi(w)$ and $\varphi(\varepsilon) = \varepsilon$.
	In this case, $\mathrm{L}(K) = \varphi(\mathrm{L}(A))$.
	Since $A$ passes the BKW test, there exists an equivalent deterministic Glushkov automaton $G = (\Pi, Q_G, \{i_G\}, F_G, \delta_G)$. Following Lemma~\ref{lm:ImageGlushkov}, there also exists a Glushkov automaton $H = (\Gamma, Q_G, \{i_G\}, F_G, \delta_H)$ such that $H = \varphi(G)$ and $\mathrm{L}(H) = \varphi(\mathrm{L}(G))$.
	Since $A$ and $G$ are equivalent deterministic automata,
	$\varphi(\mathrm{L}(G)) = \varphi(\mathrm{L}(A))$.
	And so $\mathrm{L}(H) = \mathrm{L}(K)$.
	Let us suppose that $H$ is not $k$-block deterministic, then there exist two transitions $(p_H, \varphi(a), q_H), (p_H, \varphi(b), r_H) \in \delta_H$ such that either $(\varphi(a) = \varphi(b)) \wedge (q_H \neq r_H)$ or $(\varphi(a) \neq \varphi(b)) \wedge (\varphi(a) \in \mathrm{Pref}(\varphi(b)))$.
	By definition, $(p_H, a, q_H),$ $(p_H, b, r_H) \in \delta_G$.
	But since $G$ and $A$ are equivalent deterministic automata, there exist two transitions $(p_A, a, q_A)$, $(p_A, b, r_A) \in \delta_A$, and by definition, $(p_A, \varphi(a), q_A)$, $(p_A, \varphi(b), r_A) \in \delta_K$.
	Let us suppose that $(\varphi(a) = \varphi(b)) \wedge (q_H \neq r_H)$.
	Since $\varphi$ is an injection, $(a = b) \wedge (q_H \neq r_H)$, which contradicts the fact that $G$ is deterministic.
	So let us suppose that $(\varphi(a) \neq \varphi(b)) \wedge (\varphi(a) \in \mathrm{Pref}(\varphi(b)))$, it contradicts the fact that $K$ is $k$-block deterministic.
	Therefore, $H$ is a $k$-block deterministic Glushkov automaton, and $\mathrm{L}(K)$ is $k$-block deterministic.
	\end{enumerate}
\end{proof}

	It has been proved that one-unambiguous regular languages are a proper subfamily of $k$-block deterministic regular languages.
	As an example, the language $\mathrm{L}([aa]^*([ab]b + ba)b^*)$ is $2$-block deterministic but not one-unambiguous since its minimal deterministic automaton given in Figure \ref{fg:mdfa_kbd} does not pass the BKW test.
Therefore one can wonder whether there exists an infinite hierarchy in $k$-block deterministic regular languages regarding $k$.
That has been achieved by Han and Wood~\cite{HW08}, but with an invalid assumption.

\section{Previous Results on Block Deterministic Languages}\label{se:PrevRes}

	In~\cite{GMW01}, a method is presented for creating from a block automaton an equivalent block automaton with larger blocks by eliminating states while preserving the right language of every other states.
	
	Let $A = (\Gamma, Q, I, F, \delta)$ be a block automaton.
	The \emph{state elimination of $q$ in $A$} creates a new block automaton, denoted by $\mathcal{S}(A,q)$, computed as follows: first, the state $q$ and all transitions going in and out of it are removed; second, for every two transitions $(r, u, q)$ and $(q, v, s)$ in $\delta$, the transition $(r, uv, s)$ is added.
	This transformation is illustrated in Figure~\ref{fg:StateElim}.

\begin{figure}[H]
	\begin{minipage}[b]{.48\linewidth}
		\centering
		
		\begin{tikzpicture}
			\node[state] (q) {$q$};
			\node[state, above left of=q] (r1) {$r_1$};
		    \node[state, below left of=q] (r2) {$r_2$};
    		\node[state, above right of=q] (s1) {$s_1$};
    		\node[state, below right of=q] (s2) {$s_2$};
		    \path[->]
    		    (r1) edge [swap]node {$u_1$} (q)
    		    (r2) edge node {$u_2$} (q)
    		    (q) edge [bend left=10] node {$v_1$} (s1)
	   			(q) edge node {$v_2$} (s2)
	   			(s1) edge [bend left=10] node {$w$} (q)
			;
		\end{tikzpicture}
	
	\end{minipage}
	\hfill
	\begin{minipage}[b]{.48\linewidth}
		\centering
			
		\begin{tikzpicture}	
			\node (q) {};
			\node[state, above left of=q] (r1) {$r_1$};
		    \node[state, below left of=q] (r2) {$r_2$};
    		\node[state, above right of=q] (s1) {$s_1$};
    		\node[state, below right of=q] (s2) {$s_2$};
		    \path[->]	    		
    		    (r1) edge node {$u_1v_1$} (s1)
    		    (r1) edge [swap, near start] node {$u_1v_2$} (s2)
    		    (r2) edge [near start] node {$u_2v_1$} (s1)
	   			(r2) edge [swap] node {$u_2v_2$} (s2)
	   			(s1) edge [loop above] node {$wv_1$} ()
    		    (s1) edge node {$wv_2$} (s2)	   			
			;
		\end{tikzpicture}

	\end{minipage}
	\caption{The state elimination of the state $q$}
	\label{fg:StateElim}
\end{figure}
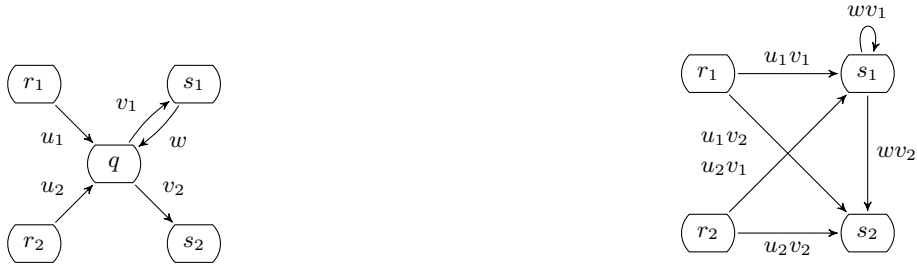

\begin{definition}
	Let $A = (\Gamma, Q, I, F, \delta)$ be a block automaton.
	A state $q \in Q$ satisfies the \emph{state elimination precondition} if it is neither an initial state nor a final state and it has no self-loop.
\end{definition}

	The state elimination is extended to a set $S \subset Q$ of states if every state in $S$ satisfies the state elimination precondition, and the subgraph induced by $S$ is acyclic. In this case, we can eliminate the states in $S$ in any order. Giammarresi \emph{et al.} \cite{GMW01} suggest that state elimination is sufficient to decide the $k$-block determinism of a regular language.

\begin{lemma}[\cite{GMW01,HW08}]\label{le:GMWH}
	Let $M$ be a minimal deterministic automaton of a $k$-block deterministic regular language. We can transform $M$ to a $k$-block deterministic automaton that satisfies the orbit property using state elimination.
\end{lemma}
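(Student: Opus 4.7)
The natural plan is to leverage Theorem~\ref{th:KBD}: since $L$ is $k$-block deterministic, there exists a $k$-block deterministic automaton $K$ recognizing $L$ that is the alphabetic image of some deterministic automaton $A$ passing the BKW test. By Lemma~\ref{lm:1NA} and the BKW characterization, $A$ (and hence $K$, since an alphabetic image preserves the transition graph) admits the orbit property on an appropriate consistent cut. So it suffices to build from $M$, using only state eliminations, an automaton equivalent to $L$ that coincides with some such $K$ up to renaming of states; the orbit property would then be inherited for free.

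To realize this, I would first unfold $K$ into an ordinary DFA $K^{\flat}$ by replacing each block-labelled transition $(p, b_1 \cdots b_\ell, q)$ with a chain of single-symbol transitions through $\ell - 1$ freshly introduced intermediate states. The resulting $K^{\flat}$ is a DFA over $\Sigma$ recognizing $L$, and by minimality of $M$ there is a unique surjective DFA-morphism $\pi \colon K^{\flat} \to M$. The natural candidate set of states to eliminate from $M$ is $\pi(Q_{K^{\flat}} \setminus Q_K)$, i.e.\ the $\pi$-images of the freshly introduced interior-of-block states. If $\pi$ is injective on this set and each of its images satisfies the state elimination precondition, then performing these eliminations in $M$ collapses single-symbol transitions into block transitions of length at most $k$ and yields an automaton isomorphic to $K$, which is $k$-block deterministic and has the orbit property.

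The main obstacle, and the place where I expect the argument to actually break, is precisely this last step. Minimality of $M$ may identify an interior-of-block state of $K^{\flat}$ with a gate of a block, with the initial state, or with a state that already carries a self-loop: three situations in which the candidate state of $M$ violates the state elimination precondition and therefore cannot be eliminated at all. In such a case, no amount of state elimination on $M$ can collapse transitions into the correct blocks without also duplicating or splitting some state of $M$, an operation that state elimination does not provide. This is the gap announced in the introduction, and it is the reason why, despite its plausibility, the lemma as stated is in fact erroneous.
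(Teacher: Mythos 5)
You have correctly sensed that this ``lemma'' cannot be proved: the paper does not prove it either. It is quoted from Giammarresi \emph{et al.} and Han--Wood precisely in order to be refuted, and Section~\ref{se:kbd_witness} is devoted to a counter-example. Your diagnosis of where a proof attempt breaks --- minimization may merge an interior-of-block state with the initial state, a final state, or a self-looped state, each of which violates the state elimination precondition --- is exactly the failure mode the paper exhibits.

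What your write-up still lacks is the step from ``this proof strategy has a gap'' to ``the statement is false'': showing that one natural unfolding-and-projecting argument fails does not rule out that some other choice of $K$ or some other sequence of eliminations succeeds. To refute the lemma you need a concrete witness, and the paper supplies one (Figure~\ref{mp:dfam}): the minimal DFA with initial state $i$, final states $1$ and $2$, and transitions $(i,a,1)$, $(i,b,2)$, $(1,b,i)$. Every state is initial or final, so \emph{no} state satisfies the elimination precondition; the only automaton reachable by state elimination is $M$ itself, and $M$ fails the orbit property because the out-gates $i$ and $1$ of the orbit $\{i,1\}$ disagree on finality. Yet the language \emph{is} $2$-block deterministic: after standardization the old initial state becomes eliminable, and eliminating it yields a $2$-block deterministic automaton satisfying the conditions of Theorem~\ref{th:KBD}. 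The missing operation is thus standardization --- precisely the state ``duplicating or splitting'' you gesture at in your last sentence --- and exhibiting this example is what turns your (correct) suspicion into an actual refutation.
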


Using this lemma, Han and Wood stated that:

\begin{theorem}[\cite{HW08}]\label{th:HW}
	There is a proper hierarchy in $k$-block deterministic regular languages.
\end{theorem}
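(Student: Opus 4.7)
The plan is to exhibit an explicit parameterized family $(L_k)_{k \geq 2}$ of regular languages such that, for each $k$, $L_k$ is $k$-block deterministic but not $(k-1)$-block deterministic, and then conclude the proper hierarchy. Since Lemma~\ref{le:GMWH} is precisely what is being called into question in the next section, I would deliberately avoid it and instead base everything on Theorem~\ref{th:KBD}, which characterizes $k$-block determinism as: being recognized by a $k$-block automaton that is the alphabetic image of a deterministic automaton passing the BKW test.

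For the upper bound, for each $k$ I would produce an explicit $k$-block regular expression $E_k$ whose Glushkov automaton $G_{E_k}$ is $k$-block deterministic; by Definition, this immediately yields that $L_k = \mathrm{L}(E_k)$ is $k$-block deterministic. The template I would follow generalizes the example $[aa]^*([ab]b + ba)b^*$ used earlier in the preliminaries: build $E_k$ with blocks of length exactly $k$ so that two branches share a long common prefix over $\Sigma$ but become prefix-free once one is packaged as a single block of length $k$. Checking block determinism then reduces to verifying prefix-freeness of the block labels leaving each state in $G_{E_k}$, a direct computation from the $\mathrm{First}$/$\mathrm{Follow}$ functions of $E_k^{\sharp}$.

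For the lower bound, I would argue by contradiction: assume $L_k$ is $(k-1)$-block deterministic. By Theorem~\ref{th:KBD}, there exist an alphabet $\Pi$, an injection $\varphi : \Pi \to B_{\Sigma,k-1}$ and a deterministic automaton $A$ over $\Pi$ that passes the BKW test with $\varphi(A)$ recognizing $L_k$. I would then unfold $\varphi(A)$ letter-by-letter into an ordinary DFA $A'$ over $\Sigma$ and compare with the minimal DFA $M$ of $L_k$: since $A'$ is equivalent to $M$, there is a surjective map from states of $A'$ onto states of $M$, which in turn tracks how the blocks used by $A$ sit inside $M$. Design $L_k$ so that two particular paths in $M$ agree on their first $k-1$ letters and must diverge at position $k$, and so that merging them into a single block of length $\leq k-1$ forces a violation of either prefix-freeness in $A$ or of the orbit property required by the BKW test in the unfolded structure.

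The main obstacle is this lower bound: an upper bound only requires one explicit expression, whereas here we must rule out every conceivable $(k-1)$-block representation. Theorem~\ref{th:KBD} reduces the search to deterministic BKW-passing automata, but the alphabet $\Pi$ and the injection $\varphi$ are arbitrary, so one has to argue at the level of the shared minimal DFA $M$ rather than of any single candidate. My main technical lemma would therefore be a combinatorial claim about $M$ stating that every factorization of its transitions into prefix-free blocks of length $\leq k-1$ that is compatible with a deterministic pre-image must break the orbit property — designing $L_k$ so that this claim is true and crisp is, I expect, the delicate step.
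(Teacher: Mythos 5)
Your high-level architecture matches the paper's corrected proof: exhibit a witness family, certify the upper bound with an explicit $k$-block deterministic expression, and obtain the lower bound by contradiction through Theorem~\ref{th:KBD} rather than Lemma~\ref{le:GMWH}. But there is a genuine gap at exactly the point you defer as ``delicate'': no concrete family is produced, and the design template you do give --- generalize $[aa]^*([ab]b+ba)b^*$ so that ``two branches share a long common prefix over $\Sigma$ but become prefix-free once one is packaged as a single block of length $k$'' --- is essentially the Han--Wood family $E_k=([a^k])^*([a^{k-1}b]b+ba)b^*$, which the paper proves \emph{fails}: after standardizing $M_k$ and eliminating only $q_k$, one obtains a $2$-block deterministic automaton, so $\mathrm{L}(E_k)$ is $2$-block deterministic for every $k$ (it is specified by $(a^{k-1}([aa]a^{k-2})^*([ab]a+bb)+ba)b^*$). ``Agreeing on the first $k-1$ letters and being forced to diverge at position $k$'' is not an obstruction to short blocks, because a block decomposition may start its blocks at any offset along the common prefix; that is precisely the error that invalidated the original argument, so your sketch gives no reason to believe your $L_k$ would avoid the same collapse.

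The missing idea is an invariant that survives \emph{every} re-blocking. The paper's family $A_k$ over $\{a,b,c\}$ forces, in any equivalent $(k-1)$-block deterministic automaton $B$, a non-trivial orbit containing two out-gates $\alpha$ and $\beta$ with $\mathrm{L}_{\alpha}(B)=\mathrm{L}_{\alpha_k}(A_k)$ and $\mathrm{L}_{\beta}(B)=\mathrm{L}_{\beta_k}(A_k)$ (via a pumping argument on the states reached by $(ab^{k-1}c)^j$). The language is rigged so that after any $b$-prefix of length $1\le j<k$ the residual of $\beta_k$ is \emph{finite} while the residual of $\alpha_k$ is \emph{infinite} (the $c$-transition re-enters the orbit from the $\alpha$-chain only). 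Any block leaving $\beta$ along $b$'s has length $j\le k-1$ and must therefore land on a state with finite right language, hence outside the orbit, while the matching block from $\alpha$ must land on a state with infinite right language; the two out-gates cannot have identical outside connections, the orbit property fails, and Theorem~\ref{th:KBD} yields the contradiction. It is this finite-versus-infinite right-language dichotomy, not prefix-freeness of block labels, that carries the lower bound; without an invariant of this kind your ``combinatorial claim about $M$'' need not hold, as the Han--Wood family demonstrates.
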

\begin{proof}
	Han and Wood exhibited the family of languages $L_k$ specified by regular expressions $E_k = ([a^k])^*([a^{k-1}b]b + ba)b^*$ whose minimal deterministic automata $M_k$ are represented in Figure~\ref{fg:LEk}.
	Following Lemma~\ref{le:GMWH}, there is no other choice but to eliminate states $q_1$ to $q_{k-1}$, in any order, to have the orbit property.
	Thus, $L_k$ is $k$-block deterministic and not $(k-1)$-block deterministic.
\end{proof}

\begin{figure}[H]
	\begin{minipage}[b]{.55\linewidth}
		\centering
		
		\begin{tikzpicture}
			\tikzstyle{every state}=[rounded rectangle, rounded rectangle arc length=90, inner sep=0.5mm, minimum size=5mm]		
		
			\node[state, initial] (qk) {$q_k$};
			\node[state, below of=qk] (1) {$1$};
		    \node[state, above of=qk] (qk1) {$q_{k-1}$};
    		\node[state, right of=qk1] (qk2) {$q_{k-2}$};
    		\node[state, right of=qk2, xshift=0.5cm] (q3) {$q_3$};
    		\node[state, right of=q3] (q2) {$q_2$};
    		\node[state, below of=q2] (q1) {$q_1$};
    		\node[state, below of=q1] (2) {$2$};
    		\node[state, accepting, left of=2] (3) {$3$};
		    \path[->]
    		    (qk) edge node {$a$} (qk1)
    		    (qk) edge [swap] node {$b$} (1)
    		    (qk1) edge node {$a$} (qk2)
    		    (q3) edge node {$a$} (q2)
    		    (q2) edge node {$a$} (q1)
    		    (q1) edge [swap] node {$a$} (qk)
	   			(q1) edge node {$b$} (2)
	   			(2) edge node {$b$} (3)
	   			(1) edge [swap] node {$a$} (3)
	   			(3) edge [loop above] node {$b$} ()
			;
			\draw[densely dashed] (qk2) -- (q3);
		\end{tikzpicture}
	
	\end{minipage}
	\hfill
	\begin{minipage}[b]{.44\linewidth}
		\centering
			
		\begin{tikzpicture}
			\tikzstyle{every state}=[rounded rectangle, rounded rectangle arc length=90, inner sep=0.5mm, minimum size=5mm]		
		
			\node[state, initial] (qk) {$q_k$};
			\node[state, below of=qk] (1) {$1$};
    		\node[state, accepting, right of=1] (3) {$3$};
    		\node[state, right of=3] (2) {$2$};
		    \path[->]
    		    (qk) edge [loop above] node {$a^k$} ()
    		    (qk) edge [swap] node {$b$} (1)
    		    (qk) edge [bend left=35] node {$a^{k-1}b$} (2)
	   			(2) edge node {$b$} (3)
	   			(1) edge [swap] node {$a$} (3)
	   			(3) edge [loop above] node {$b$} ()
			;
		\end{tikzpicture}

	\end{minipage}
		
	\caption{
	The minimal deterministic automaton $M_k$ and its equivalent $k$-block de\-ter\-mi\-nistic automaton after having eliminated states $q_1$ to $q_{k-1}$
	}
	\label{fg:LEk}
\end{figure}
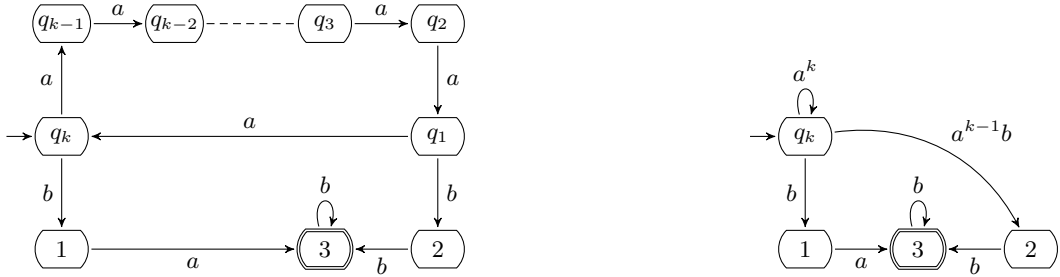

\section{A Witness for the Infinite Hierarchy of Block Deterministic Languages}\label{se:kbd_witness}

In this section, we exhibit a counter-example for Lemma~\ref{le:GMWH}. We can find a $k$-block deterministic language with a minimal deterministic automaton from which we cannot get any $k$-block deterministic automaton that satisfies the orbit property. 
	In Figure~\ref{mp:dfam}, the leftmost automaton is minimal and none of its states can be eliminated.
	However, by applying standardization, we create an equivalent deterministic automaton from which we can eliminate the state $i$ to get the rightmost equivalent $2$-block deterministic automaton.
	
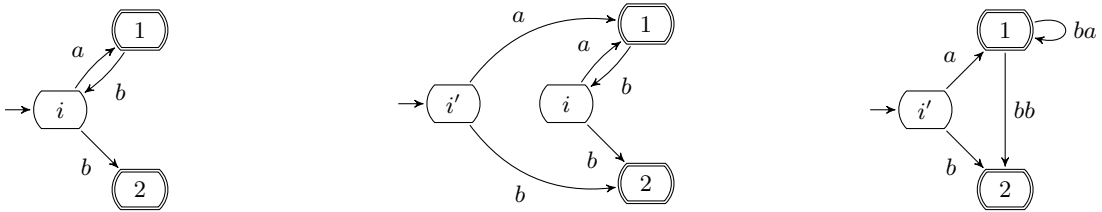
\begin{figure}[H]
	\begin{minipage}[b]{.32\linewidth}		
		\centerline{
			\begin{tikzpicture}
				\node[state, initial] (i) {$i$};
			    \node[state, accepting, above right of=i] (1) {$1$};
		    	\node[state, accepting, below right of=i] (2) {$2$};
			    \path[->]
		    		(i) edge [bend left=10] node {$a$} (1)
		    		(i) edge [swap] node {$b$} (2)
		   			(1) edge [bend left=10] node {$b$} (i)
				;
			\end{tikzpicture}
		}
	\end{minipage}
	\hfill
	\begin{minipage}[b]{.32\linewidth}		
		\centerline{
			\begin{tikzpicture}
				\node[state, initial] (i') {$i'$};
				\node[state, right of=i'] (i) {$i$};
			    \node[state, accepting, above right of=i] (1) {$1$};
		    	\node[state, accepting, below right of=i] (2) {$2$};
			    \path[->]
			    	(i') edge [bend left=30] node {$a$} (1)
			    	(i') edge [swap, bend right=30] node {$b$} (2)
		    		(i) edge [bend left=10] node {$a$} (1)
		    		(i) edge [swap] node {$b$} (2)
		   			(1) edge [bend left=10] node {$b$} (i)
				;
			\end{tikzpicture}
		}
	\end{minipage}
	\hfill
	\begin{minipage}[b]{.32\linewidth}	
		\centerline{	
			\begin{tikzpicture}
				\node[state, initial] (i) {$i'$};
			    \node[state, accepting, above right of=i] (1) {$1$};
		    	\node[state, accepting, below right of=i] (2) {$2$};
			    \path[->]
		    	    (i) edge  node {$a$} (1)
		    		(i) edge [swap] node {$b$} (2)
		   		    (1) edge [loop right] node {$ba$} ()
		        	(1) edge node {$bb$} (2)
				;
			\end{tikzpicture}
		}
	\end{minipage}
	  	\caption{The counter-example}
	  	\label{mp:dfam}
\end{figure}	

This clearly shows that the only action of state elimination is not enough to decide whether a language is $k$-block deterministic.
Using this operation, we show that:

\begin{proposition}
	$\forall k \in \mathbb{N} \setminus \{0\}$, the language $L_k$ is $2$-block deterministic.
\end{proposition}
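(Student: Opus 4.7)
The plan is to construct, from the minimal DFA $M_k$ of Figure~\ref{fg:LEk}, a $2$-block deterministic automaton $K_k$ recognizing $L_k$ and then to invoke Theorem~\ref{th:KBD}. The construction mirrors the counter-example of Figure~\ref{mp:dfam}. First I would standardize $M_k$, introducing a fresh initial state $i'$ whose outgoing transitions duplicate those of $q_k$, namely $i' \to q_{k-1}$ on $a$ and $i' \to 1$ on $b$. After standardization, $q_k$ is no longer initial, is not final, and has no self-loop, so it satisfies the state elimination precondition. Eliminating $q_k$ composes its unique incoming transition $q_1 \to q_k$ labelled by $a$ with each of its two outgoing transitions, producing two new block transitions $q_1 \to q_{k-1}$ labelled by $[aa]$ and $q_1 \to 1$ labelled by $[ab]$. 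Let $K_k$ denote the resulting block automaton.

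A routine check then shows that $K_k$ is $2$-block deterministic: $i'$ has two outgoing transitions labelled by the distinct symbols $a$ and $b$; the state $q_1$ has three outgoing transitions, labelled by $b$, $[aa]$, and $[ab]$, pairwise none of which is a prefix of another; all other states have at most one outgoing transition.

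To use Theorem~\ref{th:KBD}, I would then set $\Pi = \{a, b, c, d\}$ and define $\varphi : \Pi \to B_{\{a,b\},2}$ by $\varphi(a) = a$, $\varphi(b) = b$, $\varphi(c) = [aa]$, $\varphi(d) = [ab]$, and let $A_k$ be the automaton over $\Pi$ obtained from $K_k$ by decoding the blocks through $\varphi^{-1}$, so that $K_k = \varphi(A_k)$. By construction, $A_k$ is deterministic, and the main technical step is to show that $A_k$ passes the BKW test. By Lemma~\ref{lm:1NA}, this reduces to checking that $A_k$ itself has the orbit property and that all orbit languages of $A_k$ are one-unambiguous. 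The only non-trivial orbits of $A_k$ are $O = \{q_{k-1}, \ldots, q_1\}$, linked by $a$-transitions around the cycle and closed by the $c$-transition $q_1 \to q_{k-1}$, together with the singleton $\{3\}$ carrying a $b$-self-loop; all other orbits are trivial. In $O$, the unique out-gate is $q_1$, so the orbit property holds vacuously, and the orbit language at any $q_j \in O$ is $a^{j-1}(c\, a^{k-2})^*$, whose Glushkov automaton is plainly deterministic and hence one-unambiguous; the orbit language at $3$ is $b^*$, and at each trivial out-gate $i', 1, 2$ it is $\{\varepsilon\}$, all one-unambiguous. Theorem~\ref{th:KBD} then yields that $L_k = \mathrm{L}(K_k) = \varphi(\mathrm{L}(A_k))$ is $2$-block deterministic.

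The boundary case $k = 2$ (where $q_{k-1} = q_1$, so the new $[aa]$-transition becomes a self-loop on $q_1$ and the orbit $O$ reduces to $\{q_1\}$ with orbit language $c^*$) only requires adjusting notation, while $k = 1$ is immediate since $L_1 = a^*(bb + ba)b^*$ is already one-unambiguous. The main potential obstacle lies in the BKW verification of $A_k$; it is made easy here by the structural fact that $O$ admits a unique out-gate $q_1$, which trivializes the orbit property. Had the out-gate set been larger, non-trivial compatibility conditions on outside connections would have had to be verified.
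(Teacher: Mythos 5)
Your proof is correct and follows essentially the same route as the paper: standardize $M_k$, eliminate $q_k$ to obtain the $2$-block deterministic automaton of Figure~\ref{fg:ContreEx}, and invoke Theorem~\ref{th:KBD}. You merely make explicit the BKW verification (via the unique out-gate $q_1$ of the orbit $\{q_1,\dots,q_{k-1}\}$) and the degenerate cases $k=1,2$, which the paper leaves implicit.
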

\begin{proof}
	As shown in Figure~\ref{fg:ContreEx}, we can always standardize $M_k$, proceed to the state elimination of $q_k$ and get a $2$-block deterministic automaton which respects the conditions stated in Theorem~\ref{th:KBD}.
Thus, $L_k$ is $2$-block deterministic and is specified by the regular expressions $F_k = (a^{k-1}([aa]a^{k-2})^*([ab]a + bb) + ba)b^*$.
\end{proof}

\begin{figure}[H]
	\begin{minipage}[b]{.48\linewidth}
		\centering	
		\begin{tikzpicture}[transform shape, scale=0.9]
			\tikzstyle{every state}=[rounded rectangle, rounded rectangle arc length=90, inner sep=0.5mm, minimum size=5mm]	
			\node[state, initial] (i) {$i$};
			\node[state, right of=i] (qk) {$q_k$};
			\node[state, below of=i] (1) {$1$};
		    \node[state, above of=i] (qk1) {$q_{k-1}$};
    		\node[state, right of=qk1] (qk2) {$q_{k-2}$};
    		\node[state, right of=qk2, xshift=0.5cm] (q3) {$q_3$};
    		\node[state, right of=q3] (q2) {$q_2$};
    		\node[state, below of=q2] (q1) {$q_1$};
    		\node[state, below of=q1] (2) {$2$};
    		\node[state, accepting, left of=2] (3) {$3$};
		    \path[->]
	    		(i) edge node {$a$} (qk1)
    		    (i) edge [swap] node {$b$} (1)
    		    (qk) edge [swap] node {$a$} (qk1)
    		    (qk) edge node {$b$} (1)
    		    (qk1) edge node {$a$} (qk2)
    		    (q3) edge node {$a$} (q2)
    		    (q2) edge node {$a$} (q1)
    		    (q1) edge [swap] node {$a$} (qk)
	   			(q1) edge node {$b$} (2)
	   			(2) edge node {$b$} (3)
	   			(1) edge [swap] node {$a$} (3)
	   			(3) edge [loop above] node {$b$} ()
			;
			\draw[densely dashed] (qk2) -- (q3);
		\end{tikzpicture}	
	\end{minipage}
	\hfill
	\begin{minipage}[b]{.48\linewidth}
		\centering	
		\begin{tikzpicture}[transform shape, scale=0.9]
			\tikzstyle{every state}=[rounded rectangle, rounded rectangle arc length=90, inner sep=0.5mm, minimum size=5mm]			
			\node[state, initial] (i) {$i$};
			\node[state, below of=i] (1) {$1$};
		    \node[state, above of=i] (qk1) {$q_{k-1}$};
    		\node[state, right of=qk1] (qk2) {$q_{k-2}$};
    		\node[state, right of=qk2, xshift=0.5cm] (qa3) {$q_3$};
    		\node[state, right of=q3] (qa2) {$q_2$};
    		\node[state, below of=q2] (qa1) {$q_1$};
    		\node[state, below of=q1] (2) {$2$};
    		\node[state, accepting, left of=2] (3) {$3$};
	   		\path[->]
    		    (i) edge node {$a$} (qk1)
    		    (i) edge [swap] node {$b$} (1)
    		    (qk1) edge node {$a$} (qk2)
    		    (q3) edge node {$a$} (q2)
    		    (q2) edge node {$a$} (q1)
		    	(q1) edge [bend left=10, swap] node {$aa$} (qk1)
    		    (q1) edge [bend right=10] node {$ab$} (1)
   				(q1) edge node {$b$} (2)
	   			(2) edge node {$b$} (3)
	   			(1) edge [swap] node {$a$} (3)
   				(3) edge [loop above] node {$b$} ()
			;
			\draw[densely dashed] (qk2) -- (q3);
		\end{tikzpicture}
	\end{minipage}	
	\caption{The standardization of $M_k$ followed by the state elimination of $q_k$}
	\label{fg:ContreEx}
\end{figure}

	However, Theorem~\ref{th:HW} is still correct since we can give proper details about the proof with our own parameterized family of languages.
Let $k \in \mathbb{N} \setminus \{0\}$ be an integer and $A_k = (\Sigma, Q_k, I_k, F_k, \delta_k)$ be the automaton (given in Figure~\ref{fg:Ak}) such that:
	\begin{itemize}
		\item $\Sigma = \{a, b, c\}$
		\item $Q_k = \{f\} \cup \{\alpha_j, \beta_j \mid 1 \leq j \leq k\}$
		\item $I_k = \{\beta_k\}$
		\item $F_k = \{f\} \cup \{\alpha_k, \beta_k\}$
		\item $\delta_k = \Delta_k \cup \Gamma_k$ with:
			\begin{itemize}
				\item $\Delta_k = \{(\beta_k, a, \alpha_k), (\beta_1, b, f), (\alpha_k, a, \alpha_k), (\alpha_1, b, f), (\alpha_1, c, \beta_k)\}$
				\item $\Gamma_k = \{(\alpha_j, b, \alpha_{j-1}), (\beta_j, b, \beta_{j-1}) \mid 2 \leq j \leq k\}$
			\end{itemize}
	\end{itemize}	
	
\begin{figure}[H]
	\centering
	
	\begin{tikzpicture}
		\tikzstyle{every state}=[rounded rectangle, rounded rectangle arc length=90, inner sep=0.5mm, minimum size=5mm]
	
		\node (anchor) {};	
		\node[state, initial, accepting, above right of=anchor] (bk) {$\beta_k$};
		\node[state, right of=bk] (bk1) {$\beta_{k-1}$};
		\node[state, right of=bk1] (bk2) {$\beta_{k-2}$};
		\node[state, right of=bk2, xshift=0.5cm] (b2) {$\beta_2$};
		\node[state, right of=b2] (b1) {$\beta_1$};
	    \node[state, accepting, below right of=anchor] (ak) {$\alpha_k$};
		\node[state, right of=ak] (ak1) {$\alpha_{k-1}$};
		\node[state, right of=ak1] (ak2) {$\alpha_{k-2}$};
		\node[state, right of=ak2, xshift=0.5cm] (a2) {$\alpha_2$};
		\node[state, right of=a2] (a1) {$\alpha_1$};
    		\node[state, accepting, below right of=b1] (f) {$f$};
	    \path[->]
		    (bk) edge node {$b$} (bk1)
		    (bk) edge [swap] node {$a$} (ak)
		    (bk1) edge node {$b$} (bk2)
		    (bk2) edge [densely dashed] node {} (b2)
	   	 	(b2) edge node {$b$} (b1)
	  	  	(b1) edge node {$b$} (f)
   			(ak) edge [loop left] node {$a$} ()
   			(ak) edge [swap] node {$b$} (ak1)
   			(ak1) edge [swap] node {$b$} (ak2)
   			(ak2) edge [densely dashed] node {} (a2)
		    	(a2) edge [swap] node {$b$} (a1)
		    	(a1) edge [swap] node {$b$} (f)
		   	(a1) edge [out=135, in=315, swap] node {$c$} (bk)
		;
	\end{tikzpicture}
	
	\caption{The deterministic automaton $A_k$}
	\label{fg:Ak}
\end{figure}	

	First of all, let us notice that the word $b^j\in \mathrm{L}(A_k)$ if and only if $j=k$.
	Thus, for all $k \neq k'$, $\mathrm{L}(A_k) \neq \mathrm{L}(A_{k'})$. Furthermore,

\begin{proposition}
	$\forall k \in \mathbb{N} \setminus \{0\}$, $\mathrm{L}(A_k)$ is $k$-block deterministic.
\end{proposition}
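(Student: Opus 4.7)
The plan is to apply Theorem~\ref{th:KBD}: I will exhibit a $(\{a,b,c\}, k)$-block deterministic automaton $K$ equivalent to $A_k$ that is the alphabetic image of a deterministic automaton passing the BKW test.

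The first step is to obtain $K$ from $A_k$ by state elimination. Each state $\beta_j$ and $\alpha_j$ with $1 \leq j \leq k-1$ is non-initial, non-final, and has no self-loop; moreover, the subgraph induced on these $2(k-1)$ states is a disjoint union of two acyclic $b$-chains ($\beta_{k-1} \to \cdots \to \beta_1$ and $\alpha_{k-1} \to \cdots \to \alpha_1$), so they may all be eliminated together. The resulting $K$ has states $\{\beta_k, \alpha_k, f\}$, initial $\beta_k$, final $\{\beta_k, \alpha_k, f\}$, and transitions $\beta_k \xrightarrow{a} \alpha_k$, $\beta_k \xrightarrow{b^k} f$, $\alpha_k \xrightarrow{a} \alpha_k$, $\alpha_k \xrightarrow{b^k} f$, $\alpha_k \xrightarrow{b^{k-1}c} \beta_k$, with every block of length at most $k$.

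Next I would check that $K$ is $k$-block deterministic by verifying the prefix condition locally: the outgoing block labels at each state are pairwise prefix-incomparable, the only non-trivial case being $b^k$ against $b^{k-1}c$ at $\alpha_k$, which differ in their $k$-th letter. I then exhibit $K = \varphi(D)$ where $\varphi : \{x, y, z\} \to B_{\{a,b,c\}, k}$ is the injection $x \mapsto a$, $y \mapsto b^k$, $z \mapsto b^{k-1}c$, and $D$ is the deterministic automaton with the same skeleton as $K$ but the fresh labels $x, y, z$.

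The main step is verifying that $D$ passes the BKW test. The orbits of $D$ are the strongly connected component $O = \{\alpha_k, \beta_k\}$ and the trivial orbit $\{f\}$; since $f$ has no outgoing transition, no symbol is $D$-consistent, so $S = \emptyset$ and the orbit property must be checked on $D$ itself. It holds because both out-gates $\alpha_k, \beta_k$ are final and share the unique external transition $y$ to $f$. For the two orbit languages $\mathrm{L}(D_{\beta_k})$ and $\mathrm{L}(D_{\alpha_k})$, one recursive application of BKW finishes the argument: inside $O$ the symbol $x$ is consistent (both states send $x$ to $\alpha_k$), and the $\{x\}$-cut leaves only the edge $\alpha_k \xrightarrow{z} \beta_k$, which breaks $O$ into two trivial orbits whose orbit languages are $\{\varepsilon\}$. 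Hence $D$ passes the BKW test and Theorem~\ref{th:KBD} yields that $\mathrm{L}(K) = \mathrm{L}(A_k)$ is $k$-block deterministic. The only slightly delicate point is this recursive BKW verification; the rest is bookkeeping around the state elimination.
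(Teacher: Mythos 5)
Your proof is correct and follows essentially the same route as the paper: eliminate the states $\{\alpha_j,\beta_j \mid 1 \leq j \leq k-1\}$ to obtain the three-state $k$-block deterministic automaton of Figure~\ref{fg:bk}, then invoke Theorem~\ref{th:KBD}. The only difference is that you spell out the BKW verification on the relabelled deterministic automaton (consistent symbol $x$, $\{x\}$-cut breaking the orbit into trivial ones), which the paper dismisses with ``obviously $B_k$ respects the conditions'' and instead offers the explicit expression $(a(\varepsilon + [b^{k-1}c]))^*(\varepsilon + [b^k])$ as a witness.
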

\begin{proof}
	By construction, for all $k$, $A_k$ is trimmed and deterministic.
	So, any automaton that we can get from eliminating states such that the state elimination precondition is respected is a block deterministic automaton. 

	For any integer $k$ in $\mathbb{N} \setminus \{0\}$, we can eliminate the set of states $\{\alpha_j, \beta_j \mid 1 \leq j \leq k - 1\}$ because none of these states are initial or final and their induced subgraph is acyclic.	
	Thus, we can get a $k$-block deterministic automaton $B_k$, such that $\mathrm{L}(B_k) = \mathrm{L}(A_k)$, shown in Figure~\ref{fg:bk}. 	
	Obviously $B_k$ respects the conditions stated in Theorem~\ref{th:KBD}, so $\mathrm{L}(A_k)$ is $k$-block deterministic. Furthermore, it can be checked that $\mathrm{L}(A_k)$ is specified by the $k$-block deterministic regular expression $(a(\varepsilon + [b^{k-1}c]))^*(\varepsilon + [b^k])$.
\end{proof}

\begin{figure}[H]
	\centering
	
	\begin{tikzpicture}
		\tikzstyle{every state}=[rounded rectangle, rounded rectangle arc length=90, inner sep=0.5mm, minimum size=5mm]
	
		\node (anchor) {};	
		\node[state, initial, accepting, above right of=anchor] (bk) {$\beta_k$};
		\node[state, accepting, below right of=anchor] (ak) {$\alpha_k$};
    	\node[state, accepting, below right of=bk] (f) {$f$};
	    \path[->]
		    (bk) edge [bend left=25] node {$b^k$} (f)
		    (bk) edge [bend left=10] node {$a$} (ak)
   			(ak) edge [loop left] node {$a$} ()
   			(ak) edge [bend right=25, swap] node {$b^k$} (f)
	    	(ak) edge [bend left=10] node {$b^{k-1}c$} (bk)
		;
	\end{tikzpicture}
	
	\caption{The $k$-block deterministic automaton $B_k$}
	\label{fg:bk}
\end{figure}		

Finally, let us show that the index cannot be reduced:

\begin{proposition}
	$\forall k \in \mathbb{N} \setminus \{0, 1\}$, $\mathrm{L}(A_k)$ is not $(k-1)$-block deterministic.
\end{proposition}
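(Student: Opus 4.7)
The plan is a proof by contradiction grounded in Theorem~\ref{th:KBD}. Suppose $\mathrm{L}(A_k)$ is $(k-1)$-block deterministic; then there are an alphabet $\Pi$, an injection $\phi : \Pi \to B_{\Sigma, k-1}$, a deterministic automaton $H$ over $\Pi$ passing the BKW test, and $K = \phi(H)$ recognising $\mathrm{L}(A_k)$. Since $\phi$ is an injection, $K$ and $H$ share the same transition graph, orbits, final states, and pass the orbit property jointly. Moreover, the state $p_f$ of $K$ reached from the initial state after reading $b^k$ has right language $\{\varepsilon\}$, is final, and has no outgoing transitions. Consequently no block is $K$-consistent (equivalently no symbol is $H$-consistent), so the $S$-cut in the BKW test is trivial and the test reduces to requiring $K$ (hence $H$) to satisfy the orbit property outright.

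I will refute this by analysing the state $s_a$ reached in $K$ after the block $a$; it is unique because $\lvert a\rvert=1$ forces the block decomposition of the letter $a$ to be the single block $a$. Its right language is $\mathrm{L}_{\alpha_k}(A_k) = a^*(\varepsilon + b^k + b^{k-1}c\,\mathrm{L}(A_k))$. Because $\lvert b^k\rvert = \lvert b^{k-1}c\rvert = k > k-1$, neither word fits into a single block, and both unique decompositions from $s_a$ begin with pure $b$-blocks of length at most $k-1$; prefix-freeness of outgoing block labels forces these first blocks to coincide. Iterating this observation shows that the two decompositions share a common sequence of $b$-blocks and then diverge at a state $s^*$, from which one outgoing block (of the form $b^{k-J}$ or $b$) reaches $p_f$ while another (of the form $b^{k-1-J}c$ or $c$) reaches a state $s_r$ with right language $\mathrm{L}(A_k)$. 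All intermediate states on the shared $b$-chain, and in particular $s^*$, are non-final because their right languages $\{b^{k-J}\} \cup b^{k-J-1}c\,\mathrm{L}(A_k)$ lack $\varepsilon$.

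Finally, by finiteness of $K$, the sequence of post-$a$ states visited while iterating ``reach a restart, then read $a$'' must eventually revisit some post-$a$ state, producing a non-trivial orbit $O$ of $K$ that contains a final post-$a$ state $p$ (right language $\mathrm{L}_{\alpha_k}(A_k)$), its restart successor, and, by applying the analysis above at $p$, a non-final split state $\tilde s$ on the $b$-chain from $p$. Inside $O$, the state $p$ is an out-gate because it is final, while $\tilde s$ is an out-gate because its outgoing block of length $k-J$ targets $p_f$, which forms a trivial orbit disjoint from $O$. These two out-gates disagree on finality, so the orbit property of $O$ fails, contradicting the BKW test on $H$ and thereby establishing that $\mathrm{L}(A_k)$ is not $(k-1)$-block deterministic. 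The main obstacle lies in this last step, namely in certifying that $\tilde s$ genuinely sits in the same strongly connected component as $p$: it requires tracking which states of $K$ are reused across successive iterations of $(ab^{k-1}c)^*$ and applying pigeonhole to the finitely many states of $K$ to close the cycle through both $p$ and $\tilde s$.
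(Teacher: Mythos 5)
Your overall strategy is sound and close in spirit to the paper's: both argue by contradiction through Theorem~\ref{th:KBD}, use a pigeonhole argument on the states visited along $(ab^{k-1}c)^*$ to produce a non-trivial orbit, and then exhibit a violation of the orbit property. The difference lies in which violation is exhibited. The paper places two \emph{final} out-gates $\alpha,\beta$ (with right languages $\mathrm{L}_{\alpha_k}(A_k)$ and $\mathrm{L}_{\beta_k}(A_k)$) in a common orbit and shows that their $[b^j]$-successors outside the orbit cannot coincide (one has finite, the other infinite right language); you instead aim for a final out-gate $p$ and a non-final out-gate $\tilde s$ in the same orbit, so that the finality clause of the orbit property fails. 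Your variant is legitimate and, once set up, gives the cleaner contradiction.

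The problem is that the crux of your set-up, namely that the split state $\tilde s$ lies in the same strongly connected component as $p$, is precisely the step you leave open; as written, the argument stops short of a proof at its decisive point. The gap is closable. Let $p_0=s_a$ and $p_{n+1}=\delta(p_n,b^{k-1}ca)$. These states are well defined because $b^{k-1}c\in\mathrm{L}_{\alpha_k}(A_k)$ forces a block boundary exactly after the $c$ (the accepting run on the word $b^{k-1}c$ alone must end a block there, and prefix-freeness then forces every word $b^{k-1}cw$ to reuse that block), and $a\in\mathrm{L}(A_k)$ forces a block boundary after the $a$. This point also needs saying explicitly: nothing guarantees a priori that block boundaries align with the period of $(ab^{k-1}c)^*$, and your phrase ``reach a restart, then read $a$'' silently assumes that the block carrying the $c$ does not spill over into the following $a$. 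By pigeonhole $p_n=p_m$ for some $n<m$, and the run from $p_n$ on $(b^{k-1}ca)^{m-n}$ passes through the split state $\tilde s_n$ on its first $b$-chain before returning to $p_n$; hence $p_n$ and $\tilde s_n$ lie on a common cycle and $\mathrm{O}(\tilde s_n)=\mathrm{O}(p_n)$. With that inserted, your argument goes through. The paper avoids this extra work because the two states it needs in a common orbit are both on the backbone of the pigeonhole cycle by construction, whereas your $\tilde s$ is an interior state of a block path whose membership in the orbit has to be argued separately.
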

\begin{proof}
	Let $B = (B_{\Sigma, k-1}, Q_B, \{i_B\}, F_B, \delta_B)$ be a $(k-1)$-block deterministic automaton equivalent to $A_k$.
	
	We first show that there exists a non-trivial orbit $O \subset Q_B$ and two states $\alpha, \beta \in O$ such that $\mathrm{L}_{\alpha}(B) = \mathrm{L}_{\alpha_k}(A_k)$ and $\mathrm{L}_{\beta}(B) = \mathrm{L}_{\beta_k}(A_k)$.
	Let us consider the following state sequences: $(\alpha_{k, j})_{j \in \mathbb{N}} \subset F_B$ and $(\beta_{k, j})_{j \in \mathbb{N}} \subset F_B$, such that $\beta_{k, 0} = i_B$, $\delta_B(\beta_{k, j}, a) = \alpha_{k, j}$ and $\delta_B(\alpha_{k, j}, b^{k-1}c) = \beta_{k, j+1}$.
	It follows that $\delta_B(i_B, (ab^{k-1}c)^j) = \beta_{k, j}$ and $\delta_B(i_B, (ab^{k-1}c)^{j}a) = \alpha_{k, j}$.
	Notice that the existence of $\alpha_{k, j}$ and $\beta_{k, j}$ is ensured by the fact that $\mathrm{L}(B) = \mathrm{L}(A_k)$.
	Let us suppose that there exists $j \in \mathbb{N}$ such that $\mathrm{L}_{\beta_{k, j}}(B) \neq \mathrm{L}_{\beta_k}(A_k)$.
	Then there exists $w \in \Sigma^*$ such that $w \in \mathrm{L}_{\beta_{k, j}}(B) \bigtriangleup \mathrm{L}_{\beta_k}(A_k)$, where for any two sets $X$ and $Y$, $X\bigtriangleup Y=(X\setminus Y) \cup (Y\setminus X)$.
	And since $\delta_k(\beta_k, (ab^{k-1}c)^j) = \beta_k$, $(ab^{k-1}c)^j \cdot w \in \mathrm{L}(B) \bigtriangleup \mathrm{L}(A_k)$.
	Thus, $\mathrm{L}(B) \neq \mathrm{L}(A_k)$ which is contradictory.
	So, for every $j \in \mathbb{N}$, we have $\mathrm{L}_{\beta_{k, j}}(B) = \mathrm{L}_{\beta_k}(A_k)$.
	The proof that for every $j \in \mathbb{N}$, we have $\mathrm{L}_{\alpha_{k, j}}(B) = \mathrm{L}_{\alpha_k}(A_k)$, is done in the same way.
	Now, let us suppose that for every $j \neq j' \in \mathbb{N}$, we have $\alpha_{k, j} \neq \alpha_{k, j'}$ and $\beta_{k, j} \neq \beta_{k, j'}$.
	Then $Q_B$ would be infinite, which would contradict the fact that $B$ is a finite automaton.
	So, there exist $j < j' \in \mathbb{N}$ such that $\alpha_{k, j} = \alpha_{k, j'}$ or $\beta_{k, j} = \beta_{k, j'}$.
	Thus, either there exists a path going from $\beta_{k, j}$ to $\alpha_{k, j}$ and a path going from $\alpha_{k, j}$ to $\beta_{k, j'} = \beta_{k, j}$, and $\beta_{k, j}$ and $\alpha_{k, j}$ belong to the same orbit; or there exists a path going from $\alpha_{k, j}$ to $\beta_{k, j + 1}$ and a path going from $\beta_{k, j + 1}$ to $\alpha_{k, j'} = \alpha_{k, j}$, and $\alpha_{k, j}$ and $\beta_{k, j + 1}$ belong to the same orbit.
	
	Finally, let us focus on such an orbit $O$ with two out-gates $\alpha$ and $\beta$ such that $\mathrm{L}_{\alpha}(B) = \mathrm{L}_{\alpha_k}(A_k)$ and $\mathrm{L}_{\beta}(B) = \mathrm{L}_{\beta_k}(A_k)$.
	We know that for every $i \in \mathbb{N}$ such that $1 \leq i < k$, we have $\delta_k(\beta_k, b^i) = \beta_{k-i}$ with $|\mathrm{L}_{\beta_{k-i}}(A_k)| < \infty$.
	Since $\mathrm{L}_{\beta}(B) = \mathrm{L}_{\beta_k}(A_k)$ and $B$ is $(k-1)$-block deterministic, there exist $j \in \mathbb{N}$ and $p \in Q_B$ such that $1 \leq j < k$, $\delta_B(\beta, [b^j]) = p$ and $\mathrm{L}_p(B) = \mathrm{L}_{\beta_{k-j}}(A_k)$.
	This means that $|\mathrm{L}_p(B)| < \infty$, so $p \notin O$.
	Now, if there does not exist a state $q \in Q_B$ such that $\delta_B(\alpha, [b^j]) = q$, then $B$ does not have the orbit property.
	So, let us suppose that such a state exists.
	We know that for every $i \in \mathbb{N}$ such that $1 \leq i < k$, we have $\delta_k(\alpha_k, b^i) = \alpha_{k-i}$ with $|\mathrm{L}_{\alpha_{k-i}}(A_k)| = \infty$.
	Since $\mathrm{L}_{\alpha}(B) = \mathrm{L}_{\alpha_k}(A_k)$, we have $\mathrm{L}_q(B) = \mathrm{L}_{\alpha_{k-j}}(A_k)$ and $|\mathrm{L}_q(B)| = \infty$.
	So $p \neq q$ and $B$ does not have the orbit property.
	
	Since $\mathrm{L}(A_k)$ cannot be recognized by a $(k-1)$-block deterministic alphabetic image of an automaton passing the BKW test, following Theorem \ref{th:KBD} it holds that $\mathrm{L}(A_k)$ is not $(k-1)$-block deterministic.
\end{proof}

\section{A Witness for the Infinite Hierarchy of Lookahead Deterministic Languages}\label{se:kld_witness}

	In this section, we give a parameterized family $(\mathrm{L_j})_{j \geq 1}$ such that $\mathrm{L_j}$ is $(j+1)$-lookahead deterministic but not $j$-lookahead deterministic.
	In order to prove it, we show that any $j$-lookahead deterministic Glushkov automaton does not recognize $\mathrm{L_j}$.

	Let $j \in \mathbb{N}$ and let $A_j = (\Sigma, Q_j, I, F_j, \delta_j)$ be the automaton (given in Figure \ref{fg:Aj}) such that:
	\begin{itemize}
		\item $\Sigma = \{a\}$
		\item $Q_j = \{\alpha_i \mid 0 \leq i \leq 2j\}$
		\item $I = \{\alpha_0\}$
		\item $F_j = \{\alpha_0, \alpha_j\}$
		\item $\delta_j = \{(\alpha_i, a, \alpha_{i+1}) \mid 0 \leq i < 2j\} \cup \{(\alpha_{2j}, a, \alpha_0)\}$
	\end{itemize}

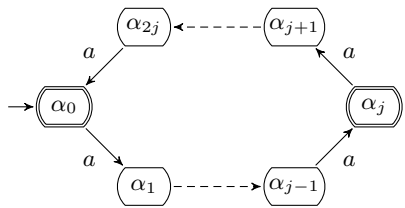
\begin{figure}[H]
	\centering
	
	\begin{tikzpicture}
		\node[state, initial, accepting] (a0) {$\alpha_0$};
		\node[state, below right of=a0] (a1) {$\alpha_1$};
		\node[state, right of=a1, xshift=0.5cm] (ajm1) {$\alpha_{j-1}$};
		\node[state, accepting, above right of=ajm1] (aj) {$\alpha_j$};
	    \node[state, above left of=aj] (ajp1) {$\alpha_{j+1}$};
		\node[state, left of=ajp1, xshift=-0.5cm] (a2j) {$\alpha_{2j}$};
	    \path[->]
		    (a0) edge [swap] node {$a$} (a1)
		    (a1) edge [densely dashed, swap] node {} (ajm1)
		    (ajm1) edge [swap] node {$a$} (aj)
		    (aj) edge [swap] node {$a$} (ajp1)
		    (ajp1) edge [densely dashed, swap] node {} (a2j)
	   	 	(a2j) edge [swap] node {$a$} (a0)
		;
	\end{tikzpicture}
	
	\caption{The minimal deterministic automaton $A_j$}
	\label{fg:Aj}
\end{figure}	

Let us first show that the languages in the family are distinct and satisfy the condition of lookahead determinism.

\begin{proposition}
	$\forall j \in \mathbb{N}$, $A_j$ is a minimal deterministic automaton.
\end{proposition}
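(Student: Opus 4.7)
The plan is to verify three properties in turn: that $A_j$ is deterministic, that it is trimmed, and that no two of its states are equivalent (so it is minimal among DFAs recognizing $\mathrm{L}(A_j)$).

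First, for determinism, observe directly from the definition of $\delta_j$ that the alphabet is unary, the initial set $I$ is a singleton, and every state $\alpha_i$ has exactly one outgoing transition labelled $a$ (either to $\alpha_{i+1}$ if $i<2j$, or to $\alpha_0$ if $i=2j$). For trimmedness, the transition structure makes $A_j$ a single directed cycle on the $2j+1$ states, on which $\alpha_0 \in I \cap F_j$ sits, so every state is reachable from $\alpha_0$ and can reach $\alpha_0 \in F_j$.

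Next, I would compute the right languages explicitly from the cycle structure: since $\delta_j(\alpha_i, a^n) = \alpha_{(i+n) \bmod (2j+1)}$, and the final states are exactly $\alpha_0$ and $\alpha_j$, we have
\[
\mathrm{L}_{\alpha_i}(A_j) \;=\; \{\,a^n \mid n \equiv -i \pmod{2j+1}\,\} \,\cup\, \{\,a^n \mid n \equiv j-i \pmod{2j+1}\,\}.
\]
To conclude minimality, I would show that for all $0 \le i < i' \le 2j$ some word distinguishes $\alpha_i$ and $\alpha_{i'}$. Split into two cases. If $i' \not\equiv i+j \pmod{2j+1}$, then taking $n$ to be the residue of $-i$ modulo $2j+1$ sends $\alpha_i$ to $\alpha_0 \in F_j$, whereas $\alpha_{i'}$ is sent to $\alpha_{i'-i \bmod (2j+1)}$, which is final only if $i'\equiv i$ or $i'\equiv i+j$, both excluded; so $a^n$ distinguishes them. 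If instead $i' \equiv i+j \pmod{2j+1}$, take $n$ to be the residue of $j-i$; then $\alpha_i$ lands on $\alpha_j \in F_j$, while $\alpha_{i'}$ lands on $\alpha_{(i'+j-i) \bmod (2j+1)} = \alpha_{2j \bmod (2j+1)} = \alpha_{2j}$, which is non-final, so again the two states are distinguished.

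The only subtle point is the second case above, which depends on the fact that $2j \not\equiv 0 \pmod{2j+1}$ for $j \ge 1$ so that the two acceptance residues $\{-i, j-i\}$ and $\{-i', j-i'\}$ cannot coincide as sets when $i \neq i'$; this is the one place the minimality argument hinges on the specific position of the second final state $\alpha_j$ on the cycle of length $2j+1$, and I expect it to be the main obstacle to watch out for when writing the details cleanly.
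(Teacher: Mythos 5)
Your proof is correct and follows the same overall strategy as the paper (check determinism and trimmedness directly, then show minimality by distinguishing right languages along the cycle), but it is actually more complete: the paper only exhibits $a^j$ as a word separating $\alpha_0$ from $\alpha_j$ and then merely asserts that the non-final states are likewise pairwise inequivalent, whereas you carry out the full case analysis on the residue sets $\{-i,\, j-i\} \pmod{2j+1}$, correctly isolating the key fact that $2j \not\equiv 0 \pmod{2j+1}$ for $j \ge 1$. No gaps; your version could be substituted for the paper's proof without loss.
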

\begin{proof}
	By construction, $\forall j \in \mathbb{N}$, $A_j$ is trimmed and deterministic.
	Then, if $j = 0$, there is only one state which is initial and final, so $A_0$ is minimal.
	Otherwise, if $j > 0$, $F_j = \{\alpha_0, \alpha_j\}$ such that $\{a^j\} \in \mathrm{L}_{\alpha_0}(A_j)$ and $\{a^j\} \notin \mathrm{L}_{\alpha_j}(A_j)$.
	Thus, $\alpha_0$ and $\alpha_j$ are not equivalent, and so are every non final states.
	Therefore, for every $j \in \mathbb{N}$, $A_j$ is also minimal.
\end{proof}

Thus, for all $j \neq j'$, since $|Q_j| \neq |Q_{j'}|$, then $\mathrm{L}(A_j) \neq \mathrm{L}(A_{j'})$.
Furthermore,

\begin{proposition}\label{prop:Aj_jp1_ld}
	$\forall j \in \mathbb{N}$, $\mathrm{L}(A_j)$ is $(j+1)$-lookahead deterministic.
\end{proposition}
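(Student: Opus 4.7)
The plan is to exhibit an explicit $(j{+}1)$-lookahead deterministic regular expression $E_j$ whose language is $\mathrm{L}(A_j)$, and then verify the lookahead condition directly on its Glushkov automaton. Observing that $\mathrm{L}(A_j) = \{a^n \mid n \equiv 0 \text{ or } j \pmod{2j+1}\}$, I would take
\[
E_j \;=\; (a^j \cdot a^{j+1})^* \cdot (\varepsilon + a^j),
\]
so that $\mathrm{L}(E_j) = \{a^{(2j+1)n}\mid n \geq 0\} \cup \{a^{(2j+1)n+j}\mid n \geq 0\} = \mathrm{L}(A_j)$. The case $j = 0$ degenerates to $E_0 = a^*$, which is one-unambiguous and hence $1$-lookahead deterministic, so I may assume $j \geq 1$.

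Next I would describe $G_{E_j}$ explicitly. After marking, $E_j^{\sharp}$ has $3j+1$ positions, which I would split into $q_1, \ldots, q_{2j+1}$ for the iterated block and $q_{2j+2}, \ldots, q_{3j+1}$ for the trailing $a^j$. A direct computation of $\mathrm{First}$, $\mathrm{Last}$ and $\mathrm{Follow}$ on $E_j^{\sharp}$ yields a linear chain $q_1 \to q_2 \to \cdots \to q_{2j+1}$, a linear chain $q_{2j+2} \to \cdots \to q_{3j+1}$, follow-transitions from $q_{2j+1}$ to both $q_1$ (restart of the star) and $q_{2j+2}$ (entry into the tail), and, from the unique initial state $i$, two transitions to $q_1$ and to $q_{2j+2}$. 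The final states are $i$, $q_{2j+1}$ and $q_{3j+1}$, reflecting nullability of $E_j$.

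The only states with more than one outgoing $a$-transition are $i$ and $q_{2j+1}$, and in each case the two targets are exactly $q_1$ and $q_{2j+2}$. To establish $(j{+}1)$-lookahead determinism it therefore suffices to show that for every $w \in \Sigma^{j}$, at least one of $\delta_{E_j}(q_1, w)$ and $\delta_{E_j}(q_{2j+2}, w)$ is empty. As $\Sigma = \{a\}$, the only such $w$ is $a^j$; starting from $q_{2j+2}$ the successive $a$-transitions reach $q_{2j+3}, \ldots, q_{3j+1}$ after $j-1$ steps, and since $q_{3j+1}$ has no outgoing transition we obtain $\delta_{E_j}(q_{2j+2}, a^j) = \emptyset$, as required.

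The main obstacle is no more than careful bookkeeping: one must check the $\mathrm{Follow}$ values precisely (in particular the two targets of $q_{2j+1}$) and handle the edge case $j = 0$ separately. No further technical insight is required beyond applying the definitions of the Glushkov automaton and of $k$-lookahead determinism.
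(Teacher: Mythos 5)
Your proof is correct and follows essentially the same route as the paper: the paper also exhibits the expression $E_j = (a^{2j+1})^* \cdot (\varepsilon + a^j)$ (identical to your $(a^j \cdot a^{j+1})^* \cdot (\varepsilon + a^j)$) and asserts it is $(j+1)$-lookahead deterministic, whereas you additionally spell out the Glushkov automaton and verify the lookahead condition on the two branching states, which the paper leaves implicit.
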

\begin{proof}
	Let us consider the regular expression $E_j = (a^{2j+1})^* \cdot (\varepsilon + a^j)$.
	Then $E_j$ is $(j+1)$-lookahead deterministic
	And since the minimal deterministic automaton recognizing $\mathrm{L}(E_j)$ is isomorphic to $A_j$, then $\mathrm{L}(A_j) = \mathrm{L}(E_j)$.
	So $\mathrm{L}(A_j)$ is $(j+1)$-lookahead.
\end{proof}

	Let $j' \in \mathbb{N} \setminus \{0\}$ and let $G = (\{a\}, Q_G, {i_G}, F_G, \delta_G)$ be a $j'$-lookahead deterministic Glushkov automaton.
	We demonstrate that $G$ cannot recognized $\mathrm{L}(A_{j'})$, that is to say that $\mathrm{L}(A_{j'})$ is not $j'$-lookahead deterministic.
 
	In order to do so, we consider a property of Glushkov automata from Proposition 4.2 of \cite{CZ97}:
\begin{lemma}\label{lm:CZ_stability}
	Let $O \in \mathcal{O}_G$ be a non-trivial orbit of $G$.
	Then for every in-gate $o_i$ of $O$ and for every out-gate $o_o$ of $O$, $o_i \in \delta(o_o, a)$.
\end{lemma}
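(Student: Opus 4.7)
The plan is to derive this stability property directly from the Glushkov construction by identifying each non-trivial orbit with a starred subexpression, exactly as in the structural characterisation of Glushkov automata by Caron and Ziadi.

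Fix a regular expression $E$ with $G = G_E$, and let $O$ be a non-trivial orbit of $G$. By Proposition~4.2 of~\cite{CZ97}, such an $O$ is forced to originate from a subexpression of $E^{\sharp}$ of the form $F^{*}$ whose position set coincides with the state set of $O$, whose positions in $\mathrm{First}(F)$ are exactly the in-gates of $O$, and whose positions in $\mathrm{Last}(F)$ are exactly the out-gates of $O$. I would verify this correspondence by a compositional inspection of the Glushkov construction: only starred subexpressions introduce back-edges from $\mathrm{Last}$ to $\mathrm{First}$, and these back-edges are precisely what creates the cycles of a non-trivial orbit; any state of $O$ reached from outside $O$ must be entered through a position of $\mathrm{First}(F)$, and any state of $O$ that leaves $O$ (or is final) must lie in $\mathrm{Last}(F)$.

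With this identification in hand, I would unfold the inductive definition of the follow function: the Glushkov rule for $F^{*}$ yields $\mathrm{Follow}(F^{*}, x) \supseteq \mathrm{First}(F)$ for every $x \in \mathrm{Last}(F)$. Translating back to transitions on the single-letter alphabet $\{a\}$ via Definition~\ref{def:Glushkov}, every out-gate $o_o$ has an $a$-transition to every position of $\mathrm{First}(F)$, i.e., to every in-gate $o_i$ of $O$, which is exactly the claim $o_i \in \delta(o_o, a)$. The main obstacle is the first step, namely rigorously pinning down the correspondence between the orbit $O$ and the starred subexpression $F^{*}$; once that is settled, the conclusion is a one-line consequence of the Glushkov follow rule and does not depend on $|\Sigma|$ being equal to one, although the unary setting of the surrounding section is what makes the rephrasing $o_i \in \delta(o_o, a)$ canonical.
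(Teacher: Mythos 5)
The paper does not actually prove this lemma: it is stated as a direct import of Proposition~4.2 of~\cite{CZ97}, with no further argument. Your reconstruction rests on exactly that same structural result (non-trivial orbits of a Glushkov automaton are the position sets of starred subexpressions, with in-gates in $\mathrm{First}(F)$ and out-gates in $\mathrm{Last}(F)$) and then correctly reads off the gate-to-gate transitions from the $\mathrm{Follow}(F^{*},\cdot)\supseteq\mathrm{First}(F)$ rule, so it is consistent with, and somewhat more explicit than, the paper's treatment.
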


	Let us first restrain the set of Glushkov automata to consider.
	We show that a state in a lookahead deterministic unary automaton cannot admit two distinct successors with infinite right languages.

\begin{proposition}\label{prop:klau_Lq_infinite}
	$\forall s \in Q_G, \forall q_1, q_2 \in \delta_G(s, a), (|\mathrm{L}_{q_1}(G)| = \infty \wedge 	|\mathrm{L}_{q_2}(G)| = \infty) \Longrightarrow q_1 = q_2$.
\end{proposition}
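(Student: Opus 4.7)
The plan is to argue by contradiction and exploit the very simple structure of paths in a unary automaton.

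Suppose $q_1 \neq q_2$ both lie in $\delta_G(s,a)$ and both have infinite right language. Since the transitions $(s,a,q_1)$ and $(s,a,q_2)$ are then distinct and share the same label, the $j'$-lookahead determinism of $G$ forces that for \emph{every} $w \in \{a\}^{j'-1}$, either $\delta_G(q_1,w) = \emptyset$ or $\delta_G(q_2,w) = \emptyset$. But there is only one such word, namely $a^{j'-1}$, so to obtain a contradiction it suffices to show that both $\delta_G(q_1, a^{j'-1})$ and $\delta_G(q_2, a^{j'-1})$ are nonempty.

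The key intermediate step is: on a unary alphabet, if $|\mathrm{L}_q(G)| = \infty$ then $\delta_G(q, a^n) \neq \emptyset$ for every $n \in \mathbb{N}$. This is immediate: pick any $a^m \in \mathrm{L}_q(G)$ with $m \geq n$; then there is a path of length $m$ from $q$ to some final state, and the state reached along this path after exactly $n$ steps witnesses $\delta_G(q,a^n) \neq \emptyset$. Applying this to $q_1$ and $q_2$ with $n = j'-1$ (noting that the case $j' = 1$ is trivial because $\delta_G(q_i,\varepsilon) = \{q_i\}$) produces the needed contradiction, so $q_1 = q_2$.

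There is essentially no obstacle here: the argument does not even need Lemma~\ref{lm:CZ_stability} or the Glushkov hypothesis, only the definition of $j'$-lookahead determinism and the fact that over a unary alphabet the unique word of length $j'-1$ must be blocked from one of the two successors. The only subtlety worth making explicit in the write-up is the case $j'=1$, which recovers ordinary determinism and is handled automatically by the same inequality on transitions.
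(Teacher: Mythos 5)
Your proof is correct and follows essentially the same route as the paper's: assume two distinct successors with infinite right languages, observe that over a unary alphabet each must then have a nonempty $a^{j'-1}$-successor set, and conclude that the lookahead-determinism condition is violated. The only difference is that you spell out the justification (arbitrarily long words in an infinite unary language) and the trivial case $j'=1$, which the paper leaves implicit with ``necessarily.''
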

\begin{proof}
	Let us suppose that there exist 3 states $q, q_1, q_2 \in Q_G$ such that $q_1 \neq q_2$, $|\mathrm{L}_{q_1}(G)| = \infty$, $|\mathrm{L}_{q_2}(G)| = \infty$ and ${q_1, q_2} \subset \delta_G(q, a)$.
	Then, necessarily, $\delta_G(q_1, a^{j'-1}) \neq \emptyset$ and $\delta_G(q_2, a^{j'-1}) \neq \emptyset$.
	Thus $G$ is not $j'$-lookahead deterministic.
\end{proof}

\begin{corollary}\label{prop:klau_orbite_non_triviale}
	Among the orbit of $G$, at most one is non-trivial.
\end{corollary}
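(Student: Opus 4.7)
The plan is to argue by contradiction using Proposition~\ref{prop:klau_Lq_infinite}. Assuming $G$ has two distinct non-trivial orbits $O_1$ and $O_2$, I aim to exhibit a single state of $G$ with two distinct $a$-successors both of infinite right language, which will contradict that proposition.

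Two preliminary facts are useful. First, every state of a non-trivial orbit lies on an $a$-cycle and hence has infinite right language (loop arbitrarily often, then exit to a final state, which exists since $G$ is trimmed); consequently, so does every state from which $O_1$ or $O_2$ is reachable. Second, since $G$ is trimmed both orbits are reachable from $i_G$, and a shortest path from $i_G$ into $O_1 \cup O_2$ reaches one of them, say $O_1$, without visiting the other; so at least one orbit is reachable from $i_G$ without crossing the other. The argument then splits into two cases. In the \emph{diverging} case where each orbit is reachable from $i_G$ without crossing the other, I would take shortest such paths $\pi_1 \colon i_G \to o_1 \in O_1$ and $\pi_2 \colon i_G \to o_2 \in O_2$, consider their last common state $s$, and observe that $s$ must carry two distinct $a$-successors along $\pi_1$ and $\pi_2$ (since $O_1 \cap O_2 = \emptyset$, neither path is a prefix of the other); each of these successors reaches a non-trivial orbit, hence each has infinite right language. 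In the \emph{downstream} case where, up to symmetry, every path from $i_G$ to $O_2$ crosses $O_1$, I would pick an out-gate $o$ of $O_1$ whose $a$-transition leaves $O_1$ toward a state $s \notin O_1$ on a path to $O_2$; Lemma~\ref{lm:CZ_stability} then yields an $a$-transition from $o$ to an in-gate $o' \in O_1$, and $s \neq o'$ since $s \notin O_1$ while $o' \in O_1$, producing the desired pair of distinct successors of infinite right language.

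The delicate step will be the downstream case: the path to $O_2$ itself exhibits no visible branching off $O_1$, so the required branching must be extracted from the internal structure of $O_1$. This is precisely where Lemma~\ref{lm:CZ_stability}, the Glushkov-specific stability property, does the work. The remainder of the argument is routine path comparison.
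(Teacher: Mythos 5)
Your proof is correct, and it follows the route the paper intends: the paper states this as an unproved corollary of Proposition~\ref{prop:klau_Lq_infinite}, and your two-case argument (branching point of two access paths in the diverging case; out-gate with both an internal successor forced by Lemma~\ref{lm:CZ_stability} and an external successor leading to the second orbit in the downstream case) is a valid elaboration of exactly that derivation. The downstream case in particular mirrors the combination of Proposition~\ref{prop:klau_Lq_infinite} and Lemma~\ref{lm:CZ_stability} that the paper itself uses to prove Proposition~\ref{prop:klau_portes_uniques}, so your proposal supplies detail the paper omits rather than diverging from it.
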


	Furthermore, let us suppose that $G$ has no non-trivial orbit, then $|\mathrm{L}(G)| < \infty$.
But, since for every $j' \in \mathbb{N} \setminus \{0\}$, we have $|\mathrm{L}(A_{j'})| = \infty$, then $G$ could not recognize $\mathrm{L}(A_{j'})$.
Thus, $G$ must have a single non-trivial orbit denoted by $O$, of size $l_O = |O|$.
Moreover, the gates of $O$ are remarkable:

\begin{proposition}\label{prop:klau_portes_uniques}
	$O$ has a single in-gate and a single out-gate.
\end{proposition}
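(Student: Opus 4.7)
The plan is to show that the non-trivial orbit $O$ is a simple cycle in the transition graph of $G$, and then to deduce the uniqueness of its in-gate and out-gate via Lemma \ref{lm:CZ_stability}.

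\emph{Step 1: $O$ is a simple cycle.} I would first prove that every state $r \in O$ has exactly one successor lying in $O$. Since $O$ is non-trivial, $r$ belongs to some cycle inside $O$, so it has at least one successor in $O$, and every such successor can loop forever along that cycle, hence has infinite right language in $G$. By Proposition \ref{prop:klau_Lq_infinite}, $r$ admits at most one successor with infinite right language, and thus at most one successor in $O$, giving exactly one. The subgraph induced by $O$ is therefore a functional graph of out-degree one on a finite, strongly connected vertex set, which forces it to be a single simple cycle of length $l_O$. In particular, each state of $O$ has both a unique $O$-successor and a unique $O$-predecessor.

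\emph{Step 2: uniqueness of the gates.} By Lemma \ref{lm:CZ_stability}, every in-gate is the $a$-successor of every out-gate, and by Step 1 this $a$-successor lies in $O$ and is unique; this forces at most one in-gate. At least one in-gate exists since $G$ is trimmed and $O$ is reachable from $i_G$ (either $i_G \in O$, or some transition enters $O$ from outside), yielding the unique in-gate $o_i$. Dually, every out-gate must be an $O$-predecessor of $o_i$, and $o_i$ has exactly one $O$-predecessor along the cycle, so all out-gates coincide. An out-gate exists since $G$ is trimmed, so every state of $O$ must reach $F_G$, which requires either a final state in $O$ or a transition leaving $O$.

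The main obstacle is Step 1: one must be careful that successors of $r$ lying \emph{outside} $O$ could themselves have infinite right language (by re-entering $O$ later), so Proposition \ref{prop:klau_Lq_infinite} is not directly a statement about $O$-membership. This does not spoil the argument, however, because the cycle structure already provides one guaranteed infinite-right-language successor of $r$ inside $O$, which exhausts the quota allowed by Proposition \ref{prop:klau_Lq_infinite} and thereby excludes any second $O$-successor of $r$.
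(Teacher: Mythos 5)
Your proof is correct and takes essentially the same route as the paper's: both arguments rest on Proposition~\ref{prop:klau_Lq_infinite} to force each state of $O$ to have a unique successor inside $O$, and on Lemma~\ref{lm:CZ_stability} to tie every in-gate to every out-gate. You merely make the simple-cycle structure of $O$ and the existence of the gates explicit, which the paper leaves implicit.
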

\begin{proof}
	The fact that there exists a single in-gate is also a direct consequence of Proposition \ref{prop:klau_Lq_infinite}.
	Furthermore, let us suppose that there exist 2 distinct states $g_1, g_2 \in \mathrm{G_{out}}(O)$.
	Let $o_{in} \in \mathrm{G_{in}}(O)$ be the single in-gate of $O$.
	Since $G$ is a Glushkov automaton, following Lemma~\ref{lm:CZ_stability}, $o_{in} \in \delta_G(g_1, a) \cap \delta_G(g_2, a)$.
	Consequently, there exists $s \in O$ such that $|\delta_G(s, a)| > 1$, contradicting Proposition~\ref{prop:klau_Lq_infinite}.
\end{proof}

	Consequently, we denote by $o_{in}$ the single in-gate of $O$ and by $o_{out}$ its single out-gate.

	As a corollary of Proposition \ref{prop:klau_Lq_infinite}, and since Glushkov automata are standard, then there exists a single state $s \in Q_G \setminus O$ such that $o_{in} \in \delta_G(s, a)$, which can be reached from the initial state by a single word $w_s$ such that $|w_s| = m$.
	This allows us to characterize the words reaching $o_{out}$ from $i$.

\begin{lemma}\label{lm:ooPathLength}
	$\forall w \in \{a\}^*, o_{out} \in \delta_G(i, w) \Longleftrightarrow \exists k \in \mathbb{N} \setminus \{0\}, |w| = m + k \times l_O$.
\end{lemma}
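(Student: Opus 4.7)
The plan is to pin down the structure of the orbit $O$ as a simple $a$-cycle of length $l_O$ and then combine this with the uniqueness of the entry point into $O$ to count paths from $i$ to $o_{out}$. I first argue that every state $q \in O$ has exactly one $a$-successor inside $O$: at most one, because any such successor lies in the non-trivial strongly connected component $O$ and therefore has an infinite right language (it can loop back through $O$), so Proposition~\ref{prop:klau_Lq_infinite} forbids a second such successor; and at least one, because $O$ is non-trivial and strongly connected. Combined with Lemma~\ref{lm:CZ_stability}, which gives $o_{in} \in \delta_G(o_{out}, a)$, this makes the restriction of $\delta_G$ to $O$ a simple cycle in which the $a$-successor of $o_{out}$ inside $O$ is $o_{in}$. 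I label the cycle $c_0 = o_{in}, c_1, \dots, c_{l_O-1} = o_{out}$ with $c_i \xrightarrow{a} c_{(i+1) \bmod l_O}$.

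For the ``if'' direction, fix $k \geq 1$. Concatenating the path $i \xrightarrow{a^m} s \xrightarrow{a} o_{in} = c_0$ with $l_O - 1$ steps around the cycle to $o_{out}$, followed by $k - 1$ full traversals of the cycle, yields a path from $i$ to $o_{out}$ of length exactly $m + k\, l_O$.

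For the ``only if'' direction, fix any path $\pi$ of length $n$ from $i$ to $o_{out}$. Since $G$ is standard (so $i \notin O$) and $o_{in}$ is the unique in-gate of $O$, every entry of $\pi$ into $O$ uses the transition $s \xrightarrow{a} o_{in}$. Because $s$ is reachable from $i$ only by the word $a^m$, $\pi$ enters $O$ exactly once, namely at time $m+1$. I must then argue that $\pi$ does not leave $O$ afterwards: if it did leave through $o_{out}$ and later return, that return would again be forced through $s$, providing a second arrival word at $s$ and contradicting the uniqueness of $w_s$; since $\pi$ ends at $o_{out} \in O$, it must therefore remain in $O$ throughout. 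By the cycle structure, $\pi$'s last $n - (m+1)$ steps traverse the cycle from $c_0$ to $c_{l_O-1}$, so $n - (m+1) \equiv l_O - 1 \pmod{l_O}$ with $n \geq m + l_O$, i.e.\ $n = m + k\, l_O$ for some $k \geq 1$.

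The main obstacle is precisely this non-excursion argument in the ``only if'' direction; it rests crucially on the uniqueness of the word $w_s$ established just before the lemma, without which the computation could leave $O$ through $o_{out}$ and re-enter later, reaching $o_{out}$ at times outside the claimed arithmetic progression.
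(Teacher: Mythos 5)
Your proof is correct and takes essentially the same route as the paper's: both use Proposition~\ref{prop:klau_Lq_infinite} together with Lemma~\ref{lm:CZ_stability} to show that $O$ is a simple $a$-cycle of length $l_O$ whose unique entry is the transition from $s$ to $o_{in}$, and then count path lengths through $i \to s \to o_{in} \to o_{out}$. The only difference is presentational: you split the equivalence into two directions and spell out the non-excursion argument, which the paper leaves implicit in its chain of ``if and only if'' statements.
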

\begin{proof}
	Following Proposition \ref{prop:klau_Lq_infinite}, we can deduce: $\forall o \in O, |\delta_G(o, a) \cap O| = 1$.
	Thus, $\forall o \in O, o \in \delta_G(o, w) \Longleftrightarrow \exists k \in \mathbb{N}, |w| = k \times l_O$.
	And following Lemma \ref{lm:CZ_stability}, $o_{in} \in \delta_G(o_{out}, a)$ and thus $o_{out} \in \delta_G(o_{in}, w) \Longleftrightarrow \exists k \in \mathbb{N}, |w| = l_O - 1 + k \times l_O$.
	Since $o_{in} \in \delta_G(s, a)$ and $s \in \delta_G(i, w) \Longleftrightarrow |w| = m$, then $o_{out} \in \delta_G(i, w) \Longleftrightarrow \exists k \in \mathbb{N}, |w| = m + 1 + l_O - 1 + k \times l_O$.
\end{proof}

	Moreover, we give a necessary condition over words in $\mathrm{L}(A_{j'})$.

\begin{lemma}\label{lm:w123}
	Let $w_1, w_2, w_3 \in \mathrm{L}(A_{j'})$ such that $|w_1| < |w_2| < |w_3|$ and for every $w \in \mathrm{L}(A_{j'})$ such that $w \neq w_2$, either $|w| \leq |w_1|$ or $|w| \geq |w_3|$.
	Then, either $|w_3| - |w_2| = j'+1$ and $|w_2| - |w_1| = j'$, or $|w_3| - |w_2| = j'$ and $|w_2| - |w_1| = j' + 1$.
\end{lemma}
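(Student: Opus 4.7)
The plan is to describe $\mathrm{L}(A_{j'})$ explicitly by its word lengths, enumerate these lengths in increasing order, and observe that consecutive differences form an alternating pattern between $j'$ and $j'+1$.

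First, I would read off the language of $A_{j'}$ from its structure. The automaton is a single directed cycle $\alpha_0 \to \alpha_1 \to \cdots \to \alpha_{2j'} \to \alpha_0$ of length $2j'+1$, with $\alpha_0$ as initial state and $\{\alpha_0, \alpha_{j'}\}$ as final states. Since the alphabet is unary, $\delta_{j'}(\alpha_0, a^n) = \alpha_{n \bmod (2j'+1)}$, so
\[
\mathrm{L}(A_{j'}) = \{a^n \mid n \bmod (2j'+1) \in \{0, j'\}\}.
\]

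Next, I would list these lengths in increasing order. A direct check shows that the lengths of words in $\mathrm{L}(A_{j'})$ are exactly $\ell_{2k} = k(2j'+1)$ and $\ell_{2k+1} = k(2j'+1) + j'$ for $k \in \mathbb{N}$. Consecutive differences therefore alternate:
\[
\ell_{2k+1} - \ell_{2k} = j', \qquad \ell_{2k+2} - \ell_{2k+1} = (k{+}1)(2j'{+}1) - k(2j'{+}1) - j' = j'+1.
\]

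Finally, the hypothesis on $w_1, w_2, w_3$ says that no word of $\mathrm{L}(A_{j'})$ other than $w_2$ has length strictly between $|w_1|$ and $|w_3|$; since the alphabet is unary, every length corresponds to at most one word, so $|w_1|, |w_2|, |w_3|$ are three consecutive values in the sorted sequence $(\ell_n)_{n \in \mathbb{N}}$. Hence $(|w_2|-|w_1|, |w_3|-|w_2|)$ is either $(j', j'+1)$ or $(j'+1, j')$, as claimed. There is no real obstacle here: the only thing to be careful about is that the hypothesis selects three consecutive lengths, which follows immediately from unarity. The proof reduces to the elementary arithmetic observation about the alternating gap pattern.
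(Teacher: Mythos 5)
Your proposal is correct and follows essentially the same route as the paper, which simply writes out $\mathrm{L}(A_{j'}) = \mathrm{L}((a^{2j'+1})^*(\varepsilon + a^{j'})) = \{\varepsilon, a^{j'}, a^{2j'+1}, a^{3j'+1}, a^{4j'+2}, \dots\}$ and reads off the alternating gaps $j', j'+1, j', j'+1, \dots$ directly. Your version merely makes explicit the length formulas $\ell_{2k} = k(2j'+1)$, $\ell_{2k+1} = k(2j'+1)+j'$ and the observation that unarity forces $|w_1|,|w_2|,|w_3|$ to be consecutive in that sequence, both of which the paper leaves implicit.
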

\begin{proof}
	Since $\mathrm{L}(A_{j'}) = \mathrm{L}((a^{2j'+1})^* \cdot (\varepsilon + a^{j'})) = \{\varepsilon, a^{j'}, a^{2j'+1}, a^{3j'+1}, a^{4j'+2}, \cdots \}$, then either $|w_3| - |w_2| = j'+1$ and $|w_2| - |w_1| = j'$, or $|w_3| - |w_2| = j'$ and $|w_2| - |w_1| = j' + 1$.
\end{proof}
	
	From the two previous lemmas, we show that $G$ cannot recognize $\mathrm{L}(A_{j'})$.
	
\begin{proposition}
	$\mathrm{L}(G) \neq \mathrm{L}(A_{j'})$.
\end{proposition}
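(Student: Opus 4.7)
The plan is to argue by contradiction: suppose $\mathrm{L}(G) = \mathrm{L}(A_{j'})$, and derive an arithmetic clash between the rigid length structure of $\mathrm{L}(A_{j'})$ and the $j'$-lookahead constraint at the orbit exit.

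First I would pin down the shape of long accepting computations of $G$. Since $\mathrm{L}(G)$ is infinite, Corollary~\ref{prop:klau_orbite_non_triviale} gives a unique non-trivial orbit $O$, and Proposition~\ref{prop:klau_portes_uniques} yields a unique in-gate $o_{in}$ and out-gate $o_{out}$. Because $O$ is strongly connected, any path leaving $O$ cannot re-enter. So any sufficiently long accepted word $a^n$ decomposes as the length-$m$ path from $i$ to $s$, one step into $o_{in}$, then $kl_O - 1$ steps around $O$ ending at $o_{out}$, then a (possibly empty) tail of length $r$ outside $O$ reaching a final state. Using Lemma~\ref{lm:ooPathLength}, this gives $n = m + kl_O + r$ for some $k \geq 1$ and some $r$ in the set $R$ of tail lengths from $o_{out}$ to a final state (with $0 \in R$ iff $o_{out}$ is itself final). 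Hence the set of long accepted lengths of $G$ equals $\{m + kl_O + r : k \geq 1, r \in R\}$.

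Then I would bound $R$. By Lemma~\ref{lm:CZ_stability}, $o_{in} \in \delta_G(o_{out}, a)$, and $\mathrm{L}_{o_{in}}(G)$ is infinite, so Proposition~\ref{prop:klau_Lq_infinite} forces every other $a$-successor $q$ of $o_{out}$ to have finite right language. Combined with the $j'$-lookahead condition at $o_{out}$, this yields $\delta_G(q, a^{j'-1}) = \emptyset$, capping any path from $q$ at length $j' - 2$, and hence capping every positive tail length by $j' - 1$. Thus $R \subseteq \{0, 1, \ldots, j'-1\}$, so $\max R - \min R \leq j' - 1$.

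The contradiction then comes from the arithmetic of $\mathrm{L}(A_{j'})$, whose length set is $\{k(2j'+1), k(2j'+1) + j' : k \geq 0\}$, periodic modulo $2j'+1$ with density $2/(2j'+1)$. Matching this asymptotically with $\{m + kl_O + r\}$ gives $|R \bmod l_O| / l_O = 2/(2j'+1)$; since $\gcd(2, 2j'+1) = 1$ this forces $(2j'+1) \mid l_O$, whence $l_O \geq 2j'+1 > j' - 1$, and the residues of $R$ modulo $l_O$ coincide with $R$ itself. But the residues of $\mathrm{L}(A_{j'})$ modulo $l_O$ contain $0$ and $j'$, which lie at cyclic distance exactly $j'$, whereas any two elements of $(m + R) \bmod l_O$ lie at cyclic distance at most $j' - 1$. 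This contradiction proves $\mathrm{L}(G) \neq \mathrm{L}(A_{j'})$.

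The main obstacle will be the first step, namely justifying that every sufficiently long accepted word of $G$ factors in the claimed way; this relies on ruling out multiple non-trivial orbits, having unique gates, and the no-re-entry property from $O$ being strongly connected. Once that structure is locked in, the cap on $R$ is a one-line use of $j'$-lookahead determinism, and the final clash is a short pigeonhole on $\mathbb{Z}/l_O\mathbb{Z}$. If the density-and-gcd step feels too slick, Lemma~\ref{lm:w123} is a convenient substitute: applied to three consecutive long elements of $\mathrm{L}(A_{j'})$ whose gaps are $j'$ and $j'+1$, it forces $(k_3 - k_2) l_O \leq 2j' - 1$ with $k_3 > k_2$, hence $l_O \leq 2j' - 1$, already incompatible with $(2j'+1) \mid l_O$.
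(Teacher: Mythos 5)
Your proof is correct, and it shares the paper's structural skeleton: both arguments reduce to the unique non-trivial orbit $O$ with its unique gates, observe that every sufficiently long accepted word has length $m + k\,l_O + r$ where $r$ ranges over the lengths of the exit tails from $o_{out}$, and use the $j'$-lookahead condition at $o_{out}$ (every $a$-successor of $o_{out}$ other than $o_{in}$ has empty image under $a^{j'-1}$, by Proposition~\ref{prop:klau_Lq_infinite}) to bound every tail length by $j'-1$. Where you diverge is the endgame. The paper trichotomizes on the number of exit tails: no tail gives a finite language; two distinct tails give two accepted lengths differing by a positive amount smaller than $j'$, impossible since consecutive lengths in $\mathrm{L}(A_{j'})$ differ by $j'$ or $j'+1$ (Lemma~\ref{lm:w123}); a single tail makes all large gaps equal to $l_O$, again contradicting Lemma~\ref{lm:w123} since the gaps of $\mathrm{L}(A_{j'})$ alternate between two distinct values. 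You instead run a uniform density-and-divisibility argument: equating asymptotic densities forces $(2j'+1)\mid l_O$, hence $l_O\ge 2j'+1$, after which the residues of the large accepted lengths modulo $l_O$ must lie in an arc of diameter at most $j'-1$ while $\mathrm{L}(A_{j'})$ supplies two residues at cyclic distance exactly $j'$. Both endgames are sound; the paper's case split is more elementary and leans directly on Lemma~\ref{lm:w123}, while yours avoids the case analysis and extracts the extra structural fact that $2j'+1$ divides the orbit length. Your proposed fallback via Lemma~\ref{lm:w123} essentially reproduces the paper's single-tail case, so the two arguments converge there.
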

\begin{proof}
	Let $Q_o = \delta(o_{out}, a) \setminus O$ be the set of direct successors of $o_{out}$ outside of $O$, and $\mathrm{L_o} = \bigcup_{q \in Q_o} \mathrm{L}_q(A)$ be the union of their right languages.
	Then, since $G$ is $j'$-lookahead deterministic, the length of any word of $\mathrm{L_o}$ is strictly smaller than $j' - 1$.
	
	Let us consider the set $\mathrm{L_{out}}$ of words reaching a final state from $o_{out}$ without going through $O$.
	By definition, $\mathrm{L_{out}} = \{a\} \cdot \mathrm{L_o} \cup \{\varepsilon\}$ if $o_{out} \in F$, $\{a\} \cdot \mathrm{L_o}$ otherwise.
	Then, the length of any word $w \in \mathrm{L_{out}}$ is strictly smaller than $j'$.
	
	If $\mathrm{L_{out}} = \emptyset$, then $|\mathrm{L}(G)| < \infty$ and $\mathrm{L}(G) \neq \mathrm{L}(A_{j'})$.
	
	Now, let us suppose that there exist two distinct words $w_{o1}, w_{o2} \in \mathrm{L_{out}}$ such that $|w_{o2}| > |w_{o1}|$, then $|w_{o2}| - |w_{o1}| < j'$.
	Moreover, there exists a word $w \in \{a\}^*$ such that $o_{out} \in \delta(i, w)$.
	Thus, $\{ww_{o1}, ww_{o2}\} \subset \mathrm{L}(G)$ and $|ww_{o1} - ww_{o2}| < j'$.
	Following Lemma \ref{lm:w123}, $\mathrm{L}(G) \neq \mathrm{L}(A_{j'})$.
	
	Now, let us suppose that $\mathrm{L_{out}} = \{w_o\}$.
	Since $O$ is the single non-trivial orbit and $o_{out}$ is its single out-gate, then there exists $n \in \mathbb{N}$ such that for every word $w_G \in \mathrm{L}(G)$, if $|w_G| \geq n$ then $w_G = w_pw_o$ with $o_{out} \in \delta_G(i, w_p)$.
	Let $w_{G1}, w_{G2}, w_{G3} \in \mathrm{L}(G)$ such that $n \leq |w_{G1}| < |w_{G2}| < |w_{G3}|$ and for every $w_G \in\mathrm{L}(G)$ such that $w_G \neq w_{G2}$, either $|w_G| \leq |w_{G1}|$ or $|w_G| \geq |w_{G3}|$.
	Then, following Lemma \ref{lm:ooPathLength}, $|w_{G3}| - |w_{G2}| = |w_{G2}| - |w_{G1}| = l_O$, which means that $\mathrm{L}(G) \neq \mathrm{L}(A_{j'})$.
\end{proof}

	Thus, $\mathrm{L}(A_{j'})$ cannot be recognized by a $j'$-lookahead deterministic Glushkov automaton.
	Consequently:
	
\begin{proposition}\label{prop:Aj_non_j_ld}
	$\forall j \in \mathbb{N} \setminus \{0\}$, $\mathrm{L}(A_j)$ is not $j$-lookahead deterministic.
\end{proposition}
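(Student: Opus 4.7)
The plan is to observe that this final proposition is essentially a direct repackaging of the preceding development: everything has been set up so that the statement follows by a one-step argument combining the last proposition with the definitions.

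First, I would unfold the definitions. A regular language $L$ is $j$-lookahead deterministic iff there exists some regular expression $E$ with $\mathrm{L}(E) = L$ whose Glushkov automaton $G_E$ is $j$-lookahead deterministic. Thus, to show $\mathrm{L}(A_j)$ is not $j$-lookahead deterministic, I need only rule out the existence of any such expression $E$, which amounts to ruling out the existence of a $j$-lookahead deterministic Glushkov automaton $G$ with $\mathrm{L}(G) = \mathrm{L}(A_j)$.

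Second, I would instantiate the previous proposition with $j' = j$ (which is legitimate since the hypothesis $j \in \mathbb{N} \setminus \{0\}$ matches). That proposition asserts exactly that no $j$-lookahead deterministic Glushkov automaton over $\{a\}$ can recognize $\mathrm{L}(A_j)$. Combining this with the reformulation of Step 1 immediately yields the desired conclusion by contradiction: if $E$ were a $j$-lookahead deterministic expression for $\mathrm{L}(A_j)$, then $G_E$ would be a $j$-lookahead deterministic Glushkov automaton recognizing $\mathrm{L}(A_j)$, contradicting the preceding result.

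There is essentially no obstacle here; the substantive content has already been handled in the previous proposition via Lemmas \ref{lm:CZ_stability}, \ref{lm:ooPathLength} and \ref{lm:w123}, together with Propositions \ref{prop:klau_Lq_infinite} and \ref{prop:klau_portes_uniques}. The only minor care needed is to make explicit that the alphabet of any candidate expression must indeed be $\{a\}$ (since $\mathrm{L}(A_j) \subseteq \{a\}^*$, any expression specifying it can be assumed unary after trimming), so that the previous proposition — which was stated for Glushkov automata over the unary alphabet — is directly applicable.
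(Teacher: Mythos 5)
Your proposal is correct and follows exactly the paper's own route: the paper gives no separate proof of this proposition, deriving it immediately from the preceding development showing that no $j'$-lookahead deterministic Glushkov automaton over $\{a\}$ recognizes $\mathrm{L}(A_{j'})$, instantiated at $j'=j$ and combined with the definition of a $j$-lookahead deterministic language via its Glushkov automaton. Your added remark that any trimmed expression for a unary language is itself unary is a sensible (and valid) clarification, but does not change the argument.
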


	We can conclude that:
\begin{theorem}
	There is a proper hierarchy in $k$-lookahead deterministic regular languages.
\end{theorem}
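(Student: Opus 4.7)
The plan is to observe that the theorem is now immediate from the two propositions already established about the family $(A_j)_{j \geq 1}$. Proposition~\ref{prop:Aj_jp1_ld} shows that for every $j \in \mathbb{N}$, the language $\mathrm{L}(A_j)$ belongs to the class of $(j+1)$-lookahead deterministic regular languages (via the explicit witness expression $E_j = (a^{2j+1})^*(\varepsilon + a^j)$). Proposition~\ref{prop:Aj_non_j_ld} shows the matching lower bound: for every $j \in \mathbb{N} \setminus \{0\}$, $\mathrm{L}(A_j)$ is not $j$-lookahead deterministic. Hence for each $j \geq 1$, the class of $(j+1)$-lookahead deterministic regular languages strictly contains the class of $j$-lookahead deterministic regular languages, since $\mathrm{L}(A_j)$ witnesses the separation.

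To phrase it as a proof, I would first note the trivial inclusion: any $j$-lookahead deterministic regular expression is \emph{a fortiori} $(j+1)$-lookahead deterministic, because the defining condition $\delta(q_1,w) = \emptyset \vee \delta(q_2,w) = \emptyset$ for all $w \in \Sigma^{j-1}$ implies the same property for all $w \in \Sigma^{j}$ (prefix-close the empty-image condition along one extra symbol). Therefore the $k$-lookahead deterministic families form an ascending chain with respect to $k$.

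Then I would invoke the two propositions to conclude that this chain is strict at every level $j \geq 1$: $\mathrm{L}(A_j)$ lies in the $(j+1)$-th level but not in the $j$-th level. This gives the proper hierarchy announced by the theorem.

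There is essentially no obstacle remaining; the technical content has already been absorbed by Proposition~\ref{prop:Aj_jp1_ld} (the easy direction, handled by exhibiting $E_j$ and checking its Glushkov automaton) and Proposition~\ref{prop:Aj_non_j_ld} (the hard direction, which relied on the structural analysis of unary Glushkov automata via Lemma~\ref{lm:CZ_stability}, Proposition~\ref{prop:klau_Lq_infinite}, Proposition~\ref{prop:klau_portes_uniques}, and the length argument of Lemmas~\ref{lm:ooPathLength} and~\ref{lm:w123}). The only thing left for me to verify explicitly is the monotonicity of the lookahead hierarchy, which is the one-line observation above.
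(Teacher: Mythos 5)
Your proposal is correct and matches the paper exactly: the theorem is stated as an immediate consequence of Propositions~\ref{prop:Aj_jp1_ld} and~\ref{prop:Aj_non_j_ld}, with $\mathrm{L}(A_j)$ as the separating witness at each level. Your added remark on monotonicity of the lookahead condition (the length-$(j-1)$ emptiness condition propagates to length $j$) is a valid and worthwhile explicit check that the paper leaves implicit.
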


\section{Relationship Between Block Deterministic Languages and Lookahead Deterministic Languages}\label{se:inclusion_kbd_kld}

	Han and Wood stated in~\cite{HW08} that block deterministic languages are a proper subfamily of lookahead deterministic languages.
	However, in their proof of the block deterministic languages being a proper subfamily of the lookahead deterministic languages, they used a statement made by Giammarresi \emph{et al.} in~\cite{GMW01} about the family of languages $\mathrm{L}((a+b)^*a(a+b)^{k-1})$ not being $k$-block deterministic.
	But we proved that Lemma \ref{le:GMWH}, which is used as a basis for deciding if a language is block deterministic, is wrong.
	So we cannot be sure that their example is not block deterministic and give our own proof.

\subsection{Block Deterministic Languages as a Subfamily of Lookahead Deterministic Languages}

	We start by presenting some properties of block regular expressions and regular expressions over the language of their marked expressions.

\begin{lemma}
	A block regular expression $E_b$ is $k$-block deterministic if and only if: 
	$\forall u,v,w \in \Pi_{E_b}^*, \forall b_1,b_2 \in \Pi_{E_b}, (ub_1v, ub_2w \in \mathrm{L}(E_b^{\sharp}) \wedge b_1 \neq b_2) \Longrightarrow b_1^{\natural} \notin \mathrm{Pref}(b_2^{\natural})$.
\end{lemma}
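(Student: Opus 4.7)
The characterization translates the transition structure of $G_{E_b}$ into the word-language of the marked expression, so the plan is to invoke the Glushkov construction twice (once in each direction) and to exploit the fact that in $E_b^{\sharp}$ every element of $\Pi_{E_b}$ occurs at most once. Concretely, a transition $(p,\beta,y)$ in $G_{E_b}$ exists exactly when either $p=i$ and $y\in\mathrm{First}(E_b^{\sharp})$, or $p=x\in\Pi_{E_b}$ and $y\in\mathrm{Follow}(E_b^{\sharp},x)$, with $\beta=y^{\natural}$ in both cases. Hence two distinct transitions out of the same state $p$ correspond either to two distinct elements of $\mathrm{First}(E_b^{\sharp})$ or of $\mathrm{Follow}(E_b^{\sharp},x)$, and this matches precisely the two possible shapes of words $u b_1 v$ and $u b_2 w$ of $\mathrm{L}(E_b^{\sharp})$ according to whether $u=\varepsilon$ or $u$ ends with some $x\in\Pi_{E_b}$.

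For the ($\Rightarrow$) direction, assuming $G_{E_b}$ is $k$-block deterministic, I would start from any decomposition $u b_1 v, u b_2 w\in\mathrm{L}(E_b^{\sharp})$ with $b_1\neq b_2$ and split on whether $u=\varepsilon$ (giving $b_1,b_2\in\mathrm{First}(E_b^{\sharp})$, hence two distinct transitions from $i$ labelled $b_1^{\natural}$ and $b_2^{\natural}$) or $u=u'x$ (giving $b_1,b_2\in\mathrm{Follow}(E_b^{\sharp},x)$, hence two distinct transitions from state $x$). In either case the $k$-block determinism of $G_{E_b}$ immediately yields $b_1^{\natural}\notin\mathrm{Pref}(b_2^{\natural})$.

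For the ($\Leftarrow$) direction, I start from two distinct transitions $(p,\beta_1,y_1),(p,\beta_2,y_2)\in\delta_{E_b}$ (so $y_1\neq y_2$ and $\beta_i=y_i^{\natural}$) and need to produce a \emph{common} prefix $u$ together with suffixes $v,w$ such that $uy_1v,uy_2w\in\mathrm{L}(E_b^{\sharp})$; then the hypothesis delivers $\beta_1\notin\mathrm{Pref}(\beta_2)$, as desired. If $p=i$, take $u=\varepsilon$ and pick any $v,w$ witnessing $y_1,y_2\in\mathrm{First}(E_b^{\sharp})$. The delicate case is $p=x$: unfolding $y_1,y_2\in\mathrm{Follow}(E_b^{\sharp},x)$ only directly produces two words $u_1 x y_1 v,\,u_2 x y_2 w\in\mathrm{L}(E_b^{\sharp})$ that \emph{a priori} differ on their prefixes before $x$.

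The main obstacle is therefore this common-prefix step, and I would handle it by exploiting the uniqueness of position symbols in $\Pi_{E_b}$. Because each $z\in\Pi_{E_b}$ appears at most once in $E_b^{\sharp}$, the Glushkov automaton of $E_b^{\sharp}$ over the alphabet $\Pi_{E_b}$ is deterministic: reading any word $u\in\Pi_{E_b}^{*}$ either fails or ends at the unique state given by its last letter (or at $i$ if $u=\varepsilon$). Consequently, the right language of $\mathrm{L}(E_b^{\sharp})$ at any position $x$ depends only on $x$, so one can freely splice: from $u_1 x y_1 v,\,u_2 x y_2 w\in\mathrm{L}(E_b^{\sharp})$ one concludes $u_1 x y_2 w\in\mathrm{L}(E_b^{\sharp})$. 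Setting $u:=u_1 x$ then gives the required $uy_1v,uy_2w\in\mathrm{L}(E_b^{\sharp})$, completing the proof.
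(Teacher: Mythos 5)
Your proof is correct and follows essentially the same route as the paper, whose own ``proof'' merely cites the $\mathrm{First}$/$\mathrm{Follow}$-based definition of block determinism from Giammarresi \emph{et al.} and asserts that the word-level reformulation can be deduced from it. The one step you elaborate that the paper leaves implicit --- the common-prefix splicing in the ($\Leftarrow$) direction, justified by the fact that the position automaton of the marked expression is deterministic so the right language after reading $u_1x$ depends only on $x$ --- is precisely the nontrivial part of that deduction, and your argument for it is sound.
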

\begin{proof}
	Giammarresi \emph{et al.} in \cite{GMW01} defined a block regular expression $E_b$ being block deterministic if the following two conditions hold:
	\begin{itemize}
		\item $\forall b_1, b_2 \in \mathrm{First}(E_b^{\sharp}), b_1 \neq b_2 \Longrightarrow b_1^{\natural} \notin \mathrm{Pref}(b_2^{\natural})$
		\item $\forall x \in \Pi_{E_b}, \forall b_1, b_2 \in \mathrm{Follow}(E_b^{\sharp}, x), b_1 \neq b_2 \Longrightarrow b_1^{\natural} \notin \mathrm{Pref}(b_2^{\natural})$.
	\end{itemize}
	And we can deduce from it the aforementioned property.
\end{proof}

\begin{lemma}[\cite{HW08}]
	A regular expression $E$ is $k$-lookahead deterministic if and only if:
	$\forall u,v,w \in \Pi_E^*, \forall x,y \in \Pi_E^k, (uxv, uyw \in \mathrm{L}(E^{\sharp}) \wedge x(1) \neq y(1) \Longrightarrow x^{\natural} \neq y^{\natural}$
\end{lemma}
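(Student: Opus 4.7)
The plan is to unfold the definition of $k$-lookahead determinism for $G_E$ through the correspondence between runs of $G_E$ and accepting paths of the marked Glushkov automaton on $\Pi_E$. Because each position in $E^\sharp$ appears exactly once, the marked automaton is deterministic over $\Pi_E$, so a word $u\in\Pi_E^*$ labels a unique path from $i$ ending in a unique state $p$ (equal to $i$ if $u=\varepsilon$, and to the last position of $u$ otherwise). Moreover $u\in\mathrm{L}(E^\sharp)$ iff this path is accepting, and the corresponding run of $G_E$ on $u^\natural\in\Sigma^*$ visits the same sequence of states.

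For the forward direction, I assume $G_E$ is $k$-lookahead deterministic and fix $uxv, uyw\in\mathrm{L}(E^\sharp)$ with $x(1)\neq y(1)$. Reading $u$ lands in a unique state $p$, and $x(1), y(1)$ are two distinct successors of $p$ via transitions labeled $x(1)^\natural, y(1)^\natural$ respectively. If these labels already differ, then $x^\natural(1)\neq y^\natural(1)$ and we are done. Otherwise they share a letter $a = x(1)^\natural = y(1)^\natural$, giving two distinct $a$-transitions $(p,a,x(1))$ and $(p,a,y(1))$. The continuations in $uxv$ and $uyw$ witness that $\delta_E(x(1), x^\natural(2)\cdots x^\natural(k))$ and $\delta_E(y(1), y^\natural(2)\cdots y^\natural(k))$ are both nonempty, so the lookahead condition forces these two words of length $k-1$ to differ, whence $x^\natural\neq y^\natural$.

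For the converse, I assume $G_E$ is not $k$-lookahead deterministic and extract distinct transitions $(p,a,q_1), (p,a,q_2)\in\delta_E$ together with some $w'\in\Sigma^{k-1}$ such that $\delta_E(q_1,w')$ and $\delta_E(q_2,w')$ are both nonempty; the goal is to assemble a counterexample to the right-hand property. Trimness of $G_E$ yields a position sequence $u$ carrying $i$ to $p$; the two nonempty lookaheads give position sequences $z_2\cdots z_k$ from $q_1$ and $z_2'\cdots z_k'$ from $q_2$ with $z_i^\natural=(z_i')^\natural = w'(i-1)$; trimness again supplies prolongations $v$ and $w$ leading the two runs into a final state. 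Setting $x=q_1z_2\cdots z_k$ and $y=q_2z_2'\cdots z_k'$ gives $uxv, uyw\in\mathrm{L}(E^\sharp)$, $x(1)=q_1\neq q_2=y(1)$, and $x^\natural = aw' = y^\natural$, contradicting the property.

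The only real subtlety is in this last direction: one must invoke trimness twice, once to reach $p$ from $i$ and once to prolong the two $(k-1)$-length lookahead paths all the way to a final state, in order to populate $u$, $v$, and $w$ as \emph{actual} position sequences rather than mere labels. Once that translation between lookahead witnesses and positions of $E^\sharp$ is set up cleanly, both implications reduce to chasing the definitions.
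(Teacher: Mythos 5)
Your argument is correct, but there is nothing in the paper to compare it against: this lemma is imported from Han and Wood~\cite{HW08} and stated without proof, so your write-up supplies a proof where the paper gives only a citation. Both directions check out. The forward direction correctly reduces to the lookahead condition on $G_E$: the shared prefix $u$, read as a sequence of positions, pins down a single state $p$ (the last position of $u$, or $i$), the hypothesis $uxv,uyw\in\mathrm{L}(E^{\sharp})$ guarantees via $\mathrm{First}$/$\mathrm{Follow}$ that the two distinct transitions out of $p$ toward $x(1)$ and $y(1)$ really exist, and the length-$(k-1)$ continuations $x(2)\cdots x(k)$ and $y(2)\cdots y(k)$ witness non-emptiness of the relevant $\delta_E$-images, so equality of $x^{\natural}$ and $y^{\natural}$ would violate $k$-lookahead determinism. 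The converse correctly negates the definition (the $|I|=1$ clause is automatic for Glushkov automata, so the failure must produce $(p,a,q_1)\neq(p,a,q_2)$ and a common viable lookahead $w'$) and lifts the witnesses back to position sequences. The two facts you lean on there are legitimate under the paper's standing conventions: expressions are trimmed (no occurrence of $\emptyset$), so every position of $E^{\sharp}$ is accessible and coaccessible in $G_E$, and the classical Glushkov correctness for linear expressions identifies $\mathrm{L}(E^{\sharp})\setminus\{\varepsilon\}$ with the set of position sequences starting in $\mathrm{First}$, chained by $\mathrm{Follow}$, and ending in $\mathrm{Last}$ --- which is exactly what you need to conclude $uxv,uyw\in\mathrm{L}(E^{\sharp})$. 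It would be worth making that last identification explicit if this proof were to be included, since it is the only step that is not pure definition-chasing.
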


	Let $E_b$ be a ($\Sigma, k$)-block regular expression over an alphabet $\Gamma$ and $E_b^{\sharp}$ its marked block regular expression  over the alphabet $\Pi_{E_b}$.
	Let $\Omega = \{a_{i, j} \mid \exists [w]_i \in \Pi_{E_b}, w[j] = a\}$ be an alphabet, $\varphi : \Pi_{E_b} \rightarrow B_{\Omega, k}$ a function such that for every $[w]_i \in \Pi_{E_b}$, $\varphi([w]_i) = w[1]_{i, 1} \cdot w[2]_{i, 2} \cdots w[|w|]_{i, |w|}$.
	
\begin{example}
	$\varphi([aba]_5) = a_{5, 1}b_{5, 2}a_{5, 3}$.
\end{example}
	
	Every symbol of $\Omega$ is linked to only one position of block of $\Pi_{E_b}$ and represents the position of a letter in a position of block.
	Considering that every block of $\Pi_{E_b}$ is indexed differently, every element of $\Omega$ is produced by only one block.
	We define the following functions for every $a_{i, j} \in \Omega$ :
	\begin{itemize}
		\item $\mathrm{Pos}_{E_b} : \Omega \rightarrow \Pi_{E_b}$ with $\mathrm{Pos}_{E_b}(a_{i, j}) =  [w]_i$
		\item $\mathrm{BlockPos}_{E_b} : \Omega \rightarrow \mathbb{N} \setminus \{0\}$ with $\mathrm{BlockPos}_{E_b}(a_{i, j}) = j$
		\item $\mathrm{BlockLength}_{E_b} : \Omega \rightarrow \mathbb{N} \setminus \{0\}$ with $\mathrm{BlockLength}_{E_b}(a_{i, j}) = |(\mathrm{Pos}_{E_b}(a_{i, j}))^{\natural}|$
	\end{itemize}
	
	A word $w \in \Omega^*$ is \emph{simple block complete} if there exists $x \in \Pi_{E_b}$ such that $\varphi(x) = w$.
	The set of simple block complete words over $\Omega$ is denoted by $SBC(\Omega)$.
	Let $w \in (SBC(\Omega))^*$, then $w$ is \emph{block complete} and the set of block complete words is denoted by $BC(\Omega)$.
	Then, we can define $\varphi$ as a bijection between $\Pi_{E_b}$ and $SBC(\Omega)$, and extend it by morphism between $\Pi_{E_b}^*$ and $BC(\Omega)$.
	
	Let $\chi$ be the function which takes a marked block regular expression $E_b^{\sharp}$ and transform it into a marked regular expression $E^{\sharp}$ over the alphabet $\Omega$ such that, for every subexpression $F = x \in \Pi_{E_b}$, $\chi(F) = \varphi(x)[1] \cdot \varphi(x)[2] \cdots \varphi(x)[|\varphi(x)|]$.
	In this way, we extend the definition of marked regular expresion to regular expressions over an alphabet indexed by any number of element, and whose symbol appear only once in the regular expression.
	
	\begin{example}
		Let $E_b = ([aba] + [abb])^*[aa]$, then $E_b^{\sharp} = ([aba]_1 + [abb]_2)^*[aa]_3$ and $E^{\sharp} = \chi(E_b^{\sharp}) = (a_{1, 1}b_{1, 2}a_{1, 3} + a_{2, 1}b_{2, 2}b_{2, 3})^*a_{3, 1}a_{3, 2}$.
	\end{example}
	
	In the rest of this section, we consider the following regular expression : $E_b$ a ($\Sigma, k$)-block regular expresion, its marked block regular expression $E_b^{\sharp}$ over $\Pi_{E_b}$, $E^{\sharp} = \chi(E_b^{\sharp})$ a marked regular expression over $\Omega$ and $E = (E^{\sharp})^{\natural}$ a regular expression over $\Sigma$.

	We can deduce some obvious properties about $E^{\sharp}$:
	\begin{gather}
		\mathrm{Null}(E^{\sharp}) = \mathrm{Null}(E_b^{\sharp}) \label{claim:null}\\
		\mathrm{First}(E^{\sharp}) = \{a_{i, 1} \in \Omega \mid \mathrm{Pos}_{E_b}(a_{i, 1}) \in \mathrm{First}(E_b^{\sharp}\} \label{claim:first}\\
		\mathrm{Last}(E^{\sharp}) = \{a_{i, j} \in \Omega \mid \mathrm{Pos}_{E_b}(a_{i, j}) \in \mathrm{Last}(E_b^{\sharp}) \wedge j = \mathrm{BlockLength}_{E_b}(a_{i, j})\} \label{claim:last}\\
		\mathrm{Follow}(E^{\sharp}, a_{i, j}) = 
			\begin{cases}
				\{b_{i, j+1} \in \Omega\}\\
				\quad \text{if}\ j < \mathrm{BlockLength}_{E_b}(a_{i, j})\\
				\{b_{i^{\prime}, 1} \in \Omega \mid \mathrm{Pos}_{E_b}(b_{i^{\prime}, 1}) \in \mathrm{Follow}(E_b^{\sharp}, \mathrm{Pos}_{E_b}(a_{i, j}))\}\\
				\quad \text{otherwise}
			\end{cases}	\label{claim:follow}
	\end{gather}

	We can also deduce some properties about block complete words, such that for every $w \in \Omega^*$, $w$ is bloc complete if $w = \varepsilon$ or if it validates all of the following conditions:
	\begin{gather}
		\mathrm{BlockPos}_{E_b}(w[1]) = 1  \label{claim:CB_BP1}\\
		\mathrm{BlockPos}_{E_b}(w[|w|]) = \mathrm{BlockLength}_{E_b}(w[|w|])  \label{claim:CB_BPBL}
	\end{gather}
	\begin{multline}
		\forall i \in [1, |w|[, \mathrm{BlockPos}_{E_b}(w[i+1]) = (\mathrm{BlockPos}_{E_b}(w[i]) + 1)\\
		\Longrightarrow \mathrm{Pos}_{E_b}(w[i+1]) = \mathrm{Pos}_{E_b}(w[i])  \label{claim:CB_BP_eq}
	\end{multline}
	\begin{multline}
		\forall i \in [1, |w|[, \mathrm{BlockPos}_{E_b}(w[i+1]) \neq (\mathrm{BlockPos}_{E_b}(w[i]) + 1) \Longrightarrow \\
		\mathrm{BlockPos}_{E_b}(w[i]) = \mathrm{BlockLength}_{E_b}(w[i]) \wedge \mathrm{BlockPos}_{E_b}(w[i+1]) = 1 \label{claim:CB_BP_neq}
	\end{multline}

	Let us show that the previous statements induce important conditions concerning $\mathrm{L}(E^{\sharp})$.

\begin{proposition}\label{prop:mot_bc}
	Every word $w \in \mathrm{L}(E^{\sharp})$ is block complete.
\end{proposition}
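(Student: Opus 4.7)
The plan is to exploit the linearity of $E^{\sharp}$: since distinct blocks of $\Pi_{E_b}$ carry distinct indices and each block appears exactly once in $E_b^{\sharp}$, each symbol of $\Omega$ occurs at most once in $E^{\sharp}$. Hence $E^{\sharp}$ is a marked regular expression in the classical sense, and the standard Glushkov characterization applies: for any non-empty word $w = w[1] \cdots w[n] \in \Omega^*$ we have $w \in \mathrm{L}(E^{\sharp})$ if and only if $w[1] \in \mathrm{First}(E^{\sharp})$, $w[n] \in \mathrm{Last}(E^{\sharp})$, and $w[i+1] \in \mathrm{Follow}(E^{\sharp}, w[i])$ for every $1 \le i < n$. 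The empty word case is handled separately by~(\ref{claim:null}): $\varepsilon \in \mathrm{L}(E^{\sharp})$ iff $\varepsilon \in \mathrm{L}(E_b^{\sharp})$, and $\varepsilon$ is block complete by definition.

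For a non-empty $w \in \mathrm{L}(E^{\sharp})$, I would then verify the four defining conditions of block completeness one by one, each falling directly out of the properties listed earlier. Condition (\ref{claim:CB_BP1}) follows from (\ref{claim:first}), because every element of $\mathrm{First}(E^{\sharp})$ has block position $1$. Condition (\ref{claim:CB_BPBL}) follows from (\ref{claim:last}), because every element of $\mathrm{Last}(E^{\sharp})$ has block position equal to its block length. Conditions (\ref{claim:CB_BP_eq}) and (\ref{claim:CB_BP_neq}) come from the two cases of (\ref{claim:follow}): if $\mathrm{BlockPos}_{E_b}(w[i]) < \mathrm{BlockLength}_{E_b}(w[i])$, then $\mathrm{Follow}(E^{\sharp}, w[i])$ is the singleton $\{b_{i, j+1}\}$ coming from the same block, giving the same block index and consecutive block positions as required by (\ref{claim:CB_BP_eq}); otherwise we must have $\mathrm{BlockPos}_{E_b}(w[i]) = \mathrm{BlockLength}_{E_b}(w[i])$ and $w[i+1]$ has block position $1$ in a different block, which is exactly (\ref{claim:CB_BP_neq}).

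The main obstacle here is really bookkeeping rather than any genuine difficulty: the linearity of $E^{\sharp}$ plus the explicit descriptions (\ref{claim:first})--(\ref{claim:follow}) essentially dictate the block-position pattern of any accepted word. The only subtlety I would want to double-check is the degenerate case $n = 1$, where the lone symbol must simultaneously belong to $\mathrm{First}(E^{\sharp})$ and $\mathrm{Last}(E^{\sharp})$; this forces block position $1$ and block position equal to the block length, so the corresponding block has length one and $w$ is trivially simple block complete. Once this is nailed down, the verification of conditions (\ref{claim:CB_BP1})--(\ref{claim:CB_BP_neq}) proceeds by a direct induction on $i$ along the word.
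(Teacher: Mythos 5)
Your proof is correct and follows essentially the same route as the paper's: handle $\varepsilon$ separately, then derive conditions (\ref{claim:CB_BP1})--(\ref{claim:CB_BP_neq}) for a non-empty accepted word directly from properties (\ref{claim:first}), (\ref{claim:last}) and (\ref{claim:follow}). You merely make explicit the linearity argument and the $\mathrm{First}$/$\mathrm{Follow}$/$\mathrm{Last}$ characterization of $\mathrm{L}(E^{\sharp})$ that the paper leaves implicit.
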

\begin{proof}
	Let $w \in \mathrm{L}(E^{\sharp})$.
	If $w = \varepsilon$, then $w$ is block complete.
	Now let us suppose that $w \neq \varepsilon$.
	Following property \eqref{claim:first}, $w$ validates property \eqref{claim:CB_BP1}.
	Following property \eqref{claim:last}, $w$ validates property \eqref{claim:CB_BPBL}.
	And following property \eqref{claim:follow}, $w$ validates the properties \eqref{claim:CB_BP_eq} and \eqref{claim:CB_BP_neq}.
	Therefore $w$ is block complete.
\end{proof}

\begin{proposition}\label{prop:fourche}
	Let $u, v, w, \in \Omega^*$ and $x, y \in \Omega$ such that $uxv, uyw \in \mathrm{L}(E^{\sharp})$ and $x \neq y$.
	Then $u, xv, yw \in \mathrm{Subw}(\mathrm{L}(E^{\sharp})) \cap BC(\Omega)$ and $\mathrm{Pos}_{E_b}(x) \neq \mathrm{Pos}_{E_b}(y)$.
\end{proposition}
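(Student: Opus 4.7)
My plan is to apply Proposition~\ref{prop:mot_bc} together with the enumerated properties \eqref{claim:first}--\eqref{claim:CB_BP_neq}. Since $uxv, uyw \in \mathrm{L}(E^\sharp)$, Proposition~\ref{prop:mot_bc} immediately gives that $uxv$ and $uyw$ are block complete, and each of $u$, $xv$, $yw$ is trivially a subword of a word in $\mathrm{L}(E^\sharp)$. What remains is to show (i) each of $u$, $xv$, $yw$ is block complete and (ii) $\mathrm{Pos}_{E_b}(x) \neq \mathrm{Pos}_{E_b}(y)$.

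The crucial preliminary --- and the main obstacle --- is to show that $\mathrm{BlockPos}_{E_b}(x) = \mathrm{BlockPos}_{E_b}(y) = 1$, i.e.\ that both $x$ and $y$ begin fresh blocks. I split on whether $u$ is empty. If $u = \varepsilon$, then $x, y \in \mathrm{First}(E^\sharp)$ and property~\eqref{claim:first} directly yields the conclusion. Otherwise, let $\ell$ denote the last letter of $u$; both $x$ and $y$ then lie in $\mathrm{Follow}(E^\sharp, \ell)$. Property~\eqref{claim:follow} shows that $\mathrm{Follow}(E^\sharp, \ell)$ is a singleton whenever $\mathrm{BlockPos}_{E_b}(\ell) < \mathrm{BlockLength}_{E_b}(\ell)$, which together with $x \neq y$ forces $\mathrm{BlockPos}_{E_b}(\ell) = \mathrm{BlockLength}_{E_b}(\ell)$; the second clause of \eqref{claim:follow} then gives $\mathrm{BlockPos}_{E_b}(x) = \mathrm{BlockPos}_{E_b}(y) = 1$.

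Once this pivotal fact is in hand, block completeness of the three subwords is routine bookkeeping. For $u$: condition \eqref{claim:CB_BP1} is inherited from $uxv$ (or vacuous when $u = \varepsilon$); \eqref{claim:CB_BPBL} is precisely the statement that $\ell$ ends a block, just proved; and \eqref{claim:CB_BP_eq}--\eqref{claim:CB_BP_neq} concern only interior consecutive pairs and transfer from $uxv$ verbatim. For $xv$: \eqref{claim:CB_BP1} is exactly $\mathrm{BlockPos}_{E_b}(x) = 1$; \eqref{claim:CB_BPBL} is inherited from $uxv$ since $xv$ shares its last letter with $uxv$ (even in the degenerate case $v = \varepsilon$, where $xv = x$ is itself the final letter of $uxv$); the interior conditions are inherited. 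The case of $yw$ is symmetric.

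For the inequality on $\mathrm{Pos}_{E_b}$, I argue by contradiction: suppose $\mathrm{Pos}_{E_b}(x) = \mathrm{Pos}_{E_b}(y)$. Then $x$ and $y$ correspond to positions within the same block of $\Pi_{E_b}$, and combined with $\mathrm{BlockPos}_{E_b}(x) = \mathrm{BlockPos}_{E_b}(y) = 1$ they must both equal the unique element of $\Omega$ representing the first letter of that block, contradicting $x \neq y$.
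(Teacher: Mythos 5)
Your proof is correct and takes essentially the same route as the paper's: both derive $\mathrm{BlockPos}_{E_b}(x)=\mathrm{BlockPos}_{E_b}(y)=1$ from property~\eqref{claim:follow} because $x\neq y$ rules out the singleton clause, then conclude block completeness of $u$, $xv$, $yw$ via Proposition~\ref{prop:mot_bc} and obtain $\mathrm{Pos}_{E_b}(x)\neq\mathrm{Pos}_{E_b}(y)$ from the fact that a block has a unique first letter. You are in fact somewhat more careful than the paper, which tacitly assumes $u\neq\varepsilon$ and skips the condition-by-condition verification that the three factors satisfy \eqref{claim:CB_BP1}--\eqref{claim:CB_BP_neq}.
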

\begin{proof}
	Since the words $uxv, uyw \in \mathrm{L}(E^{\sharp})$, then $u, xv, yw \in \mathrm{Subw}(\mathrm{L}(E^{\sharp}))$.
	Following property \eqref{claim:follow}, $|\mathrm{Follow}(E^{\sharp}, u[|u|])| > 1$ implies that $\mathrm{BlockPos}_{E_b}(u[|u|]) =$ $\mathrm{BlockLength}_{E_b}(u[|u|])$ and $\mathrm{BlockPos}_{E_b}(x) = \mathrm{BlockPos}_{E_b}(y) = 1$.
	Then, using Proposition \ref{prop:mot_bc}, we can conclude that $u, xv, yw \in BC(\Omega)$.
	Moreover, since $\mathrm{BlockPos}_{E_b}(x) = \mathrm{BlockPos}_{E_b}(y) = 1$ and $x \neq y$, then $\mathrm{Pos}_{E_b}(x) \neq \mathrm{Pos}_{E_b}(y)$.
\end{proof}

\begin{proposition}\label{prop:bijectionLangage}
	$\varphi$ is a bijection between $\mathrm{L}(E_b^{\sharp})$ and $\mathrm{L}(E^{\sharp})$.
\end{proposition}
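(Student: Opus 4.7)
The plan is to prove the proposition by structural induction on the marked block expression $E_b^{\sharp}$, after first establishing that $\varphi$, extended as a morphism from $\Pi_{E_b}^*$ to $\Omega^*$, restricts to a bijection from $\Pi_{E_b}^*$ onto $BC(\Omega)$. For the latter, injectivity follows because $\varphi : \Pi_{E_b} \to SBC(\Omega)$ is already a bijection by construction (each position $[w]_i$ produces a distinct sequence whose symbols are indexed by $i$), so any two distinct sequences of block positions yield distinct concatenations. Surjectivity onto $BC(\Omega)$ comes from the definition of block complete words: conditions \eqref{claim:CB_BP1}--\eqref{claim:CB_BP_neq} force a unique factorization of any $w \in BC(\Omega)$ into simple block complete factors, each of which is the $\varphi$-image of a unique position in $\Pi_{E_b}$.

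Next, I would prove by structural induction on a subexpression $F_b^{\sharp}$ of $E_b^{\sharp}$ that $\varphi(\mathrm{L}(F_b^{\sharp})) = \mathrm{L}(\chi(F_b^{\sharp}))$. The base cases ($\emptyset$, $\varepsilon$, and a single block position $x \in \Pi_{E_b}$) are immediate: for $F_b^{\sharp} = x$ we have $\mathrm{L}(x) = \{x\}$, $\varphi(\{x\}) = \{\varphi(x)\}$, and $\mathrm{L}(\chi(x)) = \{\varphi(x)[1] \cdots \varphi(x)[|\varphi(x)|]\} = \{\varphi(x)\}$. The inductive cases for $+$, $\cdot$ and ${}^*$ reduce to the morphism property of $\varphi$ on $\Pi_{E_b}^*$, combined with the fact that $\chi$ commutes with the regular operators by its definition: $\chi(F_1 + F_2) = \chi(F_1) + \chi(F_2)$, $\chi(F_1 \cdot F_2) = \chi(F_1) \cdot \chi(F_2)$, and $\chi(F^*) = \chi(F)^*$.

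Taking $F_b^{\sharp} = E_b^{\sharp}$ then yields $\varphi(\mathrm{L}(E_b^{\sharp})) = \mathrm{L}(E^{\sharp})$, and combined with Proposition~\ref{prop:mot_bc} (which guarantees $\mathrm{L}(E^{\sharp}) \subseteq BC(\Omega)$), this shows $\varphi$ maps $\mathrm{L}(E_b^{\sharp})$ surjectively onto $\mathrm{L}(E^{\sharp})$. Injectivity on $\mathrm{L}(E_b^{\sharp})$ is inherited from the global bijection $\Pi_{E_b}^* \to BC(\Omega)$, so the restriction $\varphi : \mathrm{L}(E_b^{\sharp}) \to \mathrm{L}(E^{\sharp})$ is indeed bijective.

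The main obstacle is the surjectivity, more precisely justifying that every word of $\mathrm{L}(E^{\sharp})$ admits a well-defined preimage in $\Pi_{E_b}^*$. This relies crucially on Proposition~\ref{prop:mot_bc} together with the uniqueness of the decomposition of a block complete word into simple block complete factors: without this unique factorization, a word in $\mathrm{L}(E^{\sharp})$ might a priori be reached by several distinct sequences of block positions, and the bijection would fail. Everything else is routine bookkeeping through the inductive definition of $\chi$.
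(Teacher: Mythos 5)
Your proof is correct, but it reaches the conclusion by a different route than the paper for the harder (surjectivity) direction. The paper dispatches the forward inclusion by appeal to the definition of $\chi$ (essentially the induction you spell out), but for the converse it takes an arbitrary $w\in\mathrm{L}(E^{\sharp})$, invokes Proposition~\ref{prop:mot_bc} to factor it into simple block complete words $w_1\cdots w_n$, and then uses the $\mathrm{First}$/$\mathrm{Follow}$/$\mathrm{Last}$ correspondences \eqref{claim:first}--\eqref{claim:follow} to transport the witness of membership back to $E_b^{\sharp}$; this implicitly relies on the standard fact that membership in the language of a marked (linear) expression is characterized exactly by $\mathrm{Null}$, $\mathrm{First}$, $\mathrm{Follow}$ and $\mathrm{Last}$. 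Your structural induction establishing $\varphi(\mathrm{L}(F_b^{\sharp}))=\mathrm{L}(\chi(F_b^{\sharp}))$ for every subexpression gives both inclusions at once, is more self-contained (it needs only that $\varphi$ is a monoid morphism and that $\chi$ commutes with $+$, $\cdot$ and $^{*}$, which holds by definition), and in fact makes Proposition~\ref{prop:mot_bc} unnecessary for surjectivity, contrary to what your closing paragraph suggests. One small point to tighten: injectivity of $\varphi$ on $\Pi_{E_b}^{*}$ does \emph{not} follow merely from $\varphi$ being injective on the alphabet $\Pi_{E_b}$ (a morphism injective on letters need not be injective on words); what saves you is precisely the unique factorization of block complete words into simple block complete factors --- the first letter of each factor is the only one with $\mathrm{BlockPos}$ equal to $1$ --- which you state only in the surjectivity part but which is also the real reason injectivity holds. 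With that attribution corrected, the argument is complete.
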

\begin{proof}
	Following property \eqref{claim:null}, $\varepsilon \in \mathrm{L}(E^{\sharp})\Longleftrightarrow \varepsilon \in \mathrm{L}(E_b^{\sharp})$.
	Since $\varphi(\varepsilon) = \varepsilon$, then $\varepsilon \in \mathrm{L}(E^{\sharp})\Longleftrightarrow \varphi(\varepsilon) \in \mathrm{L}(E_b^{\sharp})$.

	Let $w_B \in \mathrm{L}(E_b^{\sharp})$ such that $w_B \neq \varepsilon$, according to the definition of function $\chi$, $\varphi(w_B) \in \mathrm{L}(E^{\sharp})$.
	
	Let $w \in \mathrm{L}(E^{\sharp})$ such that $w \neq \varepsilon$.
	Following Proposition \ref{prop:mot_bc}, $w$ is block complete.
	Let set $w = w_1 \cdot w_2 \cdots w_n$ such that for every $i \in [1, n]$, $w_i \in SBC(\Omega)$.
	Then:
	\begin{itemize}
		\item $w_1[1] \in \mathrm{First}(E^{\sharp})$
		\item $w_n[|w_n|] \in \mathrm{Last}(E^{\sharp})$
		\item $\forall i \in [1, n[, w_{i+1}[1] \in \mathrm{Follow}(E^{\sharp}, w_i[|w_i|])$.
	\end{itemize}
	This means that :
	\begin{itemize}
		\item $\varphi^{-1}(w_1) \in \mathrm{First}(E_b^{\sharp})$ (following property \eqref{claim:first})
		\item $\varphi^{-1}(w_n) \in \mathrm{Last}(E_b^{\sharp})$ (following property \eqref{claim:last})
		\item $\forall j \in [1, n[, \varphi^{-1}(w_{i+1}) \in \mathrm{Follow}(E_b^{\sharp}, \varphi^{-1}(w_i))$ (following property \eqref{claim:follow}).
	\end{itemize}
	Therefore, $\varphi^{-1}(w) =  \varphi^{-1}(w_1) \cdot \varphi^{-1}(w_2) \cdots \varphi^{-1}(w_n) \in \mathrm{L}(E_b^{\sharp})$.
	This demonstrates that $\varphi$ is also a bijection between $\mathrm{L}(E_b^{\sharp}) \rightarrow \mathrm{L}(E^{\sharp})$.
\end{proof}

In the same way, we can also prove that $\varphi$ is a bijection between $\mathrm{Subw}(\mathrm{L}(E_b^{\sharp}))$ and $\mathrm{Subw}(\mathrm{L}(E^{\sharp})) \cap BC(\Omega)$.

\begin{proposition}\label{prop:preservationLangage}
	$\mathrm{L}(E_b) = \mathrm{L}(E)$.
\end{proposition}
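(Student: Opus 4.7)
The plan is to reduce the statement to Proposition~\ref{prop:bijectionLangage} by showing that the dropping operation commutes with $\varphi$ when both target words are viewed as elements of $\Sigma^*$. In other words, the two ways of producing a $\Sigma$-word out of $w_B \in \mathrm{L}(E_b^{\sharp})$ — either dropping the block-subscripts of $w_B$ directly in $\Pi_{E_b}^*$, or first translating through $\chi$/$\varphi$ into $\Omega^*$ and then dropping the $(i,j)$-subscripts — give the same element of $\Sigma^*$.

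First, I would unfold what $\mathrm{L}(E_b)$ and $\mathrm{L}(E)$ actually are. By construction of $E_b^{\sharp}$ and of the dropping operator, $\mathrm{L}(E_b)$ is the image of $\mathrm{L}(E_b^{\sharp})$ under the morphism sending each block $[u]_i \in \Pi_{E_b}$ to $u \in \Sigma^*$, since $\Gamma \subset B_{\Sigma,k} \subset \Sigma^*$. Similarly $\mathrm{L}(E) = \mathrm{L}((E^{\sharp})^{\natural})$ is the image of $\mathrm{L}(E^{\sharp})$ under the morphism sending each symbol $a_{i,j} \in \Omega$ to $a \in \Sigma$.

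Next, I would verify the key commutation lemma: for every $w_B = [u_1]_{i_1} \cdots [u_n]_{i_n} \in \Pi_{E_b}^*$, the $\Sigma$-word $(w_B)^{\natural} = u_1 u_2 \cdots u_n$ coincides with the $\Sigma$-word obtained by dropping $\varphi(w_B)$. This is immediate by induction on $n$ from the very definition of $\varphi$: applying $\varphi$ to a single block $[u_\ell]_{i_\ell}$ yields $u_\ell[1]_{i_\ell,1} \cdots u_\ell[|u_\ell|]_{i_\ell,|u_\ell|}$, whose dropped form is exactly $u_\ell[1] \cdots u_\ell[|u_\ell|] = u_\ell$, and $\varphi$ is a morphism by definition. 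So the two droppings agree pointwise.

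Finally, I would combine the two ingredients. By Proposition~\ref{prop:bijectionLangage}, $\varphi$ maps $\mathrm{L}(E_b^{\sharp})$ bijectively onto $\mathrm{L}(E^{\sharp})$. Hence
\begin{align*}
  \mathrm{L}(E_b) &= \{(w_B)^{\natural} \mid w_B \in \mathrm{L}(E_b^{\sharp})\}\\
                  &= \{(\varphi(w_B))^{\natural} \mid w_B \in \mathrm{L}(E_b^{\sharp})\}\\
                  &= \{w^{\natural} \mid w \in \mathrm{L}(E^{\sharp})\} = \mathrm{L}(E),
\end{align*}
which closes the proof. There is no real obstacle here: the work is essentially bookkeeping, and the only subtle point is keeping straight the two distinct droppings — one on $\Pi_{E_b}^*$ that removes block-subscripts (and implicitly flattens blocks into $\Sigma^*$), and one on $\Omega^*$ that removes $(i,j)$-subscripts — and checking that $\varphi$ makes them agree.
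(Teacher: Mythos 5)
Your proof is correct and follows essentially the same route as the paper's: both reduce the claim to Proposition~\ref{prop:bijectionLangage} together with the commutation identity $(w_B)^{\natural} = (\varphi(w_B))^{\natural}$, using that $\Gamma^* \subset \Sigma^*$ so that dropping flattens blocks into $\Sigma$-words. You merely spell out the inductive verification of the commutation step that the paper asserts directly from the definition of $\varphi$.
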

\begin{proof}
	Let $u \in \Gamma^*, w \in \Omega^*, w_B \in \Pi_{E_b}^*$ such that $u = w_B^{\natural}$ and $w = \varphi(w_B)$.
	According to the definition of the bijection $\varphi$, for every $w_B \in \Pi_{E_b}, w_B^{\natural} = (\varphi(w_B))^{\natural}$.
	And since $\Gamma^* \subset \Sigma^*$, then:\\
	$\begin{array}{ccl}
   		u \in \mathrm{L}(E_b) & \iff & w_B \in \mathrm{L}(E_b^{\sharp})\\
  		& \iff & \varphi(w_B) = w \in \mathrm{L}(E^{\sharp})\\
	  	& \iff & w^{\natural} = \varphi(w_B))^{\natural} = w_B^{\natural} = u \in \mathrm{L}(E)\\
	\end{array}$

\end{proof}

\begin{theorem}\label{th:kbd_kla}
	If $E_b$ is $k$-block deterministic, then $E$ is $k$-lookahead deterministic.
\end{theorem}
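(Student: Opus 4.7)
I would prove the contrapositive. Suppose $E$ is not $k$-lookahead deterministic. By the characterization of lookahead determinism recalled just above, there exist $u,v,w \in \Omega^*$ and $x,y \in \Omega^k$ with $uxv, uyw \in \mathrm{L}(E^{\sharp})$, $x(1) \neq y(1)$, and $x^{\natural} = y^{\natural}$. The goal is to extract from this data a violation of the $k$-block determinism criterion for $E_b$.

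Applying Proposition~\ref{prop:fourche} to the forking single letters $x(1)$ and $y(1)$ (absorbing the remaining $k-1$ letters of $x$ and $y$ into the tails), I obtain $u, xv, yw \in \mathrm{Subw}(\mathrm{L}(E^{\sharp})) \cap BC(\Omega)$ and, crucially, $b_x := \mathrm{Pos}_{E_b}(x(1)) \neq \mathrm{Pos}_{E_b}(y(1)) =: b_y$, with $\mathrm{BlockPos}_{E_b}(x(1)) = \mathrm{BlockPos}_{E_b}(y(1)) = 1$. Writing $\ell_x := |b_x^{\natural}|$ and $\ell_y := |b_y^{\natural}|$, both are at most $k$. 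The block-completeness of $xv$ together with properties~\eqref{claim:CB_BP_eq} and~\eqref{claim:CB_BP_neq} forces the first $\ell_x$ letters of $x$ to spell out the block $\varphi(b_x)$, and similarly $y(1)\cdots y(\ell_y) = \varphi(b_y)$.

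Next I compare the dropped projections. Since $x^{\natural} = y^{\natural}$, I may assume without loss of generality that $\ell_x \leq \ell_y$, so
\[
b_x^{\natural} = (\varphi(b_x))^{\natural} = x^{\natural}(1)\cdots x^{\natural}(\ell_x) = y^{\natural}(1)\cdots y^{\natural}(\ell_x),
\]
which is a prefix of $y^{\natural}(1)\cdots y^{\natural}(\ell_y) = b_y^{\natural}$. Hence $b_x^{\natural} \in \mathrm{Pref}(b_y^{\natural})$.

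Finally I lift back to the block setting. Because $xv$ is block complete and begins with the full block $\varphi(b_x)$, its suffix after those first $\ell_x$ letters is itself block complete, hence equal to $\varphi(r_1)$ for some $r_1 \in \Pi_{E_b}^*$; likewise $yw = \varphi(b_y)\cdot\varphi(r_2)$ for some $r_2 \in \Pi_{E_b}^*$. Setting $u_B := \varphi^{-1}(u)$, Proposition~\ref{prop:bijectionLangage} combined with the morphism property of $\varphi$ yields $u_B \cdot b_x \cdot r_1,\ u_B \cdot b_y \cdot r_2 \in \mathrm{L}(E_b^{\sharp})$. With $b_x \neq b_y$ but $b_x^{\natural} \in \mathrm{Pref}(b_y^{\natural})$, this contradicts the characterization of $k$-block determinism of $E_b$, completing the proof. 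The only delicate point, which I would write out carefully, is the decomposition of $xv$ and $yw$ into a leading complete block followed by a block-complete tail: it relies on the fact that $\ell_x, \ell_y \leq k = |x| = |y|$, so each initial block is entirely consumed inside the first $k$ letters and the tail begins cleanly at a block boundary.
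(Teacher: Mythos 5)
Your proof is correct and follows essentially the same route as the paper's: contrapositive, Proposition~\ref{prop:fourche} applied to the forking letters $x(1)$ and $y(1)$, the bound $|b_x^{\natural}|, |b_y^{\natural}| \leq k$ forcing both initial blocks to be prefixes of the common projection $x^{\natural} = y^{\natural}$ and hence of one another, and the lift back through $\varphi^{-1}$ to contradict the block-determinism criterion. The only difference is that you spell out the block decomposition of $xv$ and $yw$ more explicitly than the paper does, which is a welcome clarification rather than a divergence.
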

\begin{proof}
	Let us suppose that $A$ is not $k$-lookahead deterministic.
	Then, there exist $u, v, w, \in \Omega^*$ and $l_1, l_2 \in \Omega^k$ such that $ul_1v, ul_2w \in \mathrm{L}(E^{\sharp})$, $l_1[1] \neq l_2[1]$ et $(l_1)^{\natural} = (l_2)^{\natural}$.
	Since $ul_1v, ul_2w \in \mathrm{L}(E^{\sharp})$ then, following Proposition \ref{prop:fourche}, $u, l_1v, l_2w \in \mathrm{Subw}(\mathrm{L}(E^{\sharp})) \cap BC(\Omega)$ and $\mathrm{Pos}_{E_b}(l_1[1]) \neq \mathrm{Pos}_{E_b}(l_2[1])$.
	Let $u_B, v_B, w_B \in \Pi_{E_b}^*$ and $b_1, b_2 \in \Pi_{E_b}$ such that $u_B = \varphi^{-1}(u), b_1v_B = \varphi^{-1}(l_1v), b_2w_B = \varphi^{-1}(l_2w)$.
	Then $u_Bb_1v_B, u_Bb_2w_B \in \mathrm{L}(E_b^{\sharp})$.
	Since $\mathrm{Pos}_{E_b}(l_1[1]) \neq \mathrm{Pos}_{E_b}(l_2[1])$, then $b_1 \neq b_2$.
	And since $b_1v_B = \varphi^{-1}(l_1v), b_2w_B = \varphi^{-1}(l_2w)$, then $(b_1v_B)^{\natural} = (l_1v)^{\natural}, (b_2w_B)^{\natural} = (l_2w)^{\natural}$.
	But since $E_b$ is $k$-block deterministic, then $|b_1^{\natural}| \leq k$ and $|b_2^{\natural}| \leq k$.
	So $b_1^{\natural} \in \mathrm{Pref}(l_1^{\natural})$ and $b_2^{\natural} \in \mathrm{Pref}(l_2^{\natural})$.
	But since $l_1^{\natural} = l_2^{\natural}$, then either $b_1^{\natural} \in \mathrm{Pref}(b_2^{\natural})$, or $b_2^{\natural} \in \mathrm{Pref}(b_1^{\natural})$.
	Thus $E_b$ cannot be $k$-block deterministic.
	Therefore, if $E_b$ is $k$-block deterministic, then $E$ is indeed $k$-lookahead deterministic.
\end{proof}

To conclude, following Theorem \ref{th:kbd_kla} and Proposition \ref{prop:preservationLangage} :

\begin{theorem}
	The family of $k$-block deterministic languages is included in the family of $k$-lookahead deterministic languages.
\end{theorem}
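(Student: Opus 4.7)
The statement is essentially a direct corollary of the two preceding results in the subsection, so my plan is to spell out the reduction rather than to do any new technical work. First, I would take an arbitrary $k$-block deterministic language $L$ and, by definition, pick a $k$-block deterministic regular expression $E_b$ (over some alphabet $\Gamma \subset B_{\Sigma,k}$) with $\mathrm{L}(E_b) = L$. Then I would apply the construction built up in this subsection: mark $E_b$ to obtain $E_b^{\sharp}$ over $\Pi_{E_b}$, apply $\chi$ to obtain the marked regular expression $E^{\sharp} = \chi(E_b^{\sharp})$ over $\Omega$, and finally drop subscripts to obtain $E = (E^{\sharp})^{\natural}$ over $\Sigma$.

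Next I would invoke Theorem~\ref{th:kbd_kla} on $E_b$: since $E_b$ is $k$-block deterministic, the corresponding $E$ is $k$-lookahead deterministic. In parallel, Proposition~\ref{prop:preservationLangage} gives $\mathrm{L}(E) = \mathrm{L}(E_b) = L$. Combining these two facts, $L$ is specified by a $k$-lookahead deterministic regular expression, hence $L$ is $k$-lookahead deterministic by definition.

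Since $L$ was arbitrary, this shows the inclusion of the family of $k$-block deterministic languages in the family of $k$-lookahead deterministic languages. I do not expect any obstacle here: all the real work has been done in constructing $\chi$, in verifying the structural properties \eqref{claim:null}--\eqref{claim:follow}, and in the proofs of Propositions~\ref{prop:mot_bc}, \ref{prop:fourche}, \ref{prop:bijectionLangage}, \ref{prop:preservationLangage} and Theorem~\ref{th:kbd_kla}. The final theorem is just the one-line synthesis of the expression-level statement (Theorem~\ref{th:kbd_kla}) with the language-preservation statement (Proposition~\ref{prop:preservationLangage}), lifted from expressions to languages by the definitions of $k$-block and $k$-lookahead deterministic languages.
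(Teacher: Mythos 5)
Your proposal is correct and matches the paper's own argument exactly: the paper also derives this theorem as an immediate consequence of Theorem~\ref{th:kbd_kla} and Proposition~\ref{prop:preservationLangage}, which you have simply spelled out at the language level. Nothing further is needed.
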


\subsection{Block Deterministic Languages as a Proper Subfamily of Lookahead Deterministic Languages}

	In this section, we show that the family block deterministic languages is strictly included in the one of lookahead deterministic languages.
	We first show that unary $k$-block deterministic languages are one-unambiguous, and then conclude using the parameterized family provided in Section \ref{se:kld_witness}.

\begin{proposition}
	Let $E$ be a regular expression over an unary alphabet $\{a\}$.
	If $E$ is $k$-block deterministic, then $|\mathrm{First}(E^{\sharp})| \leq 1$ and $\forall x \in \Pi_E, |\mathrm{Follow}(E^{\sharp}, x)| \leq 1$.
\end{proposition}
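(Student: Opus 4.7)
The plan is to derive a direct contradiction from the $k$-block determinism of the Glushkov automaton $G_E$, exploiting the fact that on a unary alphabet the prefix relation is total on blocks. More precisely, every element $y \in \Pi_E$ satisfies $y^{\natural} = a^{j}$ for some $1 \leq j \leq k$, so for any two such markings $y_1, y_2 \in \Pi_E$ one has either $y_1^{\natural} = y_2^{\natural}$, or $y_1^{\natural} \in \mathrm{Pref}(y_2^{\natural})$, or $y_2^{\natural} \in \mathrm{Pref}(y_1^{\natural})$. This is the single observation that drives the whole argument.

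First I would handle $\mathrm{First}(E^{\sharp})$. Assume for contradiction that there exist two distinct elements $y_1, y_2 \in \mathrm{First}(E^{\sharp})$. By Definition~\ref{def:Glushkov}, $G_E$ contains the two transitions $(i, y_1^{\natural}, y_1)$ and $(i, y_2^{\natural}, y_2)$, which are distinct because $y_1 \neq y_2$. By the unary observation above, and up to swapping the roles of $y_1$ and $y_2$, we have $y_1^{\natural} \in \mathrm{Pref}(y_2^{\natural})$. This directly violates the $k$-block determinism condition on $G_E$, so $|\mathrm{First}(E^{\sharp})| \leq 1$.

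Next I would repeat the argument for an arbitrary position $x \in \Pi_E$. Assuming two distinct $y_1, y_2 \in \mathrm{Follow}(E^{\sharp}, x)$, Definition~\ref{def:Glushkov} yields two distinct transitions $(x, y_1^{\natural}, y_1)$ and $(x, y_2^{\natural}, y_2)$ of $G_E$, whose labels are unary words of length at most $k$ and are therefore comparable under the prefix order, again contradicting the $k$-block determinism of $G_E$. Hence $|\mathrm{Follow}(E^{\sharp}, x)| \leq 1$ for every $x \in \Pi_E$.

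There is no real obstacle in the proof: once the unary alphabet is exploited to collapse prefix-freeness to equality of blocks, the statement is just an unfolding of Definition~\ref{def:Glushkov} together with the $k$-block deterministic transition condition. The only minor subtlety to keep in mind is that two distinct positions $y_1 \neq y_2$ in $\Pi_E$ may still satisfy $y_1^{\natural} = y_2^{\natural}$; this case, too, violates $k$-block determinism (equal labels leading to distinct targets), so it causes no trouble and is subsumed by the prefix condition.
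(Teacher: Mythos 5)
Your proof is correct and follows essentially the same route as the paper's: on a unary alphabet any two blocks are comparable under the prefix order, so two distinct elements of $\mathrm{First}(E^{\sharp})$ (or of $\mathrm{Follow}(E^{\sharp},x)$) immediately contradict the $k$-block determinism of $G_E$. Your remark that the case $y_1^{\natural}=y_2^{\natural}$ is subsumed by the prefix condition is a welcome extra precision, but the argument is otherwise identical in substance.
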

\begin{proof}
	Let us suppose that $E$ is $k$-block deterministic.
	If $|\mathrm{First}(E^{\sharp})| > 1$, then there exist $b_1, b_2 \in \mathrm{First}(E^{\sharp})$ such that $b_1 \neq b_2$.
	And since $E$ is defined over an unary alphabet, then necessarily, either $(b_1)^{\natural} \in \mathrm{Pref}((b_2)^{\natural})$, or $(b_2)^{\natural} \in \mathrm{Pref}((b_1)^{\natural})$, which means that $E$ is not $k$-block deterministic.
	Therefore $|\mathrm{First}(E^{\sharp})| \leq 1$, and the same reasoning can be applied to $\mathrm{Follow}(E^{\sharp}, x)$.
\end{proof}

\begin{lemma}
	Let $\Sigma$ be an alphabet and $E_b$ be a $k$-block deterministic regular expression over an alphabet $\Gamma \subset B_{\Sigma, k}$.
	If $\forall x \in \Pi_{E_b}, | \mathrm{Follow}(E_b^{\sharp}, x)| \leq 1$ and  $|\mathrm{First}(E_b^{\sharp})| \leq 1$ , then $\mathrm{L}(E_b)$ is $1$-block deterministic.
\end{lemma}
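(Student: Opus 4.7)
The plan is to leverage the construction from the preceding subsection that turns the block expression $E_b$ into an ordinary expression $E$ over $\Sigma$, and then argue that $E$ is actually one-unambiguous under these hypotheses. Concretely, I set $E^{\sharp} = \chi(E_b^{\sharp})$ (a marked expression over the letter-indexed alphabet $\Omega$) and $E = (E^{\sharp})^{\natural}$. By Proposition~\ref{prop:preservationLangage} we already have $\mathrm{L}(E_b) = \mathrm{L}(E)$, so it suffices to exhibit $E$ as one-unambiguous, i.e.\ to show that its Glushkov automaton $G_E$ is deterministic.

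The key step is to transport the hypotheses from $E_b^{\sharp}$ to $E^{\sharp}$ using properties \eqref{claim:first} and \eqref{claim:follow}. First, \eqref{claim:first} gives a surjection from $\mathrm{First}(E_b^{\sharp})$ onto $\mathrm{First}(E^{\sharp})$, so the hypothesis $|\mathrm{First}(E_b^{\sharp})| \leq 1$ immediately yields $|\mathrm{First}(E^{\sharp})| \leq 1$. For the follow sets, \eqref{claim:follow} splits into two cases: if $a_{i,j}$ is not at the end of its block, then $\mathrm{Follow}(E^{\sharp}, a_{i,j}) = \{b_{i,j+1}\}$ is already a singleton; otherwise $\mathrm{Follow}(E^{\sharp}, a_{i,j})$ is in bijection with $\mathrm{Follow}(E_b^{\sharp}, \mathrm{Pos}_{E_b}(a_{i,j}))$, which by hypothesis has cardinality at most $1$. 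Hence $|\mathrm{Follow}(E^{\sharp}, x)| \leq 1$ for every $x \in \Omega$.

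From this, determinism of $G_E$ is immediate: from the initial state there is at most one outgoing transition (since $|\mathrm{First}(E^{\sharp})| \leq 1$), and from any position $x \in \Omega$ there is at most one outgoing transition (since $|\mathrm{Follow}(E^{\sharp}, x)| \leq 1$). In particular, no two transitions share a source, and so certainly no two transitions share a source and a label. Therefore $G_E$ is deterministic, $E$ is one-unambiguous, and $\mathrm{L}(E_b) = \mathrm{L}(E)$ is one-unambiguous, i.e.\ $1$-block deterministic.

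There is really no hard step here: once the correspondence between the $\mathrm{First}/\mathrm{Follow}$ data of $E_b^{\sharp}$ and that of $E^{\sharp}$ (given by \eqref{claim:first} and \eqref{claim:follow}) is invoked, determinism of $G_E$ drops out by counting. The only point to be careful about is to check the two cases of \eqref{claim:follow} separately, since the intra-block follow is a singleton for a different reason (it is forced by the block structure) than the inter-block follow (inherited from $E_b^{\sharp}$).
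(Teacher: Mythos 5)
Your proof is correct and follows essentially the same route as the paper: reduce to $E = (\chi(E_b^{\sharp}))^{\natural}$ via Proposition~\ref{prop:preservationLangage}, transfer the $\mathrm{First}/\mathrm{Follow}$ cardinality bounds through properties \eqref{claim:first} and \eqref{claim:follow}, and conclude that $G_E$ is deterministic. Your explicit two-case treatment of \eqref{claim:follow} is a slightly more careful writeup of the same step the paper states in one line.
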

\begin{proof}
	Let $E^{\sharp} = \chi(E_b^{\sharp})$ be a marked regular expression and $E = (E^{\sharp})^{\natural}$ a regular expression over $\Sigma$.
	Following Proposition \ref{prop:preservationLangage}, $\mathrm{L}(E_b) = \mathrm{L}(E)$.
	Now, following property \eqref{claim:first}, $|\mathrm{First}(E_b^{\sharp})| \leq 1$ imply that $|\mathrm{First}(E^{\sharp})| \leq 1$.
	And following property \eqref{claim:follow}, $\forall x \in \Pi_{E_b}, |\mathrm{Follow}(E_b^{\sharp}, x)| \leq 1$ imply that $\forall y \in \Pi_E, |\mathrm{Follow}(E^{\sharp}, y)| \leq 1$.
	Therefore, following the construction of Glushkov automata in Defintion \ref{def:Glushkov}, $G_E$ is deterministic and $E$ is one-unambiguous (that is to say $1$-block deterministic).
\end{proof}

	Consequently:

\begin{theorem}
	If an unary language is block deterministic, then it is $1$-block deterministic.
\end{theorem}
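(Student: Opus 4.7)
The plan is to obtain the theorem as a direct composition of the two results that immediately precede it in this subsection. Assume that $L$ is a unary block deterministic language, so $L \subseteq \{a\}^*$ and there exist $k \ge 1$ and a $k$-block deterministic regular expression $E_b$ over an alphabet $\Gamma \subseteq B_{\{a\},k}$ with $\mathrm{L}(E_b) = L$. Every symbol of $\Gamma$ is then a block of the form $a^j$ for some $1 \le j \le k$, so $E_b$ is precisely the kind of object to which the preceding proposition applies.

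First, I would invoke the preceding Proposition on $E_b$: since the underlying alphabet is unary, any two distinct blocks $a^i, a^j$ are prefix-comparable, and the $k$-block determinism of $E_b$ therefore forces $|\mathrm{First}(E_b^{\sharp})| \le 1$ and $|\mathrm{Follow}(E_b^{\sharp},x)| \le 1$ for every $x \in \Pi_{E_b}$.

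Second, I would feed these cardinality bounds into the preceding Lemma, whose hypotheses are now exactly satisfied by $E_b$. The Lemma concludes that $\mathrm{L}(E_b) = L$ is $1$-block deterministic, which is the same as being one-unambiguous. This is the statement of the theorem.

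There is essentially no obstacle here: the theorem is a one-line corollary once the two preceding results are in place. The only point requiring care is the interpretation of the phrase "unary language is block deterministic", namely fixing a single $k$ at the outset (coming from the definition of block determinism) so that the Proposition and Lemma are applied with matching parameter, and making sure that $E_b$ really is an expression "over an unary alphabet" in the sense of the Proposition — which it is, because its underlying alphabet $\Sigma$ is $\{a\}$.
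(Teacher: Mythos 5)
Your proposal is correct and matches the paper exactly: the paper introduces this theorem with ``Consequently:'' and gives no separate proof, precisely because it is the composition of the preceding Proposition (unary blocks are pairwise prefix-comparable, so $k$-block determinism forces the $\mathrm{First}$ and $\mathrm{Follow}$ sets to be singletons at most) with the preceding Lemma (those cardinality bounds imply $1$-block determinism of the language). Your added remark about fixing a single $k$ so the two results are applied with a matching parameter is a sensible point of care but raises no real issue.
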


	Therefore:

\begin{proposition}
	For any $k \in \mathbb{N} \setminus \{0, 1\}$, there exists unary languages which are $k$-lookahead deterministic without being block deterministic.	
\end{proposition}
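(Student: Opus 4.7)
The plan is to reuse the parameterized family $(\mathrm{L}(A_j))_{j \geq 1}$ constructed in Section~\ref{se:kld_witness}, combined with the theorem just proved that every unary block deterministic language is one-unambiguous. For a given $k \in \mathbb{N} \setminus \{0, 1\}$, I would exhibit the language $\mathrm{L}(A_{k-1})$: it is unary by construction, and it is $k$-lookahead deterministic directly by Proposition~\ref{prop:Aj_jp1_ld} applied with $j = k-1$. It remains to show that $\mathrm{L}(A_{k-1})$ is not block deterministic for any block size.

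By the preceding theorem, it suffices to show that $\mathrm{L}(A_{k-1})$ is not one-unambiguous, i.e., not $1$-lookahead deterministic. For this, I would first record the routine monotonicity of lookahead determinism: if an automaton $A$ is $i$-lookahead deterministic, then it is $j$-lookahead deterministic for every $j \geq i$. Indeed, the defining condition asks that for any two distinct transitions $(p, a, q_1), (p, a, q_2)$, either $a \neq b$ or some continuation of length $i-1$ is rejected from $q_1$ or $q_2$; if $\delta(q_s, w) = \emptyset$ for some $w$ of length $i-1$, then $\delta(q_s, w \cdot v) = \emptyset$ for every $v$, so the same witness of emptiness extends to length $j - 1$. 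This monotonicity transfers to languages via Glushkov automata.

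Combining these ingredients, if $\mathrm{L}(A_{k-1})$ were $1$-lookahead deterministic, then by monotonicity it would also be $(k-1)$-lookahead deterministic (since $k-1 \geq 1$), contradicting Proposition~\ref{prop:Aj_non_j_ld}. Consequently $\mathrm{L}(A_{k-1})$ is not one-unambiguous, hence by the preceding theorem not block deterministic, while being unary and $k$-lookahead deterministic as required.

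The argument is really a bookkeeping exercise; the content of the proposition lies entirely in the two witness results already established (Propositions~\ref{prop:Aj_jp1_ld} and~\ref{prop:Aj_non_j_ld}) and in the unary characterization just proved. The only point that could trip up the reader is the monotonicity step, so I would take care to state it explicitly, either as a short standalone lemma or as a parenthetical justification within the proof.
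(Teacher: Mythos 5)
Your proof is correct and follows essentially the same route as the paper's: it instantiates the witness family of Section~\ref{se:kld_witness} and invokes the theorem that unary block deterministic languages are one-unambiguous. The only differences are minor refinements: you shift the index to $\mathrm{L}(A_{k-1})$ so that the witness is literally $k$-lookahead deterministic as stated (the paper exhibits $\mathrm{L}(A_k)$, which is $(k+1)$-lookahead deterministic), and you make explicit the monotonicity of lookahead determinism that the paper uses tacitly when passing from ``not $(k-1)$-lookahead deterministic'' to ``not one-unambiguous''.
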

\begin{proof}
	In Section \ref{se:kld_witness}, we showed that for any $j \in \mathbb{N} \setminus \{0\}$, $\mathrm{L}(A_j)$ is $(j+1)$-lookahead deterministic without being $j$-lookahead deterministic.
	And since they are not $1$-lookahead deterministic (that is to say one-unambiguous), they are not block deterministic for any $k$.
	Thus, for any $k \in \mathbb{N} \setminus \{0\}$, $\mathrm{L}(A_k)$ is $(k+1)$-lookahead deterministic without being block deterministic.
\end{proof}

	Finally:

\begin{theorem}
	$\forall k \in \mathbb{N} \setminus \{0\}$, the family of $k$-block deterministic languages is strictly included in the one of $k$-lookahead deterministic languages.
\end{theorem}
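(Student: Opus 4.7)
The plan is to combine the two main ingredients already developed in this section: the containment result and the unary witness construction. Since most of the work has been done, the argument reduces to a bookkeeping step with correct index alignment.

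First, I would dispose of the inclusion direction by citing the immediately preceding theorem, which asserts that every $k$-block deterministic language is $k$-lookahead deterministic; nothing further is needed on that side. What remains is to exhibit, for each $k$, a language that is $k$-lookahead deterministic but not $k$-block deterministic.

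For the strictness, the natural candidate is the unary family $(\mathrm{L}(A_j))_{j \geq 1}$ from Section~\ref{se:kld_witness}. Given $k \geq 2$, I would set $j = k-1$: by Proposition~\ref{prop:Aj_jp1_ld}, the language $\mathrm{L}(A_{k-1})$ is $((k-1)+1) = k$-lookahead deterministic. On the other hand, Proposition~\ref{prop:Aj_non_j_ld} says $\mathrm{L}(A_{k-1})$ is not $(k-1)$-lookahead deterministic, so, since one-unambiguity (i.e. $1$-lookahead determinism) trivially implies $k'$-lookahead determinism for every $k' \geq 1$, the language cannot be one-unambiguous. The theorem just proved in this subsection, namely that every unary block deterministic language is one-unambiguous, then rules out $\mathrm{L}(A_{k-1})$ being block deterministic at any index, and in particular at index $k$. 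Combined with the inclusion, this gives the strict separation for every $k \geq 2$; the case $k = 1$ is degenerate, as both families coincide with the one-unambiguous languages.

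The main step that could look like an obstacle is not really one at this stage: it is simply making sure the parameters line up (picking $\mathrm{L}(A_{k-1})$, not $\mathrm{L}(A_k)$, so that the lookahead level matches the statement's $k$) and noting that the monotonicity of lookahead determinism with respect to $k$ lets us descend from ``not $(k-1)$-lookahead deterministic'' all the way to ``not one-unambiguous'', which is what the unary theorem needs as input.
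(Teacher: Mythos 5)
Your argument is exactly the paper's: the inclusion comes from the preceding theorem, and strictness comes from the unary witness $\mathrm{L}(A_{k-1})$, which is $k$-lookahead deterministic by Proposition~\ref{prop:Aj_jp1_ld}, not one-unambiguous by Proposition~\ref{prop:Aj_non_j_ld} together with the monotonicity of lookahead determinism in $k$, and hence not block deterministic by the unary theorem. Your index alignment ($j = k-1$) is the same one the paper uses implicitly in its witness proposition. Your remark about $k=1$ is in fact a sharper observation than the paper makes: since the $1$-block deterministic and $1$-lookahead deterministic families both coincide with the one-unambiguous languages, the inclusion is an equality there, so the theorem's quantifier should read $k \in \mathbb{N} \setminus \{0,1\}$, matching the paper's own witness proposition.
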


\section{Conclusion and Perspectives}

In this paper, we demonstrated that despite some erroneous results, there exists an infinite hierarchy in block deterministic languages.
We showed that such an infinite hierarchy also exists in lookahead deterministic languages.
And finally, showing that block-deterministic unary languages are one-unambiguous, we gave our own proof of the family of block deterministic languages being strictly included in the family of lookahead deterministic languages.

From these results, one can wonder whether there exists a $k$-lookahead deterministic language which is $(k+1)$-block deterministic without being $k$-block deterministic.
Another open problem is the decidability of the lookahead determinism of a language.
Finally, the decidability of the block determinism of a language has been studied by Giammarresi \emph{et al.} but proved with Lemma~\ref{le:GMWH} which we invalidated.
Thus, this problem is still open.

\bibliography{biblioRapport}

\begin{thebibliography}{10}

\bibitem{BW98}
Anne Br{\"{u}}ggemann{-}Klein and Derick Wood.
\newblock One-unambiguous regular languages.
\newblock {\em Inf. Comput.}, 140(2):229--253, 1998.

\bibitem{CZ97}
P.~Caron and D.~Ziadi.
\newblock Characterization of {G}lushkov automata.
\newblock {\em Theoret. Comput. Sci.}, 233(1--2):75--90, 2000.

\bibitem{CFM14}
Pascal Caron, Marianne Flouret, and Ludovic Mignot.
\newblock (k, l)-unambiguity and quasi-deterministic structures: An alternative
  for the determinization.
\newblock In {\em LATA}, pages 260--272, 2014.

\bibitem{GMW01}
Dora Giammarresi, Rosa Montalbano, and Derick Wood.
\newblock Block-deterministic regular languages.
\newblock In {\em Theoretical Computer Science, 7th Italian Conference, {ICTCS}
  2001, Torino, Italy, October 4-6, 2001, Proceedings}, pages 184--196, 2001.

\bibitem{Glu61}
V.~M. Glushkov.
\newblock The abstract theory of automata.
\newblock {\em Russian Mathematical Surveys}, 16:1--53, 1961.

\bibitem{HW08}
Yo-Sub Han and Derick Wood.
\newblock Generalizations of 1-deterministic regular languages.
\newblock {\em Inf. Comput.}, 206(9-10):1117--1125, 2008.

\bibitem{Hop71}
J.~E. Hopcroft.
\newblock An $n$ log $n$ algorithm for minimizing the states in a finite
  automaton.
\newblock In Z.~Kohavi, editor, {\em The theory of machines and computations},
  pages 189--196. Academic Press, New York, 1971.

\bibitem{Kle56}
S.~Kleene.
\newblock Representation of events in nerve nets and finite automata.
\newblock {\em Automata Studies}, Ann. Math. Studies 34:3--41, 1956.
\newblock Princeton U. Press.

\bibitem{Moo56}
E.~F. Moore.
\newblock Gedanken experiments on sequential machines.
\newblock In {\em Automata studies}, pages 129--153. Princeton Univ. Press,
  Princeton, N.J., 1956.

\bibitem{RS59}
M.~O. Rabin and D.~Scott.
\newblock Finite automata and their decision problems.
\newblock {\em IBM J. Res.}, 3(2):115--125, 1959.

\end{thebibliography}

\end{document}